\mathchardef\mhyphen="2D
\newcommand{\eps}{\epsilon}
\newcommand{\OPT}{\mathsf{OPT}}
\newcommand{\R}{\mathbb{R}}
\newcommand{\1}{\mathds{1}}
\newcommand{\abs}[1]{\left|#1\right|}
\newcommand{\E}{\mathbb{E}}
\newcommand{\ip}[2]{\langle #1, #2\rangle}
\DeclareMathOperator*{\argmin}{arg\,min}
\newcommand{\cC}{\mathcal{C}}
\newcommand{\cI}{\mathcal{I}}
\newcommand{\cM}{\mathcal{M}}
\newcommand{\cP}{\mathcal{P}}
\newcommand{\cS}{\mathcal{S}}
\newcommand{\cT}{\mathcal{T}}
\newcommand{\cU}{\mathcal{U}}
\newcommand{\poly}{\operatorname{poly}}
\newcommand{\tO}{\tilde{O}}
\newcommand{\ST}{\mathsf{ST}}
\newcommand{\setdiff}{\triangle}
\newcommand{\ceil}[1]{\left\lceil#1\right\rceil}
\newcommand{\eqdist}{\stackrel{d}{=}}
\newcommand{\bdmst}{{\sc BD-MST}\xspace}
\newcommand{\bdst}{{\sc BD-ST}\xspace}
\newcommand{\cst}{{\sc Crossing-ST}\xspace}
\newcommand{\cmst}{{\sc Crossing-MST}\xspace}
\newcommand{\crossrand}{\mathsf{crossing\mhyphen
    rand \mhyphen MWU}}
\newcommand{\fastcrossrand}{\mathsf{fast\mhyphen crossing\mhyphen
    rand \mhyphen MWU}}
\newcommand{\divideswap}{\mathsf{divide\mhyphen and \mhyphen conquer
  \mhyphen swap}}
\newcommand{\mstlength}{\rho}
\newcommand{\approxmstlength}{\bar{\mstlength}}
\newcommand{\constcost}{\gamma}
\newcommand{\Nguyen}{Nguy\~{\^{e}}n\xspace}
\newcommand{\Furer}{F{\"u}rer\xspace}
\newcommand{\reduce}{\mathsf{Reduce}}
\newcommand{\randpack}{\mathsf{rand \mhyphen mwu \mhyphen pack}}
\newcommand{\merge}{\mathsf{merge\mhyphen bases}}
\newcommand{\swap}{\mathsf{swap\mhyphen round}}
\newcommand{\fastmerge}{\mathsf{fast\mhyphen merge}}
\newcommand{\fastswap}{\mathsf{fast\mhyphen swap}}
\newcommand{\initedge}{\mathsf{init \mhyphen edges}}
\newcommand{\shrinkint}{\mathsf{shrink \mhyphen intersection}}
\newcommand{\ins}{\mathsf{insert}}
\newcommand{\makeset}{\mathsf{make \mhyphen set}}
\newcommand{\changerep}{\mathsf{change \mhyphen rep}}
\newcommand{\findset}{\mathsf{find \mhyphen set}}
\newcommand{\union}{\mathsf{union}}
\newcommand{\maketree}{\mathsf{make \mhyphen tree}}
\newcommand{\link}{\mathsf{link}}
\newcommand{\cut}{\mathsf{cut}}
\newcommand{\evert}{\mathsf{evert}}
\newcommand{\join}{\mathsf{join}}
\newcommand{\findroot}{\mathsf{find \mhyphen root}}
\newcommand{\init}{\mathsf{init}}
\newcommand{\origedge}{\mathsf{orig\mhyphen edge}}
\newcommand{\represented}{\mathsf{represented \mhyphen edge}}
\newcommand{\neighbors}{\mathsf{neighbors}}
\newcommand{\leafandneighbor}{\mathsf{leaf \mhyphen and
  \mhyphen neighbor}}
\newcommand{\remove}{\mathsf{remove}}
\newcommand{\contract}{\mathsf{contract}}
\newcommand{\leaving}{\mathsf{first\mhyphen edge}}
\newcommand{\copyy}{\mathsf{copy}}
\definecolor{dgreen}{RGB}{34,139,34}
\newtheorem{observation}{Observation}[section]
\newtheorem{theorem}[observation]{Theorem}
\newtheorem{lemma}[observation]{Lemma}
\newtheorem{remark}[observation]{Remark}
\newtheorem{corollary}[observation]{Corollary}
\newenvironment{manualtheorem}[1]{%
  \manualtheoreminner
}{\endmanualtheoreminner}
\newcommand{\etal}{et al.\ }
\author{
Chandra Chekuri\thanks{Dept.\ of Computer Science, Univ.\ of Illinois, Urbana-Champaign, Urbana,
  IL 61801. {\tt chekuri@illinois.edu}. Supported in part by NSF grant
CCF-1910149.}
\and
Kent Quanrud\thanks{Dept.\ of Computer Science, Purdue University,
  West Lafayette, IN 47909. {\tt krq@purdue.edu}.}
\and
Manuel R.\ Torres\thanks{Dept.\ of Computer Science, Univ.\ of Illinois, Urbana-Champaign, Urbana,
  IL 61801. {\tt manuelt2@illinois.edu}. Supported in part by fellowships from NSF
    and the Sloan Foundation, and NSF grant CCF-1910149.}
}
\title{Fast Approximation Algorithms
for Bounded Degree and Crossing Spanning Tree Problems}
\begin{document}
\hypersetup{pageanchor=false}
\maketitle

\abstract{
We develop fast approximation algorithms for the
  minimum-cost version of the Bounded-Degree MST problem (\bdmst) and
  its generalization the Crossing Spanning Tree problem (\cst). We
  solve the underlying LP to within a $(1+\eps)$ approximation factor
  in near-linear time via the multiplicative weight update (MWU)
  technique. This yields, in particular, a near-linear time algorithm
  that outputs an estimate $B$ such that $B \le B^* \le \ceil{(1+\eps)B}+1$ where
  $B^*$ is the minimum-degree of a spanning tree of a given graph. To
  round the fractional solution, in our main technical contribution,
  we describe a fast near-linear time implementation of swap-rounding
  in the spanning tree polytope of a graph. The fractional solution
  can also be used to sparsify the input graph that can in turn be
  used to speed up existing combinatorial algorithms. Together, these
  ideas lead to significantly faster approximation algorithms than
  known before for the two problems of interest. In addition,
  a fast algorithm for swap rounding in the graphic matroid is a generic
  tool that has other applications, including to TSP and submodular function
  maximization.
}

\thispagestyle{empty}
\newpage
\setcounter{page}{1}

\hypersetup{pageanchor=true}

\section{Introduction}
\label{sec:intro}
Spanning trees in graphs are a fundamental object of study and arise
in a number of settings. Efficient algorithms for finding a
minimum-cost spanning tree (MST) in a graph are classical. In a
variety of applications ranging from network design, TSP,
phylogenetics, and others, one often seeks to find a spanning tree
with additional constraints. An interesting and
well-known problem in this space is the
{\sc Bounded-Degree Spanning Tree} (\bdst) problem in which the goal
is to find a spanning tree in a given graph $G=(V,E)$ that minimizes
the maximum degree in the tree. We refer to the minimum-cost
version of \bdst as \bdmst where one seeks a spanning tree of minimum
cost subject to a given degree bound $B$ on the vertices. The decision
version of \bdst (Given $G,B$ is there a spanning tree with maximum
degree $B$?) is already NP-Complete for $B=2$ since it captures the
Hamilton-Path problem. In an influential paper, \Furer and Raghavachari
\cite{FurerR94}, building on earlier work of Win \cite{Win89},
described a simple local-search type algorithm that runs in $\tO(mn)$
time (here $m$ is number of edges and $n$ number of nodes) that
outputs a spanning tree with degree at most $B+1$, or certifies that
$G$ does not have a spanning tree with degree at most $B$ (we use
$\tO$ notation to suppress poly-logarithmic factors in $n,m,1/\eps$
for notational simplicity). Their
algorithm, in fact, works even in the non-uniform setting where each
vertex $v$ has a specified degree bound $B_v$. The \Furer-Raghavachari
result spurred a substantial line of work that sought to extend their
clean result to the minimum-cost setting. This was finally achieved by
Singh and Lau \cite{ls-15} who described a polynomial-time algorithm
that outputs a tree $T$ such that the degree of each $v$ in $T$ is at
most $B_v+1$ and the cost of the tree is at most $\OPT$. Their
algorithm is based on iterative rounding of a natural LP relaxation.
We refer the reader to \cite{KR05,Chaudhurietal07,Goemans06,ls-15,DuanHZ17}
for several ideas and pointers on \bdst and \bdmst.

Motivated by several applications, Bilo \etal \cite{BiloGRS04} defined
the {\sc Crossing Spanning Tree} problem (\cst).  In \cst the input is
a graph $G=(V,E)$, a collection of cuts $C_1,C_2,\ldots,C_k$, and
integers $B_1,B_2,\ldots,B_k$. Each cut $C_i$ is a subset of the edges
though in many applications we view $C_i$ as $\delta_G(S_i)$ for some
$S_i \subset V$ (where $\delta_G(S_i) = \{uv \in E \mid u \in S_i, v
\in V \setminus S_i\}$ is the standard definition of a cut set with
respect to $S_i$). The goal is to find a spanning tree $T$ such that
$|E(T) \cap C_i| \le B_i$, that is, $T$ crosses each cut $C_i$ at most
$B_i$ times. It is easy to see that \bdst is a special case of \cst
where the cuts correspond to singletons.  We refer to the min-cost
version of \cst as \cmst.  \cst gained substantial prominence in the
context of the asymmetric traveling salesman problem (ATSP) ---
Asadpour \etal \cite{Asadpouretal17} showed the importance of thin
spanning trees for approximating ATSP and obtained an $O(\log n/\log
\log n)$-approximation (now we have constant factor approximations for
ATSP via other methods \cite{SvenssonTV18,TraubV20}).  Motivated by
the thin tree conjecture and its applications to ATSP (see
\cite{Asadpouretal17,AnariG15}) and other technical considerations,
researchers have studied \cst, its generalization to the matroid
setting, and various special cases
\cite{cvz-10,BansalKN10,BansalKKNP13,OlverZ18,LinharesS18}.  The best
known approximation algorithms for \cst and its special cases have
mainly relied on the natural LP relaxation. For general \cst the best
know approximation ratio is $\min\{O(\log k/\log \log k), (1+\eps)B +
O(\log k/\eps^2)\}$.  A variety of sophisticated and interesting
rounding techniques have been designed for \cst and its special
cases. An outstanding open problem is whether \cst admits a constant
factor approximation via the natural LP relaxation. This is
challenging due its implications for the thin tree conjecture.

Most of the focus on \bdmst and \cst has been on the quality of the
approximation. The best known approximaton bounds rely on LP
relaxations and complex rounding procedures. The overall running times
are very large polynomials in the input size and are often
unspecified.  In this paper we are interested in the design of
\emph{fast} approximation algorithms for \bdmst, \cst and related
problems.  In recent years there has been significant progress in
designing fast, and often near-linear time, approximation algorithms
for a number of problems in discrete and combinatorial optimization.
This has been led by, and also motivated, synergy between
continuous/convex optimization, numerical linear algebra, dynamic data
structures, sparsification techniques, and structural results, among
several others. For \bdst with uniform degree, Duan, He and Zhang
\cite{DuanHZ17} described a combinatorial algorithm that for any given
$\eps > 0$, runs in $O(m\log^7 n/\eps^7)$ time, and either outputs a
spanning tree with degree $(1+\eps)B + O(\log n/\eps^2)$ or reports
that there does not exist a tree with maximum degree $\le B$.  This
paper is partly motivated by the goal of improving their results:
dependence on $\epsilon$, a better approximation, handling non-uniform
bounds, cost, \cmst, and connection to the LP relaxation.

A second motivation for this paper is to develop a fast algorithm for
swap-rounding in the spanning tree polytope. It is a
dependent rounding technique that has several applications ranging
from TSP to submodular function maximization (see
\cite{cvz-10,gw-17,cq-18b,en-19}). The question of developing a fast
swap-rounding procedure for spanning trees was explicitly raised in
\cite{cq-18b} in the context of Metric-TSP.

\subsection{Results}
In this paper we develop fast approximation algorithms for \bdmst,
\cmst and related problems in a unified fashion via broadly applicable
methodology based on the LP relaxation.  We consider
the following problem with general packing constraints. The input to
this problem is an undirected graph $G=(V,E)$, a non-negative
edge-cost vector $c: E \rightarrow \mathbb{R}_+$, a non-negative
matrix $A \in [0,1]^{k \times m}$, and a vector $b \in [1,\infty)^k$.
The goal is to find a spanning tree $T$ of minimum cost such that
$A\1_T \le b$ where $\1_T \in \{0,1\}^m$ is the characteristic vector
of the edge set of $T$. This is a special case of a more general
problem considered in \cite{cvz-10}: min-cost matroid base with
packing constraints.  Here we restrict attention to spanning trees
(graphic matroid). We refer to this slightly more general problem also
as \cmst.

Our first result is a near-linear time algorithm to approximately
solve the underlying LP relaxation for \cmst. For a multigraph $G$ we
let $\cT(G)$ denote the set of all spanning trees of $G$ and let
$\ST(G)$ denote the spanning tree polytope of $G$ (which is the convex
hull of the characteristic vectors $\{\1_T \mid T \in \cT(G)\}$).

\begin{theorem}
  \label{thm:intro-lp-solve}
  Let $G=(V,E)$ be a multigraph with $m$ edges and $n$ nodes and
  consider the linear program $\min \{ c^Tx : Ax \le b, x \in
  \ST(G)\}$ where $A \in [0,1]^{k\times m}, b \in [1,\infty)^k, c \in
  [0,\infty)^m$.  Let $N$ be the maximum  of $m$ and number of non-zeroes in $A$.
  There is a randomized polynomial time algorithm that for any given
  $\eps \in (0,1/2]$ runs in $\tO(N/\eps^2)$ time and with high
  probability either correctly certifies that the LP is infeasible or
  outputs a solution $y \in \ST(G)$ such that $c^Ty \le (1+\eps) \OPT$
  and $Ay \le (1+\eps)b$ where $\OPT$ is the minimum value of a
  feasible solution.
\end{theorem}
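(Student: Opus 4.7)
The plan is to apply the multiplicative weights update (MWU) framework for mixed packing LPs, treating $\ST(G)$ as a ``hard'' constraint enforced by a minimum spanning tree oracle, and the $k$ packing constraints $Ax \le b$ together with a cost budget as ``soft'' constraints that are penalized multiplicatively. First I reduce optimization to feasibility by a geometric binary search on $\OPT$: for a guessed budget $B$ with $B \le \OPT \le (1+\eps)B$, append the cost constraint $c^T x \le B$ as an additional packing row. This sweeps only $\tO(1)$ values of $B$ and adds a logarithmic overhead to the running time.

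For each feasibility test I maintain nonnegative weights $w_0, w_1, \ldots, w_k$ over the $k+1$ constraints. Given the current weights, the MWU oracle must produce $y \in \ST(G)$ minimizing $\langle w, Ay - b\rangle + w_0(c^T y - B)$, which reduces to a minimum spanning tree computation under the composite edge weights $\ell_e = w_0 c_e + \sum_i w_i A_{ie}$; the minimum oracle value certifies infeasibility whenever it exceeds $\sum_i w_i b_i + w_0 B$. Each iteration multiplicatively updates $w_i$ in proportion to the relative violation $(A_i y_t - b_i)/b_i$ of constraint $i$ by the current MST $y_t$, and the output is the uniform average $\bar y = \frac{1}{T}\sum_t \1_{T_t} \in \ST(G)$, which by the standard MWU analysis is approximately feasible.

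The main obstacle is keeping the iteration count at $\tO(1/\eps^2)$. The naive MWU bound is $O(\rho \log(k+1)/\eps^2)$, where the width $\rho = \max_{y \in \ST(G), i} (A_i y)/b_i$ can be $\Omega(n)$ (a single tree may send $n-1$ edges through a constraint with $b_i = 1$); this would give $\tO(n/\eps^2)$ iterations and miss near-linear time. I would address this via randomized/lazy MWU techniques in the style of Young and of Koufogiannakis--Young, adapted to polytope-constrained LPs: rather than doing a full update based on the whole tree's violation each iteration, sample an edge of $y_t$ (or apply a fractional update) so that each step's contribution to any single constraint has $O(1)$ expectation, and invoke Chernoff/Azuma concentration to obtain $\tO(\log k/\eps^2)$ iterations with high probability.

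For the $\tO(N/\eps^2)$ total time, each iteration performs one MST computation in $\tO(m)$ time, and the $\ell_e$ maintenance is supported by lazy data structures that revisit a nonzero $A_{ie}$ only when its associated weight $w_i$ has changed by a multiplicative factor of $(1+\Theta(\eps))$, amortizing to $\tO(\text{nnz}(A)/\eps^2)$ across all iterations. Combining both pieces yields the claimed bound with $N = \max(m, \text{nnz}(A))$. The delicate technical step is designing the lazy composite-edge-weight data structure so that the MST oracle always sees sufficiently accurate $\ell_e$'s without recomputing them from scratch, and proving that the resulting approximate-oracle MWU still produces a solution that lies in $\ST(G)$ and satisfies the $(1+\eps)$ approximation guarantee simultaneously for cost and for all packing constraints.
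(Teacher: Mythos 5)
Your high-level setup (reduce cost to feasibility by binary search on a budget constraint, collapse the packing rows into a single Lagrangian via MWU weights, and use an MST oracle for the composite edge lengths $\sum_i w_i A_{i,e}$) matches the paper. The gap is in how you get near-linear time. You correctly observe that the width is $\Omega(n)$, but your fix --- sampling edges of the oracle tree so that each step's contribution to any single constraint has $O(1)$ expectation, then claiming $\tO(\log k/\eps^2)$ iterations --- is not substantiated and is in tension with the rest of your plan: if each step pushes only an $O(1/n)$ fraction of a tree's mass, you need $\Omega(n)$ steps to accumulate a unit of primal mass, and it is unclear how the averaged output remains a point of $\ST(G)$ satisfying all $k+1$ constraints simultaneously to within $(1+\eps)$. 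Moreover, even granting some iteration bound $I$, your per-iteration cost is a fresh $\tO(m)$ MST computation, so you need $I = \tO(1/\eps^2)$ independent of $k$ and $n$ for the claimed bound; no MWU variant you cite achieves that for this polytope-constrained problem.

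The paper resolves this differently: it uses the randomized MWU framework of Chekuri--Quanrud with Garg--K\"onemann-style non-uniform step sizes, accepting $O(k\log k/\eps^2)$ iterations but ensuring the \emph{total} work is $\tO(N/\eps^2)$ by amortization. Each weight $w_i$ changes only $O(\log k/\eps^2)$ times (updates are correlated through a single random threshold $\theta$), so edge lengths change $O(N\log k/\eps^2)$ times overall; the MST is maintained \emph{dynamically} (Holm et al.) rather than recomputed, and is kept only $(1+\eps)$-approximate by updating an edge's stored length only when its true length has grown by a $(1+\eps)$ factor --- this last point is essential to bound how often the tree changes and hence how often the per-constraint loads $(A\1_T)_i/b_i$ (stored in a balanced search tree to compute step sizes and to locate which weights to bump) must be updated. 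Your proposal contains the lazy-length idea, but it is missing the dynamic MST, the maintenance of per-constraint tree loads, and any valid bound on the iteration count; as written the running time does not follow.
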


\begin{remark}
  We describe a randomized algorithm for the sake of simplicity,
  however we believe that a deterministic algorithm with similar
  guarantees can be obtained via ideas in \cite{cq-17}.
\end{remark}

Solving the LP relaxation quickly enables to estimate the optimum
integer solution \emph{value} via existing rounding results
\cite{ls-15,cvz-10,BansalKN10,BansalKKNP13,OlverZ18,LinharesS18}. For
instance, when specialized to \bdst, we obtain a near-linear time
algorithm to estimate the optimum value arbitrarily closely (modulo the
addditive $1$).

\begin{corollary}
  \label{cor:intro-bdmst}
  There is a randomized $\tO(m/\eps^2)$-time algorithm that outputs a
  value $B$ such that $B \le B^* \le \ceil{(1+\eps)B} + 1$ where $B^*$ is the
  minimum maximum degree over all spanning trees (that is, $B^* =
  \min_{T \in \cT(G)} \max_{v \in V} \deg_T(v)$ where $\deg_T(v)$ is
  the degree of $v$ in $T$).
\end{corollary}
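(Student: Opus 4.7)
The plan is to combine Theorem~\ref{thm:intro-lp-solve} with the Singh--Lau rounding result \cite{ls-15}, which guarantees that any fractional $x \in \ST(G)$ satisfying $x(\delta(v)) \le B_v$ for all $v$ can be rounded to an integral spanning tree $T$ with $\deg_T(v) \le \lceil B_v \rceil + 1$. For estimating the integer value $B^*$, I only need the existence of such a tree (the additive integrality gap), not the rounding algorithm itself.

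For a candidate integer bound $B$, I would apply Theorem~\ref{thm:intro-lp-solve} to the feasibility LP $\{\,x \in \ST(G) : x(\delta(v)) \le B \text{ for all } v \in V\,\}$, taking trivial cost $c \equiv 0$, right-hand side $b = B\cdot\1$, and $A \in \{0,1\}^{n \times m}$ the vertex-edge incidence matrix. Since $A$ has $2m$ nonzeros we have $N = O(m)$, so the theorem runs in $\tO(m/\eps^2)$ time and, with high probability, either correctly certifies infeasibility or returns $y \in \ST(G)$ with $y(\delta(v)) \le (1+\eps)B$ for every $v$.

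I would then binary-search over $B \in \{1, \dots, n-1\}$ for the smallest value $\hat{B}$ at which this subroutine does not certify infeasibility, and output $B := \hat{B}$. Monotonicity of the answers (w.h.p.) validates the binary search: at $B = \hat{B}-1$ the LP is genuinely infeasible, so no integral spanning tree has maximum degree $\le \hat{B}-1$ and hence $B^* \ge \hat{B}$; at $B = \hat{B}$ the returned fractional $y$ has vertex-degrees bounded by $(1+\eps)\hat{B}$, and Singh--Lau then produces an integral tree witnessing $B^* \le \lceil (1+\eps)\hat{B} \rceil + 1$.

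The total running time is $O(\log n)$ invocations of Theorem~\ref{thm:intro-lp-solve}, each costing $\tO(m/\eps^2)$, for $\tO(m/\eps^2)$ overall, and a union bound across the calls preserves the high-probability guarantee. The only point needing care is the two-sided $(1+\eps)$-slack in Theorem~\ref{thm:intro-lp-solve}: a positive answer at bound $B$ certifies only that the LP is $(1+\eps)$-approximately feasible, so converting this into an integer-degree bound on $B^*$ forces the ceiling and the additive $+1$ in the stated inequality. Beyond this bookkeeping, I do not anticipate any serious obstacle given Theorem~\ref{thm:intro-lp-solve} and the Singh--Lau integrality gap.
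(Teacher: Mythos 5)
Your proposal is correct and follows essentially the same route as the paper: binary search over integer degree bounds using Theorem~\ref{thm:intro-lp-solve} as a feasibility oracle (with the incidence matrix, so $N = O(m)$), infeasibility at $B-1$ giving $B \le B^*$, and the Singh--Lau integrality gap applied to the $(1+\eps)$-feasible fractional solution giving $B^* \le \ceil{(1+\eps)B}+1$. The paper's proof handles the monotonicity, union bound, and $O(\log n)$ overhead in exactly the same way.
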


Our second result shows the utility of the LP solution to sparsify the
original graph $G$.

\begin{theorem}
  \label{thm:intro-sparsification}
  Let $x \in \ST(G)$ be such that $Ax \le b$ for a
  matrix $A \in [0,1]^{k \times m}$ and $b \in [1,\infty)^k$. Consider
  a random subgraph $G'=(V,E')$ of $G$ obtained by picking each edge
  $e \in G$ with probability
  $\alpha_e := \min\{1,\frac{36\log (k+m)}{\eps^2}\cdot x_e\}$.  Then
  with high probability the following hold: (i)
  $\abs{E'} = O(n \ln (k+m)/\eps^2)$ (ii) there exists a fractional
  solution $z \in \ST(G)$ in the support of $G'$ such that
  $Az \le (1+3\eps)b$.
\end{theorem}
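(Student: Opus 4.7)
The plan is to argue (i) by a direct Chernoff calculation on $|E'|$, and (ii) by constructing $z$ from an unbiased estimator of $x$ that is supported on $E'$, with the packing constraints controlled by Chernoff and membership in the spanning tree polytope controlled by a cut-counting / matroid concentration argument.

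For (i), since $x \in \ST(G)$ we have $\sum_e x_e = n-1$, so
\[
\E\abs{E'} \;=\; \sum_e \alpha_e \;\le\; \frac{36(n-1)\log(k+m)}{\eps^2}.
\]
A multiplicative Chernoff bound on the independent indicators $\1[e \in E']$ gives $\abs{E'} = O(n\log(k+m)/\eps^2)$ with high probability.

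For (ii), define the random vector $y \in \R^E_{\ge 0}$ by $y_e = (x_e/\alpha_e)\,\1[e \in E']$ when $\alpha_e < 1$, and $y_e = x_e$ when $\alpha_e = 1$ (so $e \in E'$ deterministically); $y_e = 0$ otherwise. Then $\E[y_e] = x_e$, the support of $y$ lies in $E'$, and any nonzero random coordinate is bounded by $\eps^2/(36\log(k+m))$. For each row $i$, write $(Ay)_i = \mu_1 + X_i$, where $\mu_1 = \sum_{e : \alpha_e = 1} A_{ie} x_e$ is deterministic and $X_i = \sum_{e : \alpha_e < 1} A_{ie} y_e$ is a sum of independent $[0,\eps^2/(36\log(k+m))]$-bounded terms with mean $\mu_2 \le b_i - \mu_1$. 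Using $b_i \ge 1$, a Bernstein/Chernoff bound gives $\Pr[X_i > \mu_2 + \eps b_i] \le \exp(-\Omega(b_i \log(k+m))) \le (k+m)^{-\Omega(1)}$, and a union bound over the $k$ rows yields $Ay \le (1+\eps)b$ with high probability.

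The main obstacle is then to exhibit $z \in \ST(G)$ with support in $E'$ and $z \le (1+\eps) y$ coordinate-wise, so that $Az \le (1+\eps)Ay \le (1+\eps)^2 b \le (1+3\eps)b$ for $\eps \le 1/2$. This amounts to showing $y$, suitably rescaled, lies in the graphic matroid base polytope on $E'$, i.e.\ satisfies $y(E(S)) \le (1+\eps)(\abs{S}-1)$ for every $S \subseteq V$ together with $y(E) \ge (1-\eps)(n-1)$. A naive Chernoff union bound over the $2^n$ subsets $S$ fails, and this is the chief technical hurdle. My proposed remedy is to exploit the cut structure of $x$: every nontrivial cut satisfies $x(\delta(S)) \ge 1$ because $x \in \ST(G)$, so $\E\abs{E' \cap \delta(S)} = \Omega(\log(k+m)/\eps^2)$, and Karger's polynomial bound on the number of approximate minimum cuts lets one restrict attention to a $\poly(n)$-sized family of critical cuts. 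Chernoff on this family, combined with a matroid-augmentation / LP-feasibility argument, should establish the existence of the required $z$ and thus complete the proof.
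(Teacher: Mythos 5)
Your handling of part (i) and of the packing constraints $Ay \le (1+\eps)b$ is sound and follows the same lines as the paper. The gap is exactly where you locate it: establishing that the sampled support carries a point of $\ST(G)$. Your proposed remedy does not work as stated. First, membership of (a rescaling of) $y$ in the dominant of $\ST(G)$ is governed by the partition inequalities $\sum_{e \in \delta(\mathcal{P})} y_e \ge \abs{\mathcal{P}}-1$ over all partitions $\mathcal{P}$ of $V$ (equivalently, by the forest constraints $y(E(S)) \le \abs{S}-1$ for all $S$ together with $y(E)\ge n-1$), not merely by two-way cut constraints; Karger's polynomial bound on $\alpha$-approximate minimum cuts does not by itself control this exponentially larger family. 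Second, the natural way to convert cut lower bounds into a spanning tree packing is Nash--Williams/Tutte, which turns min-cut $c$ into only $c/2$ disjoint spanning trees; following that route you would only obtain $z \le \frac{2}{1-\eps}\, y$, i.e.\ a factor-$2$ violation of the constraints rather than $(1+3\eps)$. The phrase ``matroid-augmentation / LP-feasibility argument, should establish'' is precisely the missing theorem, not a proof of it.

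The paper closes this gap by invoking Karger's sampling theorem for packing disjoint bases in a matroid (Corollary 4.12 of \cite{k-98}): since $x$ is rational, pick $L$ so that $Lx$ is integral, use the integer decomposition property of the matroid base polytope to split the multigraph $\hat{G}$ (with $Lx_e$ copies of each $e$) into exactly $L$ disjoint spanning trees, sample each copy independently with probability $p = \beta/L$ where $\beta = 18\log(m+k)/\eps^2$, and conclude that the sampled multigraph still packs $(1-\eps)\beta$ disjoint spanning trees with high probability; averaging these trees gives the desired $z$, the per-row Chernoff bound is applied to the copy counts $\ell(e)$, and a stochastic-dominance coupling ($\rho_e = 1-(1-p)^{Lx_e} \le 2\beta x_e \le \alpha_e$) transfers the conclusion to the one-shot Bernoulli sampling with probabilities $\alpha_e$. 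If you want to keep your Horvitz--Thompson estimator, you still need this (or an equivalent) basis-packing concentration result; deriving it from cut counting alone, without losing the Nash--Williams factor of $2$, is the hard part and is not supplied by your sketch.
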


One can run a combinatorial algorithm such as the \Furer-Raghavchari
algorithm \cite{FurerR94} on the sparse graph rather than on the
original graph $G$.  This yields the following corollary which
improves the $\tilde{O}(mn)$
running time substantially when $G$ is dense.

\begin{corollary}
  \label{cor:intro-sparse-bdmst}
  There is a randomized algorithm for \bdst that given a graph $G$ on
  $n$ nodes runs in $\tO(n^2/\eps^2)$ time, and with high probability
  outputs a spanning tree $T$ with maximum degree
  $\ceil{(1+\eps)B^*} + 2$ where $B^*$ is the optimal degree bound.
\end{corollary}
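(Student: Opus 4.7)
The plan is to chain together three ingredients already established in the paper: the near-linear LP solver of Theorem~\ref{thm:intro-lp-solve} (specialized in Corollary~\ref{cor:intro-bdmst}), the sparsification of Theorem~\ref{thm:intro-sparsification}, and the classical \Furer-Raghavachari local search, the last of which is run only on the sparsified graph so that its $\tO(mn)$ cost becomes affordable.

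First I would invoke Corollary~\ref{cor:intro-bdmst} in $\tO(m/\eps^2)$ time to obtain a value $B$ with $B \le B^* \le \lceil (1+\eps) B \rceil + 1$, together with the fractional witness $x \in \ST(G)$ produced by Theorem~\ref{thm:intro-lp-solve} satisfying $x(\delta_G(v)) \le (1+\eps) B \le (1+\eps) B^*$ for every $v \in V$. Next I would apply Theorem~\ref{thm:intro-sparsification} to $x$, taking $A$ to be the $n \times m$ degree-incidence matrix (so $k = n$ and $\log(k+m) = O(\log n)$). This produces, in $O(m)$ time and with high probability, a random subgraph $G' = (V, E')$ with $|E'| = O(n \log n / \eps^2)$ and a fractional spanning tree $z \in \ST(G)$ supported on $E'$ with $z(\delta_{G'}(v)) \le (1+3\eps)(1+\eps) B = (1+O(\eps)) B$ for every $v$.

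In particular, the \bdst LP on $G'$ is feasible with degree bound $(1+O(\eps)) B$, so the Singh-Lau integrality theorem applied on $G'$ yields $B^*(G') \le \lceil(1+O(\eps)) B\rceil + 1 \le \lceil(1+O(\eps)) B^*\rceil + 1$. I would then run \Furer-Raghavachari on $G'$ with this value as the target degree: it must succeed and return a spanning tree $T$ of $G'$ (and hence of $G$) with maximum degree at most $\lceil(1+O(\eps)) B^*\rceil + 2$. Rescaling $\eps$ by an absolute constant converts $(1+O(\eps))$ into $(1+\eps)$ and matches the claimed bound.

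For the running time, the LP solve costs $\tO(m/\eps^2)$, the sparsification is $O(m)$, and \Furer-Raghavachari on $G'$ costs $\tO(|E'| \cdot n) = \tO(n^2/\eps^2)$; since $m \le \binom{n}{2}$ the total is $\tO(n^2/\eps^2)$. A union bound over the two high-probability events (Corollary~\ref{cor:intro-bdmst} and Theorem~\ref{thm:intro-sparsification}) gives the overall w.h.p.\ guarantee. The main conceptual obstacle is bridging the purely fractional degree feasibility that Theorem~\ref{thm:intro-sparsification} delivers on $G'$ to an \emph{integral} degree target that \Furer-Raghavachari can hit; this is precisely where the Singh-Lau additive integrality gap of~$1$ is essential. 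A direct randomized rounding of $z$ — for example by swap-rounding — would incur an additive slack of $O(\log n/\eps^2)$ and would not yield the stated $+2$ additive term.
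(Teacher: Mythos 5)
Your proposal is correct and follows essentially the same route as the paper's own proof: solve the LP (via binary search) to get a fractional tree with degrees at most $(1+\eps)B^*$, sparsify with Theorem~\ref{thm:intro-sparsification}, invoke the Singh--Lau integrality result to certify that the sparse graph has a tree of degree $\ceil{(1+O(\eps))B^*}+1$, and then run \Furer--Raghavachari on the sparse graph to lose only one more additive unit, all within $\tO(n^2/\eps^2)$ time. Your closing remark correctly identifies the Singh--Lau additive-$1$ gap as the crucial bridge from fractional to integral feasibility, which is exactly the role it plays in the paper.
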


\begin{remark}
  \label{remark:intro-non-uniform}
  Corollaries~\ref{cor:intro-bdmst} and \ref{cor:intro-sparse-bdmst}
  can be generalized to the non-uniform degree version of \bdst.
  Input is $G$ and degree bounds $B_v, v \in V$, and
  the algorithm either decides that there is no spanning  tree
  satisfying the degree bounds or outputs a tree that approximately
  satisfies them.
\end{remark}

Our final result is a fast algorithm to round the LP solution.
Several different rounding strategies have been developed for \bdmst
and \cmst and they yield different guarantees and take advantage of
the special structure of the given instance. Iterated rounding has
been one of the primary and powerful techniques, however it requires
basic feasible solutions to the LP relaxation; it seems far from
obvious how to obtain fast algorithms with comparable guarantees and
is a challenging open problem.  We are here interested in
\emph{oblivious randomized rounding} strategies that take a point $x
\in \ST(G)$ and round it to a random spanning tree $T \in \cT(G)$ such
that the coordinates of the resulting random edge vector are
\emph{negatively correlated}\footnote{A collection of $\{0,1\}$ random
  variables $X_1,X_2, \ldots, X_r$ are negatively correlated if, for all
  subsets $S \subseteq [r]$, $\E[\prod_{i \in S} X_i] \le \prod_{i \in
    S} \E[X_i]$ and $\E[\prod_{i \in S}(1-X_i)] \le \prod_{i \in S}
  (1-\E[X_i])$.}. Negative correlation implies concentration for linear
constraints as shown by Panconesi and Srinivasan \cite{PS97}.  These
strategies, when combined with the LP solution, yield bicriteria
approximation algorithms for \cmst of the form $(1+\eps, \min\{O(\log
k/\log \log k) b_i, (1+\eps)b_i + O(\log k)/\eps^2\})$ where the first
part is the approximation with respect to the cost and the second part
with respect to the packing constraints. For \cst and \cmst these are
currently the best known approximation ratios (although special cases
such as \bdmst admit much better bounds). Several dependent randomized
rounding techniques achieving negative correlation in the spanning
tree polytope are known: maximum entropy rounding
\cite{Asadpouretal17}, pipage rounding and \emph{swap rounding}
\cite{cvz-10}. These rounding techniques generally apply to matroids
and have several other applications.  In this paper we show that given
$x \in \ST(G)$, one can swap-round $x$ to a spanning tree in
near-linear time provided it is given in an implicit fashion;
alternately one can obtain an implicit \emph{approximate} representation $x'$
of $x$ and then apply an efficient swap-rounding on $x'$.  Since
swap-rounding is a flexible procedure and does not generate a unique
distribution, a precise technical statement requires more formal
notation and we refer the reader to Section~\ref{sec:swap}. Here we
state a theorem in a general form so that it can be used in other
contexts.

\begin{theorem}
  \label{thm:intro-swap-round}
  Let $G=(V,E)$ be a multigraph with $m$ edges and let
  $x \in [0,1]^m$. For any $\eps \in (0,1/2)$ there is a randomized
  algorithm that runs in $\tO(m/\eps^2)$ time and either correctly decides
  that $x \not \in \ST(G)$ or outputs a random vector
  $T = (X_1,X_2,\ldots,X_m) \in \{0,1\}^m$ such that (i) $T$ is the
  characteristic vector of a spanning tree of $G$ (ii)
  $\E[X_i] \le (1+\eps)x_i$ for $1 \le i \le m$ and (iii)
  $X_1,X_2,\ldots,X_m$ are negatively correlated. In particular $T$ is
  obtained as a swap-rounding of a vector $y$ such that $y \le (1+\eps)x$.
\end{theorem}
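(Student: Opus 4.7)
The plan is to reduce the theorem to two components: first, build an implicit representation of a vector $y \in \ST(G)$ satisfying $y \le (1+\eps)x$ as a short convex combination of spanning trees (or certify that no such $y$ exists); second, perform the standard swap-rounding of Chekuri, Vondr\'ak, and Zenklusen~\cite{cvz-10} on this implicit representation quickly.

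For the first component, I would invoke a variant of the MWU framework underlying Theorem~\ref{thm:intro-lp-solve}, specialized to the pure feasibility question of finding $y \in \ST(G)$ with $y_e \le (1+\eps) x_e$ for every edge $e$. Each MWU iteration reduces to a minimum spanning tree computation on a graph with dynamically reweighted edges, taking $\tO(m)$ time, and $\tO(1/\eps^2)$ iterations suffice. The algorithm either returns a certificate of infeasibility, in which case $x \notin \ST(G)$, or outputs $y$ as an implicit convex combination $y = \sum_{j=1}^{K} \lambda_j \1_{T_j}$ with $K = \tO(1/\eps^2)$ weighted spanning trees $T_1,\ldots,T_K$ satisfying $y_e \le (1+\eps) x_e$. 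The total running time of this step is $\tO(m/\eps^2)$, matching the target.

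For the second component, I would execute swap-rounding on the implicit decomposition via a divide-and-conquer merging scheme: pair up the $K$ weighted trees and merge each pair into a single weighted tree, then recurse on the resulting $K/2$ weighted trees, and so on until a single tree remains. This requires $K-1$ pairwise merges in total. The crux is a fast merge subroutine that, given two weighted spanning trees $(T_1,\lambda_1)$, $(T_2,\lambda_2)$, executes the swap-rounding merge in $\tO(n)$ time. Recall that this merge iterates through edges $e \in T_1 \setminus T_2$, identifies an exchange edge $f \in T_2 \setminus T_1$ lying on the fundamental cycle of $T_2 + e$, and performs a random swap whose bias depends on $\lambda_1,\lambda_2$. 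I would support each individual exchange in $O(\polylog n)$ time by maintaining both $T_1$ and $T_2$ inside link-cut trees, using them both to locate the fundamental cycle and to apply the exchange incrementally, without ever traversing a cycle explicitly. Aggregated over all $K-1$ merges, the total cost is $\tO(nK) = \tO(n/\eps^2) \subseteq \tO(m/\eps^2)$.

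Properties (i)--(iii) and the marginal bound $\E[X_i] \le (1+\eps)x_i$ follow directly from the correctness of abstract swap-rounding~\cite{cvz-10}: its output is always a spanning tree of $G$, marginals are preserved exactly so that $\E[X_i] = y_i \le (1+\eps)x_i$, and the resulting coordinates are negatively correlated. Since my implementation changes only how each merge is executed and not its probabilistic semantics, these guarantees carry over without modification. The main obstacle is engineering the fast merge: swap-rounding is a sequential, stateful process in which successive exchanges alter both trees, and coordinating the link-cut tree operations on $T_1$ and $T_2$ with the bookkeeping of the edges that remain in the symmetric difference requires a careful data-structural design, which is the content of Section~\ref{sec:swap}.
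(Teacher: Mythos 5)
Your overall architecture (approximate implicit decomposition via MWU, then a divide-and-conquer swap-rounding with a fast pairwise merge) matches the paper's, but there is a genuine quantitative gap that breaks the running-time analysis. The feasibility MWU for $y \in \ST(G)$, $y_e \le (1+\eps)x_e$ has one packing constraint per edge, so the number of iterations — and hence the number of trees in the convex combination — is $h = \tO(m/\eps^2)$, not $K = \tO(1/\eps^2)$. (Indeed $K=\tO(1/\eps^2)$ is impossible in general: with $x$ uniform equal to $(n-1)/m$ on all edges, any convex combination whose marginals dominate a constant fraction of $x$ must use at least $\Omega(m/n)$ trees just to cover the support.) With the correct count, your tournament of pairwise merges at $\tO(n)$ each costs $\tO(nh) = \tO(nm/\eps^2)$, which is far from near-linear.

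What you are missing is how the paper exploits the \emph{implicit} form of the decomposition: the $h$ trees are given as a single tree $T_1$ plus difference sets $(E_i,E_i')$ with total size $\gamma = \sum_i(|E_i|+|E_i'|) = \tO(m/\eps^2)$. Two ideas then make the divide-and-conquer near-linear. First, at every recursive level the algorithm contracts the common intersection $\bigcap_i T_i$ of the trees in the current block (computable from the diffs via Lemma~\ref{lem:set-2}); since this intersection survives every $\merge$, contraction is distributionally harmless, and it shrinks each tree to size at most $\min\{n,\gamma\}$ for that block, so $\fastmerge$ runs in time proportional to the \emph{local} diff size rather than $n$. Second, the split point $\ell$ is chosen to halve $\gamma$, not the number of trees, yielding the recurrence $R(n_T,\gamma) \le 2R(\min\{n_T,\gamma\},\gamma/2) + \tO(n_T+\gamma)$ and total time $\tO(n+\gamma) = \tO(m/\eps^2)$. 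Your correctness claims (spanning tree output, marginal preservation, negative correlation, and the validity of an arbitrary merge order) are fine, as is the $\tO(n)$ link-cut-tree merge (this is the Ene--Nguy\~{\^e}n subroutine the paper uses as a black box), but without contraction and the $\gamma$-balanced recursion the algorithm does not meet the stated time bound.
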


Combining Theorems~\ref{thm:intro-lp-solve} and
\ref{thm:intro-swap-round} and existing results on swap rounding
\cite{cvz-10} we obtain the following. The approximation ratio matches
the best known for \cmst and the algorithm runs in near-linear time.

\begin{corollary}\label{cor:intro-solve-round}
  For the feasibility version of \cmst, there is a randomized algorithm
  that runs in near-linear time and outputs a spanning tree $T$ such
  that
  \begin{align*}
    A\1_T \le \min\{O(\log k/\log \log k) b_i, (1+\eps)b_i + O(\log
    k)/\eps^2\}
  \end{align*}
  with high probability. For the cost version of
  \cmst, there is a randomized algorithm that outputs a
  \begin{align*}
    (1+\eps, \min\{O(\log k/\log \log k) b_i, (1+\eps)b_i + O(\log
    k)/\eps^2\})
  \end{align*}
  bicriteria approximation with probability $\Omega(\eps)$.
  After $\tO(1/\eps)$ independent repetitions of this algorithm, we
  can obtain the same guarantees with high probability.
\end{corollary}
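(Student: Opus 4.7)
The plan is to chain Theorem~\ref{thm:intro-lp-solve} with Theorem~\ref{thm:intro-swap-round} and then appeal to the existing rounding analyses of \cite{cvz-10,BansalKN10,BansalKKNP13}, all of which only require negative correlation of the rounded tree's characteristic vector. First, I would invoke Theorem~\ref{thm:intro-lp-solve} on the \cmst LP with a rescaled accuracy parameter $\eps' = \Theta(\eps)$, obtaining in $\tO(N/\eps^2)$ time either a certificate of infeasibility (in which case the integer problem is also infeasible and we return the same certificate) or a fractional solution $y \in \ST(G)$ with $c^T y \le (1+\eps')\OPT$ and $Ay \le (1+\eps')b$. I would then feed $y$ into Theorem~\ref{thm:intro-swap-round} (again with parameter $\eps'$) to obtain a random spanning tree $T$ with characteristic vector $(X_1,\dots,X_m)$ that is negatively correlated and whose marginals satisfy $\E[X_e] \le (1+\eps')y_e$ for each edge $e$.

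Given such a tree, the packing guarantees follow by importing the existing analyses. For each constraint $i$, linearity gives $\E[a_i^T \1_T] \le (1+\eps')^2 b_i$, and by Panconesi--Srinivasan \cite{PS97} negative correlation suffices to run a Chernoff-style tail bound, yielding $a_i^T \1_T \le (1+\eps) b_i + O(\log k)/\eps^2$ with probability $1 - 1/\poly(k,n)$; a union bound over the $k$ constraints handles the additive branch. For the $O(\log k/\log\log k)$ multiplicative branch, one quotes the sharper asymmetric tail estimate used in \cite{BansalKN10,BansalKKNP13} for the crossing problem, which again needs only negative correlation plus the fractional bound $Ay \le (1+\eps')b$. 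For the feasibility version this already completes the argument.

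For the cost version the additional issue is that cost is a single scalar quantity and one has no tail bound beyond Markov. Since $c^T \1_T \ge 0$ and $\E[c^T \1_T] \le (1+\eps')^2 \OPT$, taking $\eps'$ a small enough constant multiple of $\eps$ gives $\Pr[c^T \1_T \le (1+\eps) \OPT] \ge \Omega(\eps)$ by Markov. Intersecting with the high-probability packing event yields the bicriteria bound with probability $\Omega(\eps)$. Running the algorithm $\tO(1/\eps)$ times independently and returning the first tree meeting both criteria boosts this to $1 - 1/\poly(n)$ while keeping the total running time near-linear. The main obstacle, to the extent there is one, is precisely this cost--concentration mismatch: negative correlation is powerful for the $k$ packing constraints but useless for a single cost bound, which is exactly why the $\Omega(\eps)$ success probability and the explicit $\tO(1/\eps)$ repetition step appear in the statement.
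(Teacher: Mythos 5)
Your proposal is correct and follows essentially the same route as the paper's proof: solve the LP via Theorem~\ref{thm:intro-lp-solve}, swap-round via Theorem~\ref{thm:intro-swap-round}, invoke the negative-correlation-based concentration bounds of~\cite{cvz-10} for the packing constraints, and use Markov plus $\tO(1/\eps)$ repetitions to handle the cost. The only cosmetic difference is that you unpack the concentration step (Panconesi--Srinivasan, union bound, asymmetric tail) where the paper simply cites~\cite{cvz-10}, and you track the $(1+\eps')^2$ loss explicitly, which the paper absorbs into $(1+O(\eps))$.
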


Our algorithm, when specialized to \bdst and \bdmst is
more general than the one in \cite{DuanHZ17} in  terms  of  handling
cost and non-uniform degrees. In addition we obtain a very close
estimate of $B^*$, a much better dependence on $\eps$, and
also  obtain an approximation of the form $O(\log n/\log \log n) B^*$
which is better than $(1+\eps)B^* + O(\log n)/\eps^2$ for small $B^*$.

\paragraph{Extensions and other applications:} We mainly focused on
\bdmst and a high-level result for \cmst. One can obtain results for
related problems that involve multiple costs, lower bounds in addition
to upper bounds, and other applications of swap-roundings. We discuss
these in more detail in Section \ref{sec:putting-together}.

\subsection{Overview of main ideas}
Faster approximation algorithms for LPs that arise in combinatorial
optimization have been developed via several techniques. We follow a
recent line of work \cite{cq-17,cq-18,q-18,chq-20} that utilizes
features of the multiplicative weight update (MWU) method and data
structures to speed up \emph{implicit} LPs. In particular, the LP for
\cmst that we seek to solve can be addressed by the randomized MWU
algorithm from \cite{cq-18} and data structures for dynamic MST
\cite{hlt-01}. The overall approach follows some ideas from past work
\cite{cq-17}.  The sparsification result is inspired by recent
applications of similar ideas \cite{cq-17,cq-tsp,ChakLSSW19} and
utilizes Karger's theorem on random sampling for packing disjoint
bases in matroids \cite{k-98}.

Our main novel contribution is Theorem~\ref{thm:intro-swap-round}
which we believe is of independent interest beyond the applications
outlined here. Dependent randomized rounding techniques have had many
spectacular applications. In particular maximum entropy rounding in
the spanning tree polytope gave a strong impetus to this line of work
via its applications to ATSP \cite{Asadpouretal17} and metric-TSP
\cite{GharanSS11}. Swap-rounding is a simpler scheme to describe and
analyze, and suffices for several applications that only require
negative correlation. However, all the known dependent rounding
schemes are computationally expensive.  Recent work has led to
fantastic progress in sampling spanning trees \cite{AnariLOV20},
however the bottleneck for maximum entropy rounding is to compute,
from a given point $x \in \ST(G)$, the maximum entropy distribution
with marginals equal to $x$; polynomial time (approximation)
algorithms exist for this \cite{Asadpouretal17,SinghV14} but they are
rather slow. Swap-rounding \cite{cvz-10} requires one to decompose $x
\in \ST(G)$ (or more generally a point in the matroid base polytope)
into a convex combination of spanning trees; that is we write $x =
\sum_{T \in \cT} \lambda_T \1_T$ such that $\sum_T \lambda_T = 1$ and
$\lambda_T \ge 0, T \in \cT$. This is a non-trivial problem to do
exactly. The starting point here is a theorem in \cite{cq-17} that
shows that one can solve this decomposition problem
\emph{approximately} and deterministically in near-linear time via a
reduction to the problem of spanning tree packing; this is done via
MWU techniques. The near-linear time algorithm implies that any $x \in
\ST(G)$ can be decomposed efficiently into an \emph{implicit} convex
decomposition of total size $\tO(m/\eps^2)$ where $\eps$ is the
approximation parameter in the decomposition. We show in this paper
that this implicit sparse decomposition is well-suited to the
swap-rounding algorithm. We employ a divide-and-conquer strategy with
appropriate tree data structures to obtain an implementation that is
near-linear in the size of the implicit decomposition. Putting these
ingredients together yields our result.\footnote{In an earlier version of the
  paper (see \cite{cqt-20}) we described our fast swap rounding using two ideas.
  The first was a fast near-linear time algorithm to merge two
  spanning trees using the link-cut tree data structure.
  We were unaware of prior work of Ene and Nguyen
  \cite{en-19} that had already given such an algorithm
  in the context of fast algorithms for submodular function maximization
  in graphic matroids. In this version of the paper we use their
  algorithm as a black box. We focus on our second idea
  which exploits the implicit representation.
}

The seemingly fortuitous connection between the MWU based algorithm
for packing spanning trees and its implicit representation leading to
a fast algorithm for swap-rounding is yet another illustration of the
synergy between tools coming together in the design of
fast algorithms.

\subsection{Other related work}
We overview some known results on \cst and \cmst and special cases.
\bdmst can be viewed as a special case of \cmst where each edge
participates in $2$ constraints. Bansal \etal \cite{BansalKN10} showed
that if each edge participates in at most $\Delta$ constraints of $A$
(and $A$ is a binary matrix) then one can obtain a $(1, b + \Delta
-1)$-approximation generalizing the \bdmst result; this was further
extended to matroids by Lau, Kiraly and Singh \cite{KLS12}. It is
shown in \cite{BansalKKNP13} that for \cst one cannot obtain a purely
additive approximation better than $O(\sqrt{n})$ via the natural LP
relaxation. For this they use a reduction from discrepancy
minimization; it also implies, via the hardness result in
\cite{CharikarNN11} for discrepancy, that it is NP-Hard to obtain a
purely additive $o(\sqrt{n})$ bound.  Bansal \etal \cite{BansalKKNP13}
consider the \emph{laminar} case of \cmst where the cuts form a
laminar family and obtained a $(1, b+ O(\log n))$ approximation via
iterative rounding (this problem generalizes \bdmst). Olver and
Zenklusen \cite{OlverZ18} consider chain-constrained \cst which is a
further specialization when the laminar family is a chain (a nested
family of cuts). For this special case they obtained an $O(1)$-factor
approximation in the unit cost setting; Linhares and Swamy
\cite{LinharesS18} considered the min-cost version and obtained an
$(O(1),O(1))$-approximation. \cite{OlverZ18} also showed that even in
the setting of chain-constrained \cst, it is NP-Hard to
obtain a purely additive bound better than $c \log n/\log \log n$ for
some fixed constant $c$.

Dependent randomized rounding has been an active area of research with
many applications. Pipage rounding, originally devoped by Ageev and
Sviridenko \cite{pipage} in a deterministic way, was generalized to
the randomized setting by Srinivasan \cite{Srinivasan01} and by Gandhi
et al.~\cite{GandhiKPS06} and \cite{ccpv,cvz-10} and has led to a number
of applications. Maximum entropy rounding satisfies additional
properties beyond negative correlation and this is important in
applications to metric-TSP (see \cite{GharanSS11} and very recent work
\cite{KarlinKG20a,KarlinKG20b}).  There has been exciting recent
progress on sampling spanning trees and bases in matroids and we refer
the reader to some recent work \cite{Schild18,AnariGV18,AnariLOV20} for further
pointers.  Concentration bounds via dependent rounding can also be
obtained without negative correlation (see \cite{cvz-11} for instance)
and recent work of Bansal \cite{Bansal19} combines iterative rounding
with dependent rounding in a powerful way.

Fast approximation algorithms for solving positive LPs and SDPs has
been an extensive area of research starting from the early
90s. Lagrangean relaxation techniques based on MWU and other methods
have been extensively studied in the past, and continue to provide new
insights and results for both explicit and implicit problems. Recent
work based on a convex optimization perspective has led to a number of
new results and improvements. It is infeasible to do justice to this
extensive research area and we refer the reader to two recent PhD
theses \cite{q-thesis,w-thesis}. Spectacular advances in fast
algorithms based on the Laplacian paradigm, interior point methods,
cutting plane methods, spectral graph theory, and several others have
been made in the recent past and is a very active area of research
with frequent ongoing developments.

\paragraph{Organization:} Section~\ref{sec:prelim} introduces some
relevant notation, technical background and tree data structures
that we rely on. Section~\ref{sec:swap} describes our fast
swap-rounding algorithm and proves
Theorem~\ref{thm:intro-swap-round}.
Section~\ref{sec:sparsification} describes the sparsification process and proves
Theorem~\ref{thm:intro-sparsification}. Section~\ref{sec:solve} describes
our near-linear time algorithm to solve the LP relaxation for \cmst and proves
Theorem~\ref{thm:intro-lp-solve}. Section~\ref{sec:putting-together} brings
together results from previous sections to prove some of the corollaries stated
in the introduction and provides details of some extensions and related problems.



\section{Preliminaries and notation}
\label{sec:prelim}
For a set $S$, we use the convenient notation $S-i$ to denote $S \setminus \{i\}$ 
and $S + i$ to denote $S \cup \{i\}$.

  \paragraph{Matroids.} We discuss some basics of matroids to establish some
  notation as well as present some useful lemmas that will be used later.
  A \emph{matroid} $\cM$ is a tuple $(N, \cI)$ with $\cI \subseteq 2^N$ satisfying
  the following three properties: (1) $\emptyset \in \cI$, (2) if $A \in \cI$ and 
  $B \subseteq A$, then $B \in \cI$, and (3) if $A,B \in \cI$ such that 
  $\abs{A} < \abs{B}$ then there exists $b \in B\setminus A$ such that 
  $A + b \in \cI$. We refer to the sets in $\cI$ as \emph{independent sets} and say
  that maximal independent sets are $\emph{bases}$. The \emph{rank} 
  of $\cM$ is the size of a base. For a set $A \in 2^N$, we refer to
  $r_\cM(A) = \max\{\abs{S} : S\subseteq A, S \in \cI\}$ as the \emph{rank}
  of $A$.
  
  A useful notion that we utilize in our fast implementation of swap rounding is
  that of contraction of a matroid.
  We say that the \emph{contraction} of $e$ in $\cM$ results in the
  matroid $\cM / e = (N - e, \{I \subseteq N -e : I + e \in \cI\})$ if 
  $r_\cM(\{e\}) = 1$ and $\cM/e = (N - e, \{I \subseteq N - e : I \in \cI\})$ if
  $r_\cM(\{e\}) = 0$. This definition extends naturally to contracting
  subsets $A \subseteq N$. It can be shown that contracting the
  elements of $A$ in any order results in the same matroid, which we
  denote as $\cM / A$.
  
  The following statements are standard results in the study of matroids (e.g.\ 
  see~\cite{Schrijver-book}).
  The following theorem is important in the analysis of swap rounding. It is often
  called the strong base exchange property of matroids.
  \begin{theorem}\label{thm:base}
    Let $\cM = (N, \cI)$ be a matroid and let $B,B'$ be bases.
    For $e \in B \setminus B'$, there exists $e' \in B' \setminus B$ such that
    $B - e + e' \in \cI$ and  $B' - e' +e \in \cI$.
  \end{theorem}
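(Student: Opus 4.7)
My plan is to exhibit the desired exchange element $e'$ as an element of the intersection of a suitable fundamental circuit and a suitable fundamental cocircuit. Specifically, I will consider the fundamental circuit $C = C(e, B')$, defined as the unique circuit contained in $B' + e$ (well-defined since $B'$ is a base and $e \notin B'$), together with the fundamental cocircuit $C^* = C^*(e, B)$, defined as the unique cocircuit contained in $(N \setminus B) + e$ (well-defined since $B$ is a base and $e \in B$). Both contain $e$, so $e \in C \cap C^*$.

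The main step will be to invoke the classical fact that for any circuit $C$ and any cocircuit $C^*$ in a matroid, $|C \cap C^*| \neq 1$. Given this, since $e$ lies in the intersection, there must exist some $e' \in (C \cap C^*) - e$. I will then argue that this $e'$ does the job: membership in $C = C(e, B')$ (minus $e$) puts $e'$ in $B'$ and gives $B' - e' + e \in \cI$, because deleting any element of a circuit breaks it and the remaining set together with $e$ forms a base. Membership in $C^* = C^*(e, B)$ (minus $e$) puts $e'$ in $N \setminus B$ and gives $B - e + e' \in \cI$, because the defining property of the fundamental cocircuit of $e$ with respect to $B$ is precisely the set of elements that can swap in for $e$. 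Together these yield $e' \in B' \setminus B$ and both base-exchange conclusions.

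The heart of the argument — and the only nonroutine step — is the claim that a circuit and a cocircuit cannot meet in exactly one element. I will sketch this as follows: suppose for contradiction that $C \cap C^* = \{e\}$. Removing $C^*$ drops the rank by exactly one, so $N \setminus C^*$ has rank $r(\cM) - 1$ and contains $C - e$, which is independent (since $C$ is a circuit). Extending $C - e$ to a maximal independent subset of $N \setminus C^*$ and then trying to add $e$ produces a dependent set (it contains the circuit $C$), yet adding any element of the cocircuit $C^*$ to a rank-$(r(\cM)-1)$ independent set avoiding $C^*$ should restore the rank — contradiction. This duality fact is the only ingredient that goes beyond the matroid axioms, and I expect it to be the subtlest piece to state cleanly; everything else is a direct unpacking of the definitions of fundamental circuit and fundamental cocircuit.
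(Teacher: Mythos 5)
Your argument is correct. Note first that the paper does not prove this statement at all: it is listed among ``standard results in the study of matroids'' with a citation to Schrijver's book, so there is no in-paper proof to compare against. Your route --- take the fundamental circuit $C(e,B')\subseteq B'+e$ and the fundamental cocircuit $C^*(e,B)\subseteq (N\setminus B)+e$, observe $e$ lies in both, and invoke the fact that a circuit and a cocircuit never meet in exactly one element to extract $e'\in (C\cap C^*)-e$ --- is one of the standard textbook derivations of the strong base exchange property, and every step you give checks out. In particular: $e'\in C-e\subseteq B'$ and $e'\in C^*-e\subseteq N\setminus B$ gives $e'\in B'\setminus B$; uniqueness of the circuit in $B'+e$ gives that $B'+e-e'$ is independent; and your sketch of $|C\cap C^*|\neq 1$ via the hyperplane $N\setminus C^*$ is sound. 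The one spot where you should be careful when writing it up is the step ``the defining property of the fundamental cocircuit of $e$ with respect to $B$ is precisely the set of elements that can swap in for $e$'': as you have defined $C^*(e,B)$ (the unique cocircuit inside $(N\setminus B)+e$), this characterization is a small lemma rather than a definition --- it follows from identifying that cocircuit with $N\setminus \mathrm{cl}(B-e)$, the complement of the hyperplane spanned by $B-e$, whose elements are exactly those $f$ with $B-e+f$ a base. With that identification made explicit, the proof is complete.
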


  The next lemma shows that if one contracts elements of an independent
  set in a matroid, bases in the contracted matroid can be used to form 
  bases in the initial matroid.
  \begin{lemma}\label{lem:cont-matroid-base}
    Let $\cM = (N,\cI)$ be a matroid and let $A \in \cI$.
    Let $B_A$ be a base in $\cM/A$. Then $A \cup B_A$ is a base in $\cM$.
  \end{lemma}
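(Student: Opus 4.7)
The plan is to unpack the definition of contraction of an independent set and then apply the maximality property of $B_A$ directly.

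First I would establish the characterization that if $A \in \cI$, then $\cI(\cM/A) = \{I \subseteq N \setminus A : I \cup A \in \cI\}$. The excerpt only defines contraction of a single element, so this requires a short induction on $|A|$. Writing $A = A' + a$ with $A' \in \cI$ (which is possible since $A$ is independent), the inductive hypothesis gives $\cI(\cM/A') = \{I \subseteq N \setminus A' : I \cup A' \in \cI\}$. Since $a \in A$ is independent in $\cM/A'$ (because $A' + a = A \in \cI$), we have $r_{\cM/A'}(\{a\}) = 1$, so by the single-element definition, $\cI(\cM/A) = \cI((\cM/A')/a) = \{I \subseteq N \setminus A : I + a \in \cI(\cM/A')\} = \{I \subseteq N \setminus A : I \cup A \in \cI\}$, as desired. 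The order-invariance of contraction stated in the excerpt ensures this does not depend on the choice of $a$.

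Given this characterization, the lemma is essentially immediate. Since $B_A$ is a base in $\cM/A$, in particular $B_A \in \cI(\cM/A)$, so by the characterization $A \cup B_A \in \cI$, confirming independence in $\cM$. For maximality, I would argue by contradiction: suppose $A \cup B_A$ is not a base of $\cM$. Then there is some $e \in N \setminus (A \cup B_A)$ with $A \cup B_A + e \in \cI$. Applying the characterization again, $B_A + e \in \cI(\cM/A)$, contradicting the fact that $B_A$ is a \emph{base} (and hence maximal independent set) of $\cM/A$.

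There is no real obstacle here; the only subtlety is making sure the single-element definition of contraction extends correctly to independent sets, which is what the inductive step above handles. Everything else is just the definition of a base together with the contracted independence structure.
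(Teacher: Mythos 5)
Your proof is correct. The paper does not actually prove this lemma---it is listed among ``standard results in the study of matroids'' with a citation to Schrijver---so there is no in-paper argument to compare against; your route (first extending the single-element definition of contraction to the characterization $\cI(\cM/A)=\{I\subseteq N\setminus A: I\cup A\in\cI\}$ by induction on $|A|$, then reading off both independence and maximality of $A\cup B_A$ from that characterization) is the standard textbook argument and is complete as written.
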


    \paragraph{A forest data structure.}
    We need a data structure to represent a forest that supports the necessary
    operations we need to implement randomized swap rounding in 
    Section~\ref{sec:swap}. The data structure mainly needs to facilitate
    the contraction of edges, including being able to recover the identity
    of the original edges after any number of contractions. 
    We enable this by enforcing that when the data structure is initialized, every edge
    $e$ is accompanied with a unique
    identifier. This identifier will be associated with the edge regardless of the
    edge's endpoints changing due to contraction. 
    The implementation of this step is important to guarantee a fast running time.
    The approach here is standard but requires some care, so we include the details
    in the appendix (see Section~\ref{app:forests-ds}).


\section{Fast swap rounding in the spanning tree polytope}
\label{sec:swap}
Randomized swap rounding, developed in~\cite{cvz-10}, is a dependent
rounding scheme for rounding a fractional point $x$ in the base
polytope of a matroid to a random base $X$. The rounding preserves
expectation in that $\E[X] = x$, and more importantly, the coordinates
of $X$ are negatively correlated. 
In this section we prove
Theorem~\ref{thm:intro-swap-round} on a fast algorithm for
swap-rounding in the spanning tree polytope. We begin by describing
swap-rounding.

\subsection{Randomized swap rounding}
Let $\cM = (N, \cI)$ be a matroid and let $\cP$ be the base polytope of
$\cM$ (convex hull of the characteristic vectors of the bases of
$\cM$). Any $x \in \cP$ can be written as a finite convex combination
of bases: $x = \sum_{i=1}^h \delta_i \1_{B_i}$. Note that this
combination is not necessarily unique. As in~\cite{cvz-10}, we give
the original algorithm for randomized swap rounding via two routines.
The first is $\merge$, which takes as input two bases $B$, $B'$ and
two real values $\delta, \delta' \in (0,1)$. If $B = B'$ the algorithm
outputs $B$.  Otherwise the algorithm finds a pair of elements $e, e'$
such that $e \in B \setminus B'$ and $e' \in B' \setminus B$ where
$B - e + e' \in \cI$ and $B' - e' + e \in \cI$.  For such $e$ and
$e'$, we say that they are a \emph{valid exchange pair} and that we
\emph{swap} $e$ with $e'$.  The existence of such elements is
guaranteed by the strong base exchange property of matroids in
Theorem~\ref{thm:base}.  The algorithm randomly retains $e$ or $e'$ in
both bases with appropriate probability and this increases the
intersection size of $B$ and $B'$.  The algorithm repeats this process
until $B=B'$.  The overall algorithm $\swap$ utilizes $\merge$ as a
subroutine and repeatedly merges the bases until only one base is
left. A formal description is in Figure~\ref{alg:swap} along with the
pseudocode for $\merge$.

    \begin{figure}[t]
      \centering
      \fbox{\parbox{0.75\linewidth}
      {
        $\merge(\delta, B,\delta',B')$

        \begin{algorithmic}
          \WHILE{$B \setminus B' \ne \emptyset$}
            \STATE $e \leftarrow$ arbitrary element of $B \setminus B'$
            \STATE $e' \leftarrow$ element of $B' \setminus B$ such that
              $B - e + e' \in \cI$ and $B' - e' + e \in \cI$
            \STATE $b \leftarrow 1$ with probability $\frac{\delta}{\delta+\delta'}$ and
              $0$ otherwise
            \IF{$b=1$}
              \STATE $B \leftarrow B - e + e'$
            \ELSE
              \STATE $B' \leftarrow B' - e' + e$
            \ENDIF
          \ENDWHILE
          \RETURN{$B$}
        \end{algorithmic}
      }}
      \fbox{\parbox{0.75\linewidth}
      {
        $\swap(\delta_1,B_1,\ldots,\delta_h,B_h)$

        \begin{algorithmic}
          \STATE $C_1 \leftarrow B_1$
          \FOR{$k$ from $1$ to $h-1$}
            \STATE $C_{k+1} \leftarrow \merge(\sum_{i=1}^k \delta_i, C_k,
              \delta_{k+1}, B_{k+1})$
          \ENDFOR
          \RETURN{$C_h$}
        \end{algorithmic}
      }}
      \caption{The randomized swap rounding algorithm from~\cite{cvz-10}.}
      \label{alg:swap}
    \end{figure}

It is shown in \cite{cvz-10} that swap-rounding generates a random
base/extreme point $X \in \cP$ (note that the extreme points of $\cP$
are characteristic vectors of bases) such that $\E[X] = x$ and the
coordinates $X_1,X_2,\ldots,X_n$ (here $|N| = n$) are negatively
correlated. We observe that swap-rounding does not lead to a unique
probability distribution on the bases (that depends only $x$). First,
as we already noted, the convex decomposition of $x$ into bases is not
unique. Second, both $\merge$ and $\swap$ are non-deterministic in
their choices of which element pairs to swap and in which order to
merge bases in the convex decomposition. The key property for negative
correlation, as observed in \cite{cvz-10}, is to view the generation
of the final base $B$ as a vector-valued martingale (which preserves
expectation in each step) that changes only two coordinates in each
step. Another rounding strategy, namely pipage rounding, also enjoys
this property. Nevertheless swap-rounding is a meta algorithm that has
certain clearly defined features. The flexibility offered by $\merge$
and $\swap$ are precisely what allow for faster implementation in
specific settings.

We say that $\bar{B} \eqdist \merge(\delta, B, \delta',B')$ if $\bar{B}$ is the
random output of $\merge(\delta, B, \delta', B')$ for some
non-deterministic choice of valid exchange pairs in the algorithm.
Similarly we say that
$B \eqdist \swap(\delta_1,B_1,\ldots,\delta_h,B_h)$ if $B$ is the
random output of the $\swap$ process for some non-deterministic choice
of the order in which bases are merged and some non-deterministic
choices in the merging of bases. It follows from \cite{cvz-10} that
if $B \eqdist \swap(\delta_1,B_1,\ldots,\delta_h,B_h)$ then
$B$ satisfies the property that $\E[B] = x$ and coordinates of $B$
are negatively correlated.

\subsection{Setup for fast implementation in graphs}
Let $G=(V,E)$ be a multigraph with $|V| = n$ and $|E|=m$ and let
$x \in \ST(G)$ be a fractional spanning tree. Swap rounding requires
decomposing $x$ into a convex combination of spanning trees. This step
is itself non-trivial; existing algorithms have a high polynomial
dependence on $n,m$. Instead we will settle for an \emph{approximate}
decomposition that has some very useful features. We state
a theorem (in fact a corollary of a theorem) from \cite{cq-17} in
a slightly modified form suitable for us.

\begin{theorem}[Paraphrase of Corollary 1.2 in \cite{cq-17}]
  \label{thm:sparse-convex-decomp}
  Given a graph $G=(V,E)$ with $n=|V|$ and $m=|E|$ and a rational
  vector $x \in [0,1]^m$ there is a deterministic polynomial-time
  algorithm that runs in $\tO(m/\eps^2)$ time and either correctly
  reports that $x \not \in \ST(G)$ or outputs an \emph{implicit}
  convex decomposition of $z$ into spanning trees such that
  $z \le (1+\eps)x$.
\end{theorem}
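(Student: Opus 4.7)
The plan is to reduce the approximate decomposition problem to fractional spanning tree packing and solve it with multiplicative weight updates (MWU). View $x$ as capacities on the edges of $G$; then $x \in \ST(G)$ is equivalent to the capacitated graph admitting a fractional packing of spanning trees of total weight exactly $1$, i.e., weights $\mu_T \ge 0$ with $\sum_T \mu_T = 1$ and $\sum_T \mu_T \1_T(e) \le x_e$ for every edge $e$. If we can produce a fractional packing of value $\tau \ge 1/(1+\eps)$, rescaling by $\lambda_T := \mu_T/\tau$ gives a convex combination with $z := \sum_T \lambda_T \1_T \le x/\tau \le (1+\eps) x$, which is the desired output. Conversely, if the maximum fractional packing has value strictly less than $1/(1+\eps) \le 1$, then necessarily $x \not\in \ST(G)$ (since membership in $\ST(G)$ always yields a packing of value $1$), so we can safely report infeasibility.

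I would then invoke a standard Plotkin--Shmoys--Tardos style packing MWU with an MST oracle. Maintain nonnegative weights $w_e$ initialized uniformly. In each iteration, call the MST oracle under the modified weights $w_e/x_e$ (handling $x_e = 0$ via a preprocessing pass that removes or rounds tiny coordinates) to obtain a tree $T_i$, then multiplicatively update $w_e \gets w_e \exp(\eta\,\1_{T_i}(e)/x_e)$ for a step size $\eta = \Theta(\eps)$. The width-bounded analysis yields $N = \tO(1/\eps^2)$ iterations, and the uniform combination $z := \frac{1}{N}\sum_i \1_{T_i}$ satisfies $z \le (1+\eps)x$ with $\sum_i \frac{1}{N} = 1$. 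If instead the potential function grows in a way that forces the MST oracle to consistently return trees of weight greater than the target $(1+\eps)\,w \cdot x$, the accumulated weight vector exhibits a linear separation of $x$ from $\ST(G)$ and we report $x \not\in \ST(G)$.

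A single MST call costs $\tO(m)$, so $N$ iterations give a total running time of $\tO(m/\eps^2)$. The \emph{implicit} convex decomposition is exactly the sequence $T_1,\ldots,T_N$ with uniform coefficients $1/N$; each tree can be recorded either explicitly in $\tO(n/\eps^2) \le \tO(m/\eps^2)$ space, or more compactly as incremental edits on top of a dynamic MST data structure such as \cite{hlt-01}, which is attractive for downstream algorithms that want to process successive trees in the decomposition. The main technical obstacle is width control: bounding the ratio $\1_{T_i}(e)/x_e$ uniformly requires a careful preprocessing step that rounds very small $x_e$ up to a polynomially bounded lower threshold at the cost of only a $(1+\eps)$-factor blow-up, and a matching lazy weight-update scheme so that the total work across iterations stays in $\tO(m/\eps^2)$. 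These are the standard delicate steps in near-linear-time MWU algorithms for implicit LPs, and they are carried out in detail in \cite{cq-17}.
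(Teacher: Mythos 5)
Your proposal follows essentially the same route as the paper, which in fact offers no proof of this statement at all: it imports it verbatim as Corollary~1.2 of \cite{cq-17}, whose method is exactly the reduction you describe (treat $x$ as edge capacities, approximately maximize a fractional spanning-tree packing via MWU with an MST oracle, rescale if the value is near $1$, and otherwise correctly declare $x \notin \ST(G)$). One quantitative slip worth noting: a width-bounded analysis does not give $\tO(1/\eps^2)$ iterations (after rounding tiny $x_e$ up, the width is still $\poly(m)$); the actual algorithm of \cite{cq-17} performs $h = \tO(m/\eps^2)$ increments, and the $\tO(m/\eps^2)$ total time and the compact \emph{implicit} output --- a first tree plus swap lists $(E_i,E_i')$ of total size $\tO(m/\eps^2)$ --- both come from consecutive trees differing in few edges under lazy updates and dynamic MST maintenance, not from having few trees; this is precisely the structure the paper's $\fastswap$ algorithm later exploits.
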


The MWU algorithm behind the preceding theorem outputs a convex
decomposition of $z = \sum_{i=1}^h\delta_i \1_{T_i}$ for
$h = \tO(m/\eps^2)$ but in an implicit fashion. It outputs $T=T_1$ and
a sequence of tuples $(\delta_i, E_i,E_i')$ where
$T_{i+1} = T_i - E_i + E'_i$ for $1 \le i < h$ and has the property
that $\sum_{i=1}^{h-1} (|E_i| + |E'_i|) = \tO(m/\eps^2)$.  Thus the
convex decomposition of $z$ is rather sparse and near-linear in $m$
for any fixed $\eps > 0$. We will take advantage of this and
swap-round $z$ via this implicit convex decomposition. For many
applications of interest, including \cmst, the fact that we randomly
round $z$ instead of $x$ does not make much of a difference in the
overall approximation since $x$ itself in our setting
is the output of an approximate LP solver.

\begin{remark}
  The output of the approximate LP solver based on MWU for \cmst has
  the implicit decomposition as outlined in the preceding paragraph.
  However, for the sake of a self-contained result as stated in
  Theorem~\ref{thm:intro-swap-round}, we use the result from
  \cite{cq-17} which also has the advantage of being deterministic.
\end{remark}

The rest of the section describes a fast implementation for 
$\swap$. The algorithm is based on a divide and conquer
strategy for implementing $\swap$ when the convex combination is
described in an implicit and compact fashion. An important ingredient
is a fast black-box implementation of $\merge$. For this we use the
following result; as we remarked earlier, an earlier version of this paper
obtained a similar result.

\begin{theorem}[Ene and \Nguyen~\cite{en-19}]\label{thm:fast-merge}
  Let $T$ and $T'$ be spanning trees of a graph $G = (V,E)$
  with $\abs{V} = n$ and $E = T \cup T'$ and let $\delta,\delta' \in (0,1)$.
  There exists an algorithm $\fastmerge$ such that
  $\fastmerge(\delta,T,\delta',T') \eqdist \merge(\delta,T,\delta',T')$
  and the call to $\fastmerge$ runs in $O(n\log^2n)$ time. 
\end{theorem}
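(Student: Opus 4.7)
The plan is to simulate $\merge(\delta, T, \delta', T')$ step-by-step with fast data-structural support, rather than to invent a new rounding procedure. I would maintain both $T$ and $T'$ as dynamic forests under a link-cut tree data structure supporting $\link$, $\cut$, path queries, and ancestor tests in $O(\log n)$ amortized time per operation, and also keep an auxiliary bucket listing the edges of the symmetric difference $T \setdiff T'$. Since $|E| \le 2(n-1)$ under the hypothesis $E = T \cup T'$, this initialization takes $O(n \log n)$ time.

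The main loop iterates $\merge$: pick an arbitrary $e = uv \in T \setminus T'$ from the bucket, find a valid exchange partner $e' \in T' \setminus T$, flip the biased coin with probability $\delta/(\delta+\delta')$, and update either $T$ or $T'$ together with the bucket via one $\cut$ and one $\link$ in each link-cut tree. Each such iteration strictly increases $|T \cap T'|$ by one, so at most $n-1$ iterations occur. Thus it suffices to implement a single merge step in $O(\log^2 n)$ amortized time to obtain the claimed $O(n \log^2 n)$ bound.

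The main obstacle, and the heart of the argument, is to locate a valid $e'$ quickly. The key structural observation is the following: let $P$ be the $u$–$v$ path in $T'$, and let $S_u, S_v$ be the two components of $T - e$. Walking $P$ from $u$ to $v$, the endpoints lie in $S_u$ initially and in $S_v$ at the end, so some edge of $P$ has one endpoint in $S_u$ and one in $S_v$; call it a transition edge. Any transition edge $e'$ satisfies $T - e + e' \in \cI$ (it crosses the cut of $e$) and $T' - e' + e \in \cI$ (it lies on $P$), and moreover $e' \notin T$ because the only $T$-edge across the cut is $e$ itself, while $e \notin T'$. Hence $(e, e')$ is a valid exchange pair in the sense required by $\merge$. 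To find a transition edge in $O(\log^2 n)$ time, root $T$ arbitrarily so that $S_v$ becomes the subtree of one endpoint of $e$; then membership in $S_v$ is an ancestry query answerable in $O(\log n)$ via the link-cut tree on $T$. Now binary-search the path $P$ in $T'$: link-cut trees expose a path as a balanced auxiliary tree on which we can descend in $O(\log n)$ steps, testing the midpoint endpoint with one ancestry query, narrowing to the half of $P$ where the $S_u/S_v$ label flips. This yields one transition edge in $O(\log^2 n)$ time, as desired.

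Correctness is then essentially immediate: our procedure makes exactly the same stochastic choices as $\merge$, in the same probability space, with only the selection rule for $e'$ restricted to a legitimate subset of valid exchange partners; consequently the output is distributed as $\merge(\delta, T, \delta', T')$ under some non-deterministic choice of exchange pairs, which is precisely the meaning of $\fastmerge(\delta, T, \delta', T') \eqdist \merge(\delta, T, \delta', T')$. The only delicate points I would verify carefully are the amortized analysis of the interleaving of $\link$/$\cut$ with path exposures (standard for Sleator–Tarjan trees) and the bookkeeping that keeps the symmetric-difference bucket consistent with the two trees as edges are swapped in and out.
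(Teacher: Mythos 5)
This statement is imported by the paper as a black box from Ene and \Nguyen~\cite{en-19} (see the footnote in Section~1.2), so there is no in-paper proof to compare against; your proposal is, in effect, a reconstruction of the kind of link-cut-tree argument that the authors say appeared in an earlier version of their paper. The mathematical core of your argument is correct: for $e=uv\in T\setminus T'$, any edge of the $u$--$v$ path $P$ in $T'$ that crosses the cut induced by $T-e$ is a valid strong-exchange partner, it necessarily lies in $T'\setminus T$ since $e$ is the unique $T$-edge in that cut, and each resolved swap increases $|T\cap T'|$ by one, so $n-1$ iterations suffice; your procedure realizes one admissible sequence of non-deterministic choices of $\merge$, which is exactly what $\eqdist$ requires. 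Two implementation points deserve more care than you give them. First, the claim that link-cut trees ``expose a path as a balanced auxiliary tree'' is false for the splay-tree implementation, where a single top-down descent can be long; you need either the biased-search-tree variant of Sleator--Tarjan (which does guarantee $O(\log n)$ depth per exposed path) or an amortization argument covering the $O(\log |P|)$ probes of your binary search. Second, the membership test ``$w\in S_v$'' is not a primitive operation: you must realize it via $\evert$/LCA queries or a temporary $\cut$/$\link$ of $e$, and note that the labels of vertices along $P$ need not change monotonically, so the binary search must be justified by the invariant that the current subpath always has endpoints with different labels (which it does). Neither issue is fatal, and with these repairs your argument gives the stated $O(n\log^2 n)$ bound.
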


\subsection{Fast implementation of \texorpdfstring{$\swap$}{swap-round}}
In this subsection the goal is to prove the following theorem.

  \begin{theorem}\label{thm:fast-swap}
    Let $\sum_{i=1}^h \delta_i \1_{T_i}$ be a convex combination of spanning trees
    of the graph $G = (V,E)$ where $n = \abs{V}$.
    Let $T$ be a spanning tree such that $T= T_1$ and let $\{(E_i,E_i')\}_{i=1}^{h-1}$
    be a sequence of sets of edges such that
    $T_{i+1} = T_i - E_i + E_i'$ for all $i \in [h-1]$ and $E_i \cap E_i' = \emptyset$
    for all $i \in [h-1]$.
    Then there exists an algorithm that takes as input
    $T$, $\{(E_i,E_i')\}_{i=1}^{h-1}$, and $\{\delta_i\}_{i=1}^h$ and outputs a tree
    $T_S$ such that $T_S \eqdist \swap(\delta_1,T_1,\ldots,\delta_h,T_h)$.
    The running time of the algorithm is $\tO(n + \gamma)$ time where
    $\gamma = \sum_{i=1}^{h-1} (\abs{E_i} + \abs{E_i'})$.
  \end{theorem}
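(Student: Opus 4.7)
The plan is to realize $\swap(\delta_1, T_1, \ldots, \delta_h, T_h)$ via a balanced binary divide-and-conquer on the sequence $(T_i, \delta_i)$. For each dyadic interval $[a,b] \subseteq [1,h]$ with midpoint $m$, I would recursively compute intermediate trees $C_{[a,m]}$ and $C_{[m+1,b]}$ and then combine them via $\fastmerge\bigl(\sum_{i\le m}\delta_i,\, C_{[a,m]},\, \sum_{i>m}\delta_i,\, C_{[m+1,b]}\bigr)$; since the paper's definition of $\swap$ permits arbitrary non-deterministic orderings of its $h-1$ pairwise merges, the balanced order is a valid realization of $\swap$. Naively applying Theorem~\ref{thm:fast-merge} at each of the $h-1$ internal merges costs $\tO(n)$ per merge, giving $\tO(nh)$ in total, which is too slow; the savings must come from exploiting the sparsity of the implicit decomposition.

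The key structural observation is the following. Set $\Delta_{[a,b]} := \bigcup_{i=a}^{b-1}(E_i \cup E_i')$, so that $|\Delta_{[a,b]}| \le \sum_{i=a}^{b-1}(|E_i|+|E_i'|)$. Every $T_i$ with $i \in [a,b]$ agrees with $T_a$ off $\Delta_{[a,b]}$, and since $\merge$ never alters edges in the intersection of its two inputs, induction on the recursion shows that $C_{[a,b]}$ likewise agrees with $T_a$ off $\Delta_{[a,b]}$. To exploit this at an internal node $v$ with interval $[a,b]$, I would work in the contracted multigraph $G_v := G/(T_a \setminus \Delta_{[a,b]})$, which has at most $|\Delta_{[a,b]}|+1$ vertices and only the $|\Delta_{[a,b]}|$ relevant edges of $\Delta_{[a,b]}$. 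I would represent $C_{[a,b]}$ by its ``core'' $B_{[a,b]} := C_{[a,b]} \cap \Delta_{[a,b]}$, treating the invariant part $T_a \setminus \Delta_{[a,b]}$ as implicit. Lifting each child's core into $G_v$ (by adjoining the edges of $T_a$, resp.\ $T_{m+1}$, lying in $\Delta_{[a,b]} \setminus \Delta_{\text{child}}$) produces two spanning trees of $G_v$, and a single call to $\fastmerge$ on $G_v$ then produces $B_{[a,b]}$ in $\tO(|\Delta_{[a,b]}|)$ time by Theorem~\ref{thm:fast-merge}. At the root, the final output is $T_S := B_{[1,h]} \cup (T_1 \setminus \Delta_{[1,h]})$, reconstructed in $O(n)$ additional time.

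The running time then follows level-by-level: the intervals at each level of the balanced recursion partition $[1,h-1]$, so $\sum_{v \text{ at level } \ell}|\Delta_v| \le \sum_{i=1}^{h-1}(|E_i|+|E_i'|) = \gamma$, and summing over $O(\log h)$ levels gives total $\fastmerge$ work $\tO(\gamma)$, for an overall $\tO(n+\gamma)$ bound. Correctness is immediate: each $\fastmerge$ call is distributionally equivalent to $\merge$ by Theorem~\ref{thm:fast-merge}, so $T_S \eqdist \swap(\delta_1, T_1, \ldots, \delta_h, T_h)$. The main implementation obstacle is performing the contraction cleanly: at each node $v$ I need to identify the $O(|\Delta_v|)$ vertex classes of $G_v$ and the images of each edge of $\Delta_v$ in those classes, and doing so from scratch would cost $\Omega(n)$ per node. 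I plan to handle this using the contraction-supporting forest data structure of Section~\ref{sec:prelim}, together with a bottom-up precomputation that for each recursion node $v$ records only the subset $T_{a_v} \cap \Delta_v$ and the vertex-identifications it induces; the same level-partition argument amortizes this preprocessing to $\tO(\gamma)$ in total.
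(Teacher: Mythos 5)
Your proposal is essentially the paper's own proof: a divide-and-conquer realization of $\swap$ that contracts the common intersection $\bigcap_{i=a}^{b} T_i = T_a \setminus \bigcup_{i=a}^{b-1}(E_i\cup E_i')$ at each recursion node, invokes $\fastmerge$ on the contracted (size-$O(|\Delta_v|)$) graphs, and amortizes the work across recursion levels, with correctness resting on the facts that $\merge$ preserves the intersection of its inputs and commutes with contraction. The only differences are cosmetic --- you split at the index midpoint and sum level-by-level where the paper splits to balance $\sum_i|E_i|$ and solves a recurrence, and you sketch the contraction bookkeeping bottom-up where the paper does it top-down via $\shrinkint$ --- and both variants yield the same $\tO(n+\gamma)$ bound.
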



  \paragraph{A divide and conquer approach.} We consider the
  swap rounding framework in the setting of arbitrary matroids for
  simplicity. We work with the implicit decomposition of the
  convex combination of bases $\sum_{i=1}^h \delta_i\1_{B_i}$ of 
  the matroid $\cM = (N, \cI)$, as described in 
  Theorem~\ref{thm:fast-swap}. That is, the input is a base $B$ such
  that $B = B_1$, a sequence of sets of elements
  $\{(E_i,E_i')\}_{i=1}^{h-1}$ such that
  $B_{i+1} = B_i - E_i + E_i'$ and $E_i \cap E_i' = \emptyset$ for all
  $i \in [h-1]$, and the sequence of coefficients
  $\{\delta_i\}_{i=1}^h$. 
  
  The pseudocode for our divide and conquer algorithm $\divideswap$
  is given in Figure~\ref{alg:divide-swap}. The basic idea is simple.
  We \emph{imagine} constructing an explicit convex decomposition
  $B_1,B_2,\ldots,B_h$ from the implicit one. The high-level idea is to
  recursively apply swap rounding
  to $B_1,\ldots,B_{h/2}$ to create a base $B$, and similarly
  create a base $B'$ by recursively applying
  swap rounding to $B_{h/2+1},\ldots,B_h$, and then merging $B$ and $B'$.
  The advantage of this approach is manifested in the implicit case.
  To see this, we observe that in $\merge(\delta, B,\delta',B')$,
  the intersection $B \cap B'$ is always in the output, and this implies
  that the intersection $\bigcap_{i=1}^h B_i$ will always be in the
  output of $\swap(\delta_1,B_1,\ldots, \delta_h,B_h)$. Therefore, at
  every recursive level, we simply contract the intersection prior to
  merging any bases. Note that this is slightly complicated by the
  fact that the input is an implicit representation, but we note that
  Lemma~\ref{lem:set-2} implies $B \cap \bigcup_{i=1}^{h-1} (E_i\cup
  E_i')= B \setminus \bigcap_{i=1}^{h} B_i$ (see proof of
  Lemma~\ref{lem:divide-swap}). (We note later how the contraction of
  elements helps in the running time when specializing to the graphic
  matroid.)  After contracting the intersection, the algorithm
  recursively calls itself on the first $h/2$ bases and the second
  $h/2$ bases, then merges the output of the two recursive calls via
  $\merge$. With the given implicit representation, this means that
  the input to the first recursive call is $B_1,
  \{(E_i,E_i')\}_{i=1}^{h/2-1},\{\delta_i\}_{i=1}^{h/2}$ and the input
  to the second recursive call is $B_{h/2+1},
  \{(E_i,E_i')\}_{i=h/2+1}^{h-1},\{\delta_i\}_{i=h/2+1}^h$ (note we
  can easily construct $B_{h/2+1}$ via the implicit representation).
  The underlying matroid in the call to $\merge$ is the matroid $\cM$
  with the intersection $\bigcap_{i=1}^h B_i$ contracted.

  \begin{figure}[t]
    \centering
    \fbox{\parbox{0.66\linewidth}
    {
      $\divideswap(B, \{(E_i,E_i')\}_{i=s}^{t-1},\{\delta_i\}_{i=s}^t)$

      \begin{algorithmic}
        \IF{$s =t$}
          \RETURN{$B$}
        \ENDIF
        \STATE $\ell \leftarrow \max\left\{\ell' \in \{s, s+1,\ldots, t\} :
          \sum_{i=s}^{\ell'-1}\abs{E_i} \le \frac{1}{2} \sum_{i=s}^{t-1} \abs{E_i}\right\}$
        \STATE $\hat{B} \gets B \cap \bigcup_{i=s}^{t-1}(E_i \cup E_i')$
        \STATE $\hat{B}_C \gets \hat{B}$
        \FOR{$i$ from $s$ to $\ell$}
          \STATE $\hat{B}_C \gets \hat{B}_C - E_i + E_i'$
        \ENDFOR
        \STATE $\hat{B}_L \gets
          \divideswap(\hat{B}, \{(E_i,E_i')\}_{i=s}^{\ell-1}, \{\delta_i\}_{i=s}^\ell)$
        \STATE $\hat{B}_R\gets \divideswap(\hat{B}_C,
          \{(E_i,E_i')\}_{i=\ell+1}^{t-1}, \{\delta_i\}_{i=\ell+1}^t)$
        \STATE $\hat{B}_M \gets \merge(\sum_{i=s}^\ell \delta_i, \hat{B}_L,
          \sum_{i=\ell+1}^t \delta_i, \hat{B}_R)$
        \RETURN{$\hat{B}_M \cup (B \setminus \bigcup_{i=s}^{t-1} (E_i \cup E_i'))$}
      \end{algorithmic}
    }}
    \caption{A divide-and-conquer implementation of swap rounding with
      an implicit representation.}
    \label{alg:divide-swap}
  \end{figure}

  The following lemma shows that $\divideswap$ is a proper implementation
  of $\swap$. We leave the details of the proof for the appendix 
  (see Section~\ref{app:proofs}).
  \begin{lemma}\label{lem:divide-swap}
    Let $\sum_{i=1}^h \delta_i\1_{B_i}$ be a convex combination of bases in
    the matroid $\cM$ and let $\{(E_i,E_i')\}_{i=1}^{h-1}$ be a sequence of elements
    such that $B_{i+1} = B_i - E_i + E_i'$ and $E_i \cap E_i' = \emptyset$ for all
    $i \in [h-1]$. Then
    $ \swap(\delta_1,B_1,\ldots,\delta_h,B_h) \eqdist
    \divideswap(B_1,\{(E_i,E_i')\}_{i=1}^{h-1}, \{\delta_i\}_{i=1}^h)$.
  \end{lemma}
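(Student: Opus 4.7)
The plan is to proceed by induction on $t-s$, after generalizing the statement to arbitrary sub-intervals of $[1,h]$, and to exploit the fact that $\swap$ is a meta-algorithm whose merge order and exchange-pair choices are both non-deterministic (so $\eqdist$ only requires matching \emph{some} valid realization). The base case $s=t$ is immediate since $\divideswap$ just returns $B_s$. For the inductive step with split point $\ell$, I would combine three observations.

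First, a divide-and-conquer realization is a valid instance of $\swap$:
\[
\swap(\delta_s,B_s,\ldots,\delta_t,B_t) \eqdist \merge\!\Bigl(\textstyle\sum_{i=s}^{\ell}\delta_i,\,\swap(\delta_s,B_s,\ldots,\delta_\ell,B_\ell),\,\sum_{i=\ell+1}^{t}\delta_i,\,\swap(\delta_{\ell+1},B_{\ell+1},\ldots,\delta_t,B_t)\Bigr).
\]
This is simply one of the admissible merge orderings permitted by the meta-algorithm.

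Second, I would show that when $\merge(\delta,B,\delta',B')$ is called on two bases sharing a common subset $A \subseteq B \cap B'$, its output is distributionally equal to running $\merge$ in the contracted matroid $\cM/A$ on $B\setminus A$ and $B'\setminus A$ and then re-inserting $A$. Every valid exchange pair lies entirely outside $A$ since $A$ sits in both bases, and the exchange property transfers between $\cM$ and $\cM/A$ by Theorem~\ref{thm:base} applied in $\cM/A$ together with Lemma~\ref{lem:cont-matroid-base}. Iterating, $\swap$ on any collection of bases all containing $A$ is distributionally equal to running $\swap$ in $\cM/A$ and appending $A$.

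Third, I would verify that $\divideswap$ implements exactly this recipe. Using the set identity $\hat{B} = B_s \cap \bigcup_{i=s}^{t-1}(E_i \cup E_i') = B_s \setminus \bigcap_{i=s}^{t} B_i$ (a consequence of the cited Lemma~\ref{lem:set-2}), the passive tail $B_s \setminus \bigcup_{i=s}^{t-1}(E_i\cup E_i')$ appended at the final step equals $A := \bigcap_{i=s}^{t} B_i$, and a parallel computation shows $\hat{B}_C = B_{\ell+1}\setminus A$. Thus the body of $\divideswap$ calls $\merge$ in $\cM/A$ on the two recursive outputs and re-appends $A$, while the inductive hypothesis identifies each recursive output with the corresponding $\swap$ on its subinterval. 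Stitching these together with the first observation completes the inductive step.

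The principal obstacle will be the second step: the recursive calls on the left and right halves contract \emph{further} common intersections (for instance, $\bigcap_{i=s}^{\ell} B_i$ may strictly contain $A$), so one must carefully verify that contractions compose via $(\cM/A)/A' = \cM/(A\cup A')$ and that the sets appended at nested levels of recursion partition correctly with the bases produced inside. This is a telescoping bookkeeping argument, but requires no additional matroid theory beyond Theorem~\ref{thm:base} and Lemma~\ref{lem:cont-matroid-base}.
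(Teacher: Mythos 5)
Your proposal follows essentially the same route as the paper's proof: use Lemma~\ref{lem:set-2} to identify $\hat{B}$ and the appended tail with $B_s$ minus/equal to $\bigcap_i B_i$, observe via Lemma~\ref{lem:cont-matroid-base} that $\merge$ commutes with contracting a common subset $A$ (so $A \cup \merge(\delta,\hat{B},\delta',\hat{B}') \eqdist \merge(\delta,B,\delta',B')$), and apply this inductively over the recursion tree, using the non-determinism of $\swap$ to admit the divide-and-conquer merge order. Your version is in fact somewhat more explicit than the paper's (which compresses the induction and the nested-contraction bookkeeping into ``we can inductively apply this argument''), but the ideas and the supporting lemmas are the same.
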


  \paragraph{A fast implementation of $\divideswap$ for spanning trees.}
  The pseudocode for our fast implementation $\fastswap$ of 
  $\divideswap$ is given in Figure~\ref{alg:fast-swap}. 
  
  As in $\divideswap$, the algorithm $\fastswap$ contracts the intersection of 
  the input. Suppose we contract the intersection  
  $\bigcap_{i=1}^h T_i$ in $T_j$ and call this contracted tree 
  $\hat{T}_j$. Then $\abs{\hat{T}_j} 
  \le \abs{T_j \setminus \bigcap_{i=1}^h T_i}$.
  By Lemma~\ref{lem:set-1}, we have
  $T_j \setminus \bigcap_{i=1}^h T_i \subseteq
  \bigcup_{i=1}^{h-1} (T_i \setdiff T_{i+1}) = \bigcup_{i=1}^{h-1}
  (E_i \cup E_i')$ (note $A \setdiff B$ is the symmetric difference of the
  sets $A,B$). Thus, the size of the contracted tree is bounded 
  by the size of the implicit representation 
  $\gamma := \sum_{i=1}^{h-1}\abs{E_i}
  +\abs{E_i'}$. 
  One can write a convex combination of bases in any matroid using 
  the implicit representation, and contraction could even be implemented 
  quickly as is the case in the graphic matroid. The
  main point for improving the running time is having an implementation
  of $\merge$ that runs in time proportional to the size of the
  \emph{contracted} matroid. This is key to the speedup achieved for the 
  graphic matroid. $\fastmerge$ runs in time proportional to the 
  size of the input trees, which have been contracted to have size 
  $O(\min\{n,\gamma\})$, which yields
  a running time of $\tO(\min\{n,\gamma\})$. This speedup at every recursive
  level combined with the divide-and-conquer approach of $\fastswap$ 
  is sufficient to achieve a near-linear time implementation of $\swap$.

  Recall that as we are working with contracted trees, an edge in the 
  contracted trees might have different endpoints than it did in the initial 
  trees. The identifiers
  of edges do not change, regardless of whether the endpoints of the edge
  change due to contraction of edges in a tree. We therefore will refer to id's
  of edges throughout the algorithm $\fastswap$ to work from contracted
  edges back to edges in the initial trees. This extra 
  bookkeeping will mainly be handled implicitly.

  Contraction of the intersection of the input trees in $\fastswap$ using
  only the implicit representation is handled by the
  the algorithm $\shrinkint$ and we give the pseudocode in
  Figure~\ref{alg:shrink-int}.  Consider spanning trees
  $T_s,T_{s+1},\ldots, T_t$. The input to $\shrinkint$ is $T_s$ and a sequence
  of sets of edges $\{(E_i,E_i')\}_{i=s}^{t-1}$ such that
  $T_{i+1} = T_i - E_i + E_i'$ and $E_i \cap E_i' = \emptyset$ for
  $i \in \{s,s+1,\ldots, t-1\}$. Then $\shrinkint$ contracts
  $\bigcap_{i=s}^t T_i$ in $T_s$. The intersection is computed using
  Lemma~\ref{lem:set-2}, which shows that
  $\bigcap_{i=s}^t T_i = T_s \setminus \bigcup_{i=s}^{t-1} (E_i \cup
  E_i')$.  Let $\hat{T}_s$ be $T_s$ with $\bigcap_{i=s}^t T_i$
  contracted. The vertex set of $\hat{T}_s$ is different than the
  vertex set of $T_s$. Then as the sets of edges $E_i$ and $E_i'$ for
  all $i$ are defined on the vertices in $T_s$, we need to map the
  endpoints of edges in $E_i$ to the new vertex set of
  $\hat{T}_s$. Using the data structure presented in
  Lemma~\ref{lem:forest}, this is achieved using the operations
  $\represented$ and $\origedge$, which handle mapping the edge to its
  new endpoints and maintaining the edge identifier, respectively.
  
  The following lemma shows that $\shrinkint$ indeed contracts the
  intersection of the trees via the implicit representation.
  The proof is in the appendix (see Section~\ref{app:proofs}).

  \begin{figure}[t]
    \centering
    \fbox{\parbox{0.55\linewidth}
    {
      $\shrinkint(T, \{(E_i,E_i')\}_{i=s}^{t-1})$

      \begin{algorithmic}
        \STATE $\hat{T} \gets T.\copyy()$
        \FOR{$e \in T \setminus \bigcup_{i=s}^{t-1}(E_i \cup E_i')$}
          \STATE $uv \gets \hat{T}.\represented(e)$
          \STATE assume
            $\deg_{\hat{T}}(u) \le \deg_{\hat{T}}(v)$
            (otherwise rename)
          \STATE $\hat{T}.\contract(uv, u)$
        \ENDFOR
        \STATE let $id(e)$ denote the unique identifier of an edge $e$
        \FOR{$i$ from $s$ to $t-1$}
          \STATE $\hat{E}_i \gets \bigcup_{e \in E_i} (\hat{T}.\represented(e),id(e))$
          \STATE $\hat{E}_i' \gets \bigcup_{e \in E_i'} (\hat{T}.\represented(e),id(e))$
        \ENDFOR
        \RETURN{$(\hat{T}, \{(\hat{E}_i,\hat{E}_i')\}_{i=s}^{t-1})$}
      \end{algorithmic}
    }}
    \caption{A subroutine used in our fast implementation $\fastswap$ of randomized
    swap rounding; used to implicitly contract the trees of the given convex combination.}
    \label{alg:shrink-int}
  \end{figure}
  \begin{lemma}\label{lem:shrink-int}
    Let $T_1, \ldots, T_h$ be spanning trees and let $\{(E_i,E_i')\}_{i=1}^{h-1}$
    be a sequence of edge sets
    defined on the same vertex set such that $T_{i+1} = T_i - E_i + E_i'$
    and $E_i \cap E_i' = \emptyset$ for all
    $i \in [h-1]$. Contract $\bigcap_{i=1}^h T_i$ in $T_1,\ldots,T_h$
    to obtain $\hat{T}_1,\ldots,\hat{T}_h$, respectively.

    Let $n_{T_1} = \abs{T_1}$ and
    $\gamma = \sum_{i=1}^{h-1} (\abs{E_i} + \abs{E_i'})$.
    Then $\shrinkint(T_1,\{(E_i,E_i')\}_{i=1}^{h-1})$ runs in time
    $\tO(n_{T_1} + \gamma)$
    and outputs $(\hat{T}, \{(\hat{E}_i, \hat{E}_i')\}_{i=1}^{h-1})$ where
    $\hat{T} = \hat{T}_1$ and $\hat{T}_{i+1} = \hat{T}_i - \hat{E}_i + \hat{E}_i'$
    for all $i \in [h-1]$. Moreover, $\abs{E_i} = \abs{\hat{E}_i}$
    and $\abs{E_i'} = \abs{\hat{E}_i'}$ for all $i \in [h-1]$ and
    $\abs{\hat{T}} \le \min\{n_{T_1},\gamma\}$.
  \end{lemma}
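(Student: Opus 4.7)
The plan is to verify in turn the four assertions: that $\hat{T} = \hat{T}_1$, that $\hat{T}_{i+1} = \hat{T}_i - \hat{E}_i + \hat{E}_i'$, that $|E_i| = |\hat{E}_i|$ and $|E_i'| = |\hat{E}_i'|$, that $|\hat{T}| \le \min\{n_{T_1}, \gamma\}$, and finally to bound the running time. The core identification driving correctness is Lemma~\ref{lem:set-2}, which gives $\bigcap_{i=1}^{h} T_i = T_1 \setminus \bigcup_{i=1}^{h-1}(E_i \cup E_i')$. Thus the loop of $\shrinkint$ contracts precisely the edges of $\bigcap_{i=1}^{h} T_i$ in the copy $\hat{T}$ of $T_1$, producing $\hat{T}_1$ as claimed.

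For the recurrence $\hat{T}_{i+1} = \hat{T}_i - \hat{E}_i + \hat{E}_i'$, I would first observe that contracted edges are disjoint from every $E_i$ and $E_i'$ by construction (the loop only contracts edges outside $\bigcup_j (E_j \cup E_j')$). Hence $E_i$ and $E_i'$ survive in the contracted tree, albeit with possibly new endpoints. The operation $\represented$ of the forest data structure (Lemma~\ref{lem:forest}) returns the current endpoints of the edge identified by its unique identifier, so $\hat{E}_i$ and $\hat{E}_i'$ are exactly $E_i$ and $E_i'$ relabeled to the contracted vertex set. Because the forest data structure preserves edge identities under contraction, no two edges with distinct identifiers are merged into one entry, giving $|\hat{E}_i| = |E_i|$ and $|\hat{E}_i'| = |E_i'|$; furthermore the set operation $T_i - E_i + E_i'$ is carried through verbatim after relabeling, yielding $\hat{T}_{i+1} = \hat{T}_i - \hat{E}_i + \hat{E}_i'$.

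For the size bound, $\hat{T}$ contains exactly the edges of $T_1 \setminus \bigcap_{i=1}^{h} T_i$, which by Lemma~\ref{lem:set-1} is contained in $\bigcup_{i=1}^{h-1}(E_i \setdiff E_{i+1}) \subseteq \bigcup_{i=1}^{h-1}(E_i \cup E_i')$. Taking cardinalities gives $|\hat{T}| \le \gamma$, and trivially $|\hat{T}| \le |T_1| = n_{T_1}$.

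The running-time analysis is where I expect the main obstacle. The copy and initial traversal of $T$ cost $O(n_{T_1})$. Each lookup $\hat{T}.\represented(e)$ and each $\hat{E}_i, \hat{E}_i'$ construction costs $\tO(1)$, summing to $\tO(\gamma)$ for the second phase. The delicate point is the total cost of the contractions: each call $\hat{T}.\contract(uv,u)$ is charged proportional to $\deg_{\hat{T}}(u)$, and by always choosing the lower-degree endpoint this is $O(\min\{\deg(u), \deg(v)\})$. This is exactly the hypothesis of the standard smaller-degree charging argument (e.g.\ via a potential-function bound on edge re-routings), which gives total contraction cost $\tO(n_{T_1})$. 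Combining the three phases yields the overall $\tO(n_{T_1} + \gamma)$ bound. The subtle bookkeeping — tracking identifiers through contractions so that $\origedge$ and $\represented$ behave consistently — is delegated to the forest data structure of Lemma~\ref{lem:forest}, so we only need to invoke its stated guarantees.
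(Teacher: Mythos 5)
Your proof is correct and follows essentially the same route as the paper's: Lemma~\ref{lem:set-2} identifies the contracted edges as exactly $\bigcap_{i=1}^h T_i$, your observation that the sets $E_i,E_i'$ survive contraction with relabeled endpoints is precisely the content of the paper's Lemma~\ref{lem:diff-cont} (applied inductively), and the ``standard smaller-degree charging argument'' you invoke for the total contraction cost is exactly the paper's Lemma~\ref{lem:cont-run}. (One small slip: in the size bound the union should be $\bigcup_{i=1}^{h-1}(T_i \setdiff T_{i+1})$, not $\bigcup_{i=1}^{h-1}(E_i \setdiff E_{i+1})$; this set equals $\bigcup_{i=1}^{h-1}(E_i \cup E_i')$ and the bound goes through as you intend.)
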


  \begin{figure}[t]
    \centering
    \fbox{\parbox{0.65\linewidth}
    {
      $\fastswap(T, \{(E_i,E_i')\}_{i=s}^{t-1},\{\delta_i\}_{i=s}^t)$
      \begin{algorithmic}
        \IF{$s=t$}
          \RETURN{$T$}
        \ENDIF
        \STATE $\ell \leftarrow \max\left\{\ell' \in \{s, s+1,\ldots, t\} :
          \sum_{i=s}^{\ell'-1}\abs{E_i} \le \frac{1}{2} 
          \sum_{i=s}^{t-1} \abs{E_i}\right\}$
        \STATE $(\hat{T}, \{(\hat{E}_i, \hat{E}_i')\}_{i=s}^{t-1})
          \gets\shrinkint(T, \{(E_i,E_i')\}_{i=s}^{t-1})$
        \STATE $\hat{E}_C \gets E(\hat{T})$
        \FOR{$i$ from $s$ to $\ell$}
          \STATE $\hat{E}_C \gets \hat{E}_C - \hat{E}_i + \hat{E}_i'$
        \ENDFOR
        \STATE $\hat{T}_C \gets \init(V(\hat{T}), \hat{E}_C)$
        \STATE $\hat{T}_L \gets
          \fastswap(\hat{T}, \{(\hat{E}_i,\hat{E}_i')\}_{i=s}^{\ell-1}, 
          \{\delta_i\}_{i=s}^\ell)$
        \STATE $\hat{T}_R\gets\fastswap(\hat{T}_C,
          \{(\hat{E}_i,\hat{E}_i')\}_{i=\ell+1}^{t-1}, \{\delta_i\}_{i=\ell+1}^t)$
        \STATE $\hat{T}_M \gets \fastmerge(\sum_{i=s}^\ell \delta_i, \hat{T}_L,
          \sum_{i=\ell+1}^t \delta_i, \hat{T}_R)$
        \RETURN{$\hat{T}_M \cup
          (T \setminus \bigcup_{i=s}^{t-1} (E_i \cup E_i'))$}
      \end{algorithmic}
    }}
    \caption{A fast implementation of randomized swap rounding from~\cite{cvz-10}.}
    \label{alg:fast-swap}
  \end{figure}

  We use the algorithm in Theorem~\ref{thm:fast-merge} for
  $\merge$. In $\fastswap$, the two trees that are merged $\hat{T}_L$
  and $\hat{T}_R$ are the return values of the two recursive calls to
  $\fastswap$. The algorithm at this point has explicit access to the
  adjacency lists of both $\hat{T}_L$ and $\hat{T}_R$, which are used
  as input to the algorithm $\fastmerge$. The output of $\fastmerge$
  will be the outcome of merging the two trees $\hat{T}_L$ and
  $\hat{T}_R$, which are edges of potentially contracted trees from
  the original convex combination.  We can use the operation
  $\origedge$ of the forest data structure of Lemma~\ref{lem:forest}
  for $\hat{T}_L$ and $\hat{T}_R$ to obtain the edges from the trees
  of the original convex combination. This extra bookkeeping will be
  handled implicitly.
  
  We next prove that $\fastswap$ is implementing $\swap$ and that
  it runs in near-linear time.
  \begin{lemma}\label{lem:fast-swap-run}
    Let $\sum_{i=1}^h \delta_i \1_{T_i}$ be a convex combination of spanning trees
    of the graph $G = (V,E)$ where $n = \abs{V}$.
    Let $T$ be a spanning tree such that $T= T_1$ and let $\{(E_i,E_i')\}_{i=1}^{h-1}$
    be a sequence of sets of edges such that
    $T_{i+1} = T_i - E_i + E_i'$ and $E_i \cap E_i' = \emptyset$ for all $i \in [h-1]$.

    Then $\fastswap(T, \{(E_i,E_i')\}_{i=1}^{h-1},
    \{\delta_i\}_{i=1}^h) \eqdist \swap(\delta_1,T_1,\ldots,\delta_h,T_h)$
    and the call to $\fastswap$ runs in $\tO(n_T + \gamma)$ time where
    $n_T = \abs{T}$ and
    $\gamma = \sum_{i=1}^{h-1} (\abs{E_i} + \abs{E_i'})$.
  \end{lemma}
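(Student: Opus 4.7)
The plan is to prove correctness and running time separately, both by induction on $h$ following the structure of the proof of Lemma~\ref{lem:divide-swap} for $\divideswap$, adapted to handle the fact that $\fastswap$ additionally contracts the common intersection $\bigcap_{i} T_i$ at every recursive level and invokes $\fastmerge$ in place of $\merge$.

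For correctness, I would argue inductively that $\fastswap$ is a valid implementation of $\divideswap$ once we account for contraction, and then invoke Lemma~\ref{lem:divide-swap} to finish. The base case $s=t$ is immediate since both algorithms return $T$. In the inductive step, Lemma~\ref{lem:shrink-int} tells us that $\shrinkint$ produces $\hat{T} = \hat{T}_s$ and modified edge sets $(\hat{E}_i,\hat{E}_i')$ such that $\hat{T}_{i+1} = \hat{T}_i - \hat{E}_i + \hat{E}_i'$, where $\hat{T}_i$ is $T_i$ with $\bigcap_{j=s}^{t} T_j$ contracted. Because the contracted elements are common to every $T_j$ in the subproblem, no valid exchange pair involves them, so swap-rounding over $T_s,\ldots,T_t$ is equidistributed with swap-rounding over $\hat{T}_s,\ldots,\hat{T}_t$ in the contracted matroid $\cM / (\bigcap_{j=s}^t T_j)$, then reinstating the contracted edges (this is the fact that the intersection is always preserved by $\merge$; it can be formalized by a ``swap-order'' lemma as alluded to in the commented-out proof sketch). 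The inductive hypothesis applies to both recursive calls once we verify that $\hat{T}_C = \hat{T}_{\ell+1}$ (which follows from the construction via the implicit updates) and that the edge partitions passed to the two recursive subproblems have total size $\tO(\gamma_{\text{current}})$. Finally Theorem~\ref{thm:fast-merge} gives $\fastmerge \eqdist \merge$ on the contracted trees, and Lemma~\ref{lem:cont-matroid-base} justifies that unioning the returned base with $T \setminus \bigcup_{i=s}^{t-1}(E_i \cup E_i')$ recovers a base of the original matroid; taken together this matches the definition of $\divideswap$, whence the $\eqdist \swap$ conclusion via Lemma~\ref{lem:divide-swap}.

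For the running time, let $R(n_T, \gamma)$ denote the time to run $\fastswap$ on a subproblem with tree of size $n_T$ and implicit-representation size $\gamma = \sum_i (|E_i|+|E_i'|)$. Inside one call, $\shrinkint$ costs $\tO(n_T + \gamma)$ by Lemma~\ref{lem:shrink-int}, and it returns a contracted tree $\hat{T}$ with $|\hat{T}| \le \min\{n_T,\gamma\}$. Constructing $\hat{T}_C$ costs $\tO(|\hat{T}| + \sum_{i=s}^{\ell}(|E_i|+|E_i'|)) = \tO(\gamma)$. The two recursive calls are on trees of size $\le \min\{n_T,\gamma\}$ with implicit-representation sizes that sum to $\gamma$; the choice of $\ell$ ensures each piece has edge-mass at most roughly $\gamma/2$. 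The call to $\fastmerge$ operates on trees of size $\tO(\min\{n_T,\gamma\})$ and by Theorem~\ref{thm:fast-merge} costs $\tO(\min\{n_T,\gamma\})$. The dominant recurrence at interior levels is therefore
\[
  R(\gamma) \le 2\,R(\gamma/2) + \tO(\gamma),
\]
which solves to $R(\gamma) = \tO(\gamma)$; combined with the $\tO(n_T)$ incurred by $\shrinkint$ at the root, the total is $\tO(n_T + \gamma)$.

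The main obstacle is the correctness argument for the contraction step: precisely, showing that collapsing $\bigcap_j T_j$ before recursing produces a distribution identical to $\swap$ on the original bases followed by observation that the intersection is retained. This requires a careful coupling argument, essentially the ``swap-order'' lemmas invoked in the commented-out version of the proof, whose content is that (i) $\merge$ never swaps out an element common to both bases, so swap-rounding commutes with contraction of a common subset, and (ii) the divide-and-conquer partition into left/right halves commutes with swap-rounding in distribution. With those tools in hand, the inductive step is a straightforward unwinding; without them it is easy to believe the claim but tedious to formalize. I expect that the bookkeeping for edge identifiers (inherited from the forest data structure of Lemma~\ref{lem:forest}) is the other place where care is needed, but it does not affect the distribution, only the low-level implementation.
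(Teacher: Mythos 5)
Your proposal is correct and follows essentially the same route as the paper: correctness is obtained by observing that $\fastswap$ implements $\divideswap$ (with $\fastmerge \eqdist \merge$ via Theorem~\ref{thm:fast-merge} and the contraction step justified by Lemma~\ref{lem:shrink-int}) and then citing Lemma~\ref{lem:divide-swap}, while the running time follows from the recurrence $R(n_T,\gamma) \le 2R(\min\{n_T,\gamma\},\gamma/2) + \tO(n_T+\gamma)$. The only difference is presentational: the coupling/contraction argument you flag as the main obstacle is exactly the content the paper has already delegated to the proof of Lemma~\ref{lem:divide-swap}, so it need not be re-established here.
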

  \begin{proof}
    One can immediately see that
    $\fastswap$ is an implementation of $\divideswap$. There are some 
    bookkeeping details that are left out of the implementation, such as maintaining
    the set of edges returned by $\fastmerge$, but these are easily handled.
    Lemma~\ref{lem:divide-swap} shows that
    $\divideswap(\delta_1,T_1,\ldots,\delta_h,T_h) \eqdist
    \swap(\delta_1,T_1,\ldots,\delta_h,T_h)$, implying
    $\fastswap(T, \{(E_i,E_i')\}_{i=1}^{h-1},
    \{\delta_i\}_{i=1}^h) \eqdist \swap(\delta_1,T_1,\ldots,\delta_h,T_h)$.

    Now we prove the running time bound.  Let $R(n_T, \gamma)$ denote
    the running time of the call to
    $\fastswap(T,\{(E_i,E_i')\}_{i=1}^{h-1},\{\delta_i\}_{i=1}^h)$.
    By Lemma~\ref{lem:shrink-int}, the running time of the call to
    $\shrinkint$ is $\tO(n_T + \gamma)$.  Let
    $(\hat{T}, \{(\hat{E}_i,\hat{E}_i')\}_{i=1}^{h-1})$ be the output
    of $\shrinkint$. Lemma~\ref{lem:shrink-int} also guarantees that
    $\gamma = \sum_{i=1}^{h-1} \abs{\hat{E}_i} + \abs{\hat{E}_i'}$ and
    $\abs{\hat{T}} \le \min\{n_T,\gamma\}$. Then by
    Lemma~\ref{lem:forest}, constructing $\hat{T}_C$ requires
    $\tO(n_{\hat{T}} + \gamma)$ time.  As the size of $\hat{T}_L$ and
    $\hat{T}_R$ is the same as $\hat{T}$ and $\hat{T}_C$, 
    the call to $\fastmerge$ runs in $\tO(n_T + \gamma)$ time by
    Theorem~\ref{thm:fast-merge}. The time it takes to compute the
    returned tree is $\tO(n_T + \gamma)$ as we have enough time to scan
    all of $T$ and $\bigcup_{i=1}^{h-1}(E_i \cup E_i')$.  So the total
    time excluding the recursive calls is $(n_T + \gamma)\cdot \alpha$ for
    where $\alpha = O(\log^c (n_T + \gamma))$ for some fixed integer $c$.

    As for the recursive calls, first define $\gamma(s,t) := \sum_{i=s}^t (\abs{E_i}
    + \abs{E_i'})$. Then the running time of the first recursive call is
    $R(n_{\hat{T}}, \gamma(1,\ell-1))$ and the second recursive call
    is $R(n_{\hat{T}_C},\gamma(\ell+1,h-1))$.By choice of $\ell$, we always have
    that $\gamma(1,\ell-1) \le \frac{\gamma}{2}$. As $\ell$ is the largest
    integer such that $\gamma(1,\ell-1) \le \frac{\gamma}{2}$, then
    $\gamma(1,\ell) > \frac{\gamma}{2}$. Therefore, we have
    $\gamma(\ell+1, h-1) = \gamma - \gamma(1,\ell) < \frac{\gamma}{2}$. Combining
    this with the fact that $\shrinkint$ guarantees that
    $\abs{\hat{T}} \le \min\{n_T,\gamma\}$ and $\abs{\hat{T}_C} \le \min\{n_T,\gamma\}$, we have
    \[
      R(n_T,\gamma)
      \le
      2R(\min\{n_T, \gamma\}, \gamma/2) + \alpha \cdot (n_T + \gamma).
    \]
    Note that $R(n_T,\gamma) = O(1)$ when $n_T = O(1)$ and $\gamma = O(1)$.

    We claim that $R(a,b) \le \alpha \beta \cdot (a + 8b\log b) $ is a valid
    solution to this recurrence for some sufficiently large but fixed
    constant $\beta \ge 1$. By choosing $\beta$ sufficiently large
    it is clear that it holds for the base case. To prove the
    inductive step we see the following:
    \[
      R(a,b)
      \le
      2R(\min\{a,b\}, b/2) + \alpha \cdot (a + b)
      \le
      2[\alpha \beta \cdot (\min\{a,b\}  + 4b\log(b/2))]
      + \alpha \cdot (a + b).
    \]
    Hence we need to verify that
    \begin{equation}\label{eq:rec}
      2[\alpha \beta (\min\{a,b\} + 4b\log(b/2))] + \alpha\cdot (a + b)
      \le
      \alpha\beta \cdot (a + 8b \log b).
    \end{equation}
    Since $\beta \ge 1$, rearranging, it suffices to verify that
    \[
      2\min\{a,b\} + 8b \log (b/2) + b \le 8b\log b.
    \]
    As $8b\log b - 8b\log(b/2) = 8b$ and $2\min\{a,b\} + b\le3b$, this
    proves~(\ref{eq:rec}) and therefore
    $R(n_T, \gamma) \le \alpha \beta (n_T + 8\gamma \log \gamma) = \tO(n_T + \gamma)$.
    This concludes the proof.
  \end{proof}

\paragraph{Proof of Theorem~\ref{thm:intro-swap-round}:} Follows by
combining Theorem~\ref{thm:sparse-convex-decomp} (and remarks after the
theorem statement) and Theorem~\ref{thm:fast-swap}.


\section{Sparsification via the LP Solution}
\label{sec:sparsification}
\newcommand{\bx}{x} 
\newcommand{\by}{y}
\newcommand{\bz}{z}

Let $G=(V,E)$ be a graph on $n$ nodes and
$m$ edges and let $\bx$ be a point in $\ST(G)$. In this section we
show how one can, via random sampling, obtain a sparse point
$\bx' \in \ST(G)$ from $\bx$. The random sampling approximately
preserves linear constraints and thus one can use this technique to
obtain sparse LP solutions to the packing problems involving
spanning tree (and more generally matroid) constraints.  The sampling
and analysis rely on Karger's well-known work on random sampling for
packing disjoint bases in matroids. We paraphrase the relevant
theorem.

\begin{theorem}[Corollary 4.12 from \cite{k-98}]
  \label{thm:matroid-sampling}
  Let $\cM$ be a matroid with $m$ elements and non-negative
  integer capacities on elements such
  that $\cM$ has $k$ disjoint bases. Suppose each copy of
  a capacitated element $e$ is sampled independently with
  probability $p \ge 18 (\ln m)/(k\eps^2)$ yielding a matroid
  $\cM(p)$. Then with high probability the  number of
  disjoint bases in $\cM(p)$ is in $[(1-\eps)pk, (1+\eps)pk]$.
\end{theorem}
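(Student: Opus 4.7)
The plan is to combine Edmonds' matroid base packing theorem with a Chernoff bound and a counting argument on flats. By the matroid base packing theorem, the maximum number of disjoint bases in a capacitated matroid $\cM$ equals
\[
  \min_{F} \left\lfloor \frac{u(N \setminus F)}{r_\cM(N) - r_\cM(F)} \right\rfloor,
\]
where $F$ ranges over proper flats and $u$ is the capacity vector. Since $\cM$ has $k$ disjoint bases, $u(N \setminus F) \ge k(r_\cM(N) - r_\cM(F))$ for every proper flat $F$, with equality attained at some tight flat $F^*$; write $r := r_\cM(N)$ for brevity.

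For a fixed proper flat $F$, the sampled total capacity $u_p(N \setminus F)$ is a sum of independent $\{0,1\}$ random variables with mean $p u(N \setminus F) \ge p k (r - r_\cM(F)) \ge 18 (\ln m)(r - r_\cM(F))/\eps^2$. A multiplicative Chernoff bound yields
\[
  \Pr\!\left[\, |u_p(N \setminus F) - p u(N \setminus F)| > \eps\, p u(N \setminus F) \,\right]
  \le 2 \exp\!\left(-\frac{\eps^2 p u(N \setminus F)}{3}\right)
  \le 2 m^{-6(r - r_\cM(F))}.
\]

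The crux is the union bound, for which I would use the following counting fact. Fix any basis $B$ of $\cM$; for each flat $F$ of corank $j := r - r_\cM(F)$, the intersection $B \cap F$ has rank $r_\cM(F)$ inside $F$ and therefore spans $F$, so $F = \operatorname{cl}(B \cap F)$. Hence the map $F \mapsto B \cap F$ is injective on corank-$j$ flats, and there are at most $\binom{r}{j} \le m^j$ such flats. Summing failure probabilities yields $\sum_{j=1}^{r} m^j \cdot 2 m^{-6j} = O(m^{-5})$, so with high probability every proper flat $F$ satisfies $u_p(N \setminus F) \in [(1-\eps) p u(N \setminus F),\, (1+\eps) p u(N \setminus F)]$ simultaneously. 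Applying the lower bound uniformly over flats, the min-max formula gives that $\cM(p)$ has at least $(1-\eps) p k$ disjoint bases; applying the upper bound to the tight flat $F^*$ gives that the packing number of $\cM(p)$ is at most $(1+\eps) p k$.

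The main obstacle is the counting bound on corank-$j$ flats, without which a naive union bound over $2^m$ subsets would be useless. The injectivity of $F \mapsto B \cap F$ rests on the observation that any rank-$s$ subset of a rank-$s$ flat already spans that flat. Everything else is standard Chernoff plus algebra; the capacitated case reduces to the unit-capacity setting by treating each unit of capacity as a parallel copy, which leaves the rank function unchanged and preserves all the estimates above.
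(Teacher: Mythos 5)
The paper does not actually prove this statement --- it is imported verbatim as Corollary 4.12 of Karger \cite{k-98} --- so the only question is whether your argument stands on its own. It does not: the counting step at the heart of your union bound is false. You claim that for a fixed basis $B$ and any flat $F$ the set $B \cap F$ spans $F$, hence $F \mapsto B\cap F$ is injective on corank-$j$ flats and there are at most $\binom{r}{j}$ of them. A basis of $\cM$ need not meet a flat in a spanning set of that flat: in $U_{2,3}$ on $\{a,b,c\}$ with $B=\{a,b\}$, the rank-one flat $\{c\}$ meets $B$ in the empty set. Worse, for the application in this paper (graphic matroids), the corank-$1$ flats of $K_n$ are the edge sets induced by $2$-part vertex partitions, of which there are $2^{n-1}-1 \gg m$. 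The correct generic bound is that the number of corank-$j$ flats is at most $m^{r-j}$ (each is the closure of an independent set of size $r-j$), which is largest exactly where your per-flat failure probability $m^{-6j}$ is weakest; the sum $\sum_j m^{r-j}\cdot m^{-6j}$ is not small, so the union bound collapses.

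This is precisely the difficulty Karger's proof exists to overcome, and it cannot be repaired by a sharper absolute count of flats. The resolution is to stratify the union bound not by corank but by the \emph{value} $u(N\setminus F)/(r - r_\cM(F))$ of the quotient: the hard combinatorial content of \cite{k-98} is a matroid analogue of the cut-counting theorem, asserting that the number of quotients of value at most $\alpha$ times the minimum is $m^{O(\alpha)}$ (proved by analyzing a contraction/greedy process, not by an injection into subsets of a basis). Quotients of larger value then fail the Chernoff bound with probability $m^{-c\alpha}$ for a constant $c$ large enough (coming from the $18/\eps^2$ oversampling) to beat the $m^{O(\alpha)}$ count, and summing over dyadic ranges of $\alpha$ closes the argument. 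The rest of your write-up --- Edmonds' base-packing formula restricted to flats, the Chernoff estimate, the reduction of capacities to parallel elements, and the upper bound via the tight flat --- is fine; the missing ingredient is the quotient-counting theorem, which is the real theorem of \cite{k-98} rather than a routine lemma.
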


We now prove Theorem~\ref{thm:intro-sparsification}. We restate
it here for the reader's convenience. 
\begin{manualtheorem}{1.4}
  Let $\bx \in \ST(G)$ be a rational vector such that $A\bx \le b$ for
  a matrix $A \in [0,1]^{r \times m}$ and $b \in [1,\infty)^r$. Consider a random
  subgraph $G'=(V,E')$ of $G$ obtained by picking each edge $e \in G$
  with probability
  $\alpha_e := \min\{1,\frac{36\log (r+m)}{\eps^2}\cdot x_e\}$.  Then
  with high probability the following hold: (i)
  $\abs{E'} = O(n \ln (r+m)/\eps^2)$ (ii) there exists a fractional
  solution $\bz \in \ST(G)$ in the support of $G'$ such that $A\bz \le (1+3\eps)b$.
\end{manualtheorem}

\begin{proof}
  We will consider two random subgraphs of $G$ and then
  relate them.

  Let $\beta = \frac{18 \log (m+r)}{\eps^2}$.  Since $\bx$ is rational
  there exists a positive integer $L \ge \beta$ such that $Lx_e$ is an
  integer for all $e \in E$. Let $\hat{G}=(V,\hat{E})$ be a multigraph
  obtained from the graph $G=(V,E)$ as follows: we replace each edge
  $e \in G$ with $Lx_e$ parallel edges. Let $P(\hat{G})$ be the
  maximum number of edge-disjoint spanning trees in $\hat{G}$. Since
  $\bx \in \ST(G)$ and $L\bx$ is an integer vector, by the integer
  decomposition property of the matroid base polytope
  \cite{Schrijver-book}, $P(\hat{G}) = L$. Let $H$ be the graph
  resulting from sampling each edge in $\hat{G}$ independently
  with probability $p$, where $pL = \beta$.  By
  Theorem~\ref{thm:matroid-sampling}, with high probability, we have
  $P(H) \ge (1-\eps) \beta$.

  For all $e \in E(G)$, let $\ell(e)$ be the number of parallel copies
  of $e$ in the random graph $H$ and define $y_e := \min\{1,
  \frac{1+\eps}{\beta} \ell(e)\}$.  Since $P(H) \ge (1-\eps)\beta$. we
  see that ${\bf y}$ dominates a convex combination of spanning trees
  of $G$, and hence ${\bf y}$ is in the dominant of $\ST(G)$. This
  implies that there is a ${\bf z} \in \ST(G)$ such that ${\bf z} \le
  {\bf y}$.

  Consider any row $i$ of $A$ where $1 \le i \le r$. We have
  $\sum_{e} A_{i,e} x_e \le b_i$.  Consider the random variable
  $Q = \sum_e A_{i,e} \ell(e)$.  We have
  $\E[Q] = \beta \sum_{e} A_{i,e} x_e \le \beta b_i$ and moreover $Q$
  is the sum of binary random variables (recall that we pick each copy
  of $e$ in $H$ independently with probability $p$).  Hence we can
  apply Chernoff-Hoeffding bounds to $Q$ (note that
  $A \in [0,1]^{r \times m}$) to conclude that
  $\Pr[Q \ge (1+\eps) \beta b_i] \le \exp(-\eps^2 \beta b_i/2)$. Since
  $b_i \ge 1$ and $\beta \ge 18 \ln (m+r)/\eps^2$ we have that this
  probability is inverse polynomial in $r,m$.  Note that
  $\sum_{e}A_{i,e} y_e \le \frac{1+\eps}{\beta} \sum_e A_{i,e}
  \ell(e)$. Hence
  $\sum_e A_{i,e} y_e \le (1+\eps)^2 b_i \le (1+3\eps)b$ with high
  probability.  By union bound over all $r$ constraints we conclude
  that, with high probability,
  $A {\bf z} \le A{\bf y} \le (1+3\eps) b$ holds.

  Let $H'$ be a random subgraph of $G$ consisting of all edges $e$
  such that $\ell(e) > 0$. We can view $H'$ as being obtained via a
  process that selects each edge $e \in G$ independently with
  probability $\rho_e$. The argument above shows that with high
  probability there is ${\bf z} \in \ST(G)$ such that
  $A{\bf z} \le (1+3\eps)b$ and ${\bf z}$ is in the support of $H'$.
  Now consider the process of selecting each edge $e$ independently
  with probability $\alpha_e$. If we show that $\alpha_e \ge \rho_e$ for
  every $e$ then, by stochastic dominance, it follows that with high
  probabilty the resulting random graph will also contain a ${\bf z}$
  with the desired properties which is what we desire to prove.
  Note that the probability an edge $e$ being in $H'$ is
  $\rho_e = 1 - (1 - p)^{Lx_e}$. If $x_e > 1/2$ we have  $\alpha_e = 1
  \ge \rho_e$ so we assume $x_e \le 1/2$.
    \[
      \rho_e = 1 - (1-p)^{Lx_e}
      \le
      1 - e^{-2Lx_ep}
      \le
      1 - (1 - 2Lx_ep)
      =
      2\beta x_e
      \le
      \alpha_e,
    \]
    where the first inequality holds because $1 - x \ge e^{-2x}$ for
    $x \in [0,\frac{1}{2}]$.

    All that remains is to show that the number of edges produced by
    selecting each edge $e$ independently with $\alpha_e$ is
    $O(n \ln (r+m)/\eps^2)$. This holds with high probability via
    Chernoff-Hoeffding bounds since $\sum_e x_e = n-1$.

    By union bound we have that with high probability the random process 
    produces a graph that has $O(n \ln (r+m)/\eps^2)$ edges
    and supports a fractional solution ${\bf z}  \in \ST(G)$ such
    that $A{\bf z} \le (1+3\eps)b$.
  \end{proof}

  \begin{remark}
    For problems that also involve costs, we have a
    fractional solution ${\bf x}$ and an objective $\sum_{e} c_e x_e$.
    Without loss of generality we can assume that $c_e \in [0,1]$ for
    all $e$. The preceding proof shows that the sparse graph obtained by sampling
    supports a fractional solution ${\bf z}$ such that
    $\E[\sum_e c_e z_e] \le (1+\eps)\sum_e c_e x_e$. Further, $\sum_e c_e z_e \le (1+3\eps)\sum_e c_e x_e$ holds with high probability as long as
    $\max_e c_e \le \sum_e c_e x_e$.  This condition may not hold in general but
    can be typically guaranteed in the overall algorithm by guessing
    the largest cost edge in an optimum integer solution.
  \end{remark}
  
\begin{remark}
  The proof in the preceding theorem also shows that with high probability the graph $G'$ is connected and satisfies the property that $A\1_{E'} \le O(\log n/\eps^2) b$. Thus any spanning tree of $G'$ satisfies the constraints to a multiplicative $O(\log n)$-factor by fixing $\epsilon$ to a small constant. This is weaker than the guarantee provided by swap-rounding.
\end{remark}

\section{Fast approximation scheme for solving the LP
  relaxation}\label{sec:solve}
In this section we describe a fast approximation scheme to solve the
underlying LP relaxation for \cst and prove
Theorem~\ref{thm:intro-lp-solve}. We recall the LP for \cst.

  \begin{equation}\label{pr:compact}\tag{$\cP$}
    \begin{array}{l l l}
      \min &  \sum_{e \in E} c_e y_e &\\
      \text{subject to} & Ay \le b &\\
      &  y \in \ST(G) &
    \end{array}
  \end{equation}

  Note that even the feasibility problem (whether there exists
  $y \in \ST(G)$ such that $Ay \le b$) is interesting and
  important. We will mainly focus on the feasibility LP and show
  how we can incorporate costs in Section~\ref{sec:costs}.  We recast
  the feasibility LP as a pure packing LP using an implicit
  formulation with an exponential number of variables. This is
  for technical convenience and to more directly apply the framework
  from \cite{cq-18}. For each
  tree $T \in \cT(G)$ we have a variable $x_T$ and we consider the
  problem of packing spanning trees.

  \begin{equation}\label{pr:main}\tag{$\cC$}
    \begin{array}{l l l}
    \text{maximize }  & \displaystyle \sum_{T \in \cT} x_T &\\
    \text{subject to}
      &\displaystyle \sum_{T \in \cT} (A\1_T)_i\cdot x_T \le b_i,
        & \forall i \in [k]\\
      & x_T \ge 0,
        &\forall T\in \cT
    \end{array}
  \end{equation}

  The following is easy to establish from the fact that $y \in \ST(G)$
  iff $y$ can be written as a convex combinaton of the characteristic
  vectors of spanning trees of $G$.
  \begin{observation}
    There exists a feasible solution $y$ to \ref{pr:compact} iff there
    exists a feasible solution $x$ to \ref{pr:main} with value at
    least $1$.  Further, if $x$ is a feasible solution to \ref{pr:main} with
    value $(1-\eps)$ there exists a solution $y$ such that $y \in
    \ST(G)$ and $Ay \le \frac{1}{1-\eps} b$.
  \end{observation}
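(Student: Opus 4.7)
The proof is a routine conversion between the compact formulation (where membership in $\ST(G)$ is captured implicitly) and the extended formulation (where spanning trees are explicit variables), via the fact that $\ST(G)$ is exactly the convex hull of $\{\1_T : T \in \cT(G)\}$. The plan is to handle the two implications separately and then extract the approximate version by carrying the same argument through with a scaling factor.

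For the forward direction, I would start with a feasible $y$ to \ref{pr:compact}, so $y \in \ST(G)$ and $Ay \le b$. By definition of $\ST(G)$, I can write $y = \sum_{T \in \cT} \lambda_T \1_T$ with $\lambda_T \ge 0$ and $\sum_T \lambda_T = 1$. Setting $x_T := \lambda_T$ gives a candidate solution to \ref{pr:main}: the objective value equals $\sum_T \lambda_T = 1$, and for each constraint $i$,
\begin{equation*}
    \sum_{T \in \cT} (A\1_T)_i \, x_T \;=\; \left( A \sum_T \lambda_T \1_T \right)_i \;=\; (Ay)_i \;\le\; b_i,
\end{equation*}
so $x$ is feasible with value $1$.

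For the reverse direction, suppose $x$ is feasible to \ref{pr:main} with $\sum_T x_T \ge 1$. Let $S = \sum_T x_T$ and set $\lambda_T := x_T / S$, so $\{\lambda_T\}$ is a probability distribution on $\cT(G)$. Define $y := \sum_T \lambda_T \1_T$, which lies in $\ST(G)$ by definition. Then
\begin{equation*}
    (Ay)_i \;=\; \sum_T \lambda_T (A\1_T)_i \;=\; \frac{1}{S}\sum_T x_T (A\1_T)_i \;\le\; \frac{b_i}{S} \;\le\; b_i,
\end{equation*}
where the last step uses $S \ge 1$. This produces a feasible $y$ for \ref{pr:compact}.

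For the approximate statement, I would repeat the reverse-direction argument with $S = 1-\eps$. Normalizing $x$ to $\lambda_T := x_T/S$ still gives a convex combination and hence $y := \sum_T \lambda_T \1_T \in \ST(G)$, and the same chain of inequalities now yields $(Ay)_i \le b_i/S = b_i/(1-\eps)$, i.e.\ $Ay \le \frac{1}{1-\eps} b$. There is no real obstacle here: the only subtlety worth flagging is that the convex-combination representation used in the forward direction need not be unique or efficiently computable, but that does not affect the statement of the observation, which is purely about existence; efficient (approximate) decomposition is handled separately by Theorem~\ref{thm:sparse-convex-decomp}.
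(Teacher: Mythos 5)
Your proof is correct and matches the paper's intent exactly: the paper states this as an unproved observation following directly from the fact that $\ST(G)$ is the convex hull of $\{\1_T : T \in \cT(G)\}$, and your two-directional decomposition/normalization argument is precisely that routine verification.
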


  \ref{pr:main} is a pure packing LP, albeit in \emph{implicit}
  form. We approximately solve the LP using MWU techniques and in
  particular we use the randomized MWU framework from \cite{cq-18}
  for positive LPs. We review the framework in the next subsection and
  subsequently show how to apply it along with other implementation
  details to achieve the concrete run times that we claim.

\subsection{Randomized MWU Framework for Positive LPs}
The exposition here follows that of~\cite{cq-18}.  Randomized MWU
serves as an LP solver for mixed packing and covering problems. We
only consider pure packing problems in this paper, and therefore we
focus the discussion around randomized MWU on its application to pure
packing problems, which are problems of the form
    \begin{equation}\label{pr:pure-pack}
      \max\{\ip{c}{x} : Ax \le \1, x \ge 0\}
    \end{equation}
    where $A \in \R_+^{k \times n}$ and $c \in \R_+^n$. The pseudocode
    for randomized MWU is given in Figure~\ref{alg:rand-pack}.

    \begin{figure}[t]
      \centering
      \fbox{\parbox{0.85\linewidth}
      {
        $\randpack(A \in \R_+^{k \times n}, c \in \R_+^n, \eps \in (0,1))$

        \begin{algorithmic}
          \STATE $w \gets \1$; $x \gets 0$; $t \gets 0$; $\eta \gets \frac{\ln k}{\eps}$
          \WHILE{$t < 1$}
            \STATE let $y \in \R_{\ge 0}^n$ such that
              $\ip{c}{y} \ge (1 - O(\eps))\max\{\ip{c}{z} : \ip{Az}{w} \le \ip{\1}{w}, z \ge 0\}$
            \STATE $\delta \displaystyle\gets \min\left\{\frac{\eps}{\eta}\cdot
              \min_{i \in [m]}\frac{1}{\ip{e_i}{Ay}}, 1- t \right\}$
            \STATE $x \gets x + \delta y$
            \STATE pick $\theta$ uniformly at random from $[0,1]$
            \STATE $L \gets \{i \in [k] : \theta \le \delta \eta \ip{e_i}{Ay} /\eps\}$
            \FOR{$i \in L$}
              \STATE $w_i \gets w_i \cdot \exp(\eps)$
            \ENDFOR
            \STATE $t \gets t + \delta$
          \ENDWHILE
          \RETURN{$x$}
        \end{algorithmic}
      }}
      \caption{Randomized MWU from~\cite{cq-18} written for pure packing
        problems.}
      \label{alg:rand-pack}
    \end{figure}

    Randomized MWU leverages the simplicity of Lagrangian relaxations
    to speed up the running time. The idea is to collapse the set of
    $k$ constraints to a single, more manageable constraint. In
    particular, a non-negative weight $w_i$ is maintained for each of the
    $k$ constraints, and the relaxation we consider is
    \begin{equation}
      \max\{\ip{c}{z} : \ip{Az}{w} \le \ip{\1}{w}, z \ge 0\}.
    \end{equation}
    We note two important points about this relaxation. The first important point is that there exists an optimal solution
    to the relaxation that is supported on a single $j \in [n]$.
    Randomized MWU takes advantage of this fact to be able to compute
    the optimal solution fast. The second point is that randomized MWU
    only requires an approximation to the relaxation, which can be
    helpful in obtaining faster running times. The LP solver we
    introduce in this paper takes advantage of this last point.

    After computing the solution $y$ to the relaxation, randomized
    MWU adds the solution $y$ scaled by $\delta$ to the solution
    $x$ that it maintains throughout the algorithm. To obtain a fast running time,
    the algorithm utilizes non-uniform step sizes, which were introduced
    in~\cite{gk-07}.

    A main difference between randomized MWU and the deterministic
    variants of MWU is the randomized weight update. The randomized
    weight update essentially computes the deterministic update in
    expectation. The efficiency gains come from the fact that the
    weight updates are all correlated via the random variable
    $\theta$. The total number of iterations is bounded by
    $O(\frac{m \log m}{\eps^2})$ with high probability.

    Note that in the pseudocode in Figure~\ref{alg:rand-pack} the key
    steps that determine the running time are the following in each
    iteration: (i) finding an (approximate) solution $y$ (ii)
    computing the step size $\delta$ and (iii) finding the indices of
    the constraints in $L$ whose weights need to be updated based on
    the choice of $\theta$. These depend on the specific problem and
    implementation. Note that \ref{pr:main} is an implicit LP with an
    exponential number of variables and hence we summarize below
    the key properties of the algorithm.

     \begin{theorem}[\cite{cq-18}]\label{thm:rand-mwu2}
       Let $A \in \R_+^{k \times n}$, $c \in \R_+^n$ which may be
       specified implicitly. Let $x^*$ be an optimal solution to the
       LP~(\ref{pr:pure-pack}). With probability
       $1 - \frac{1}{\poly(k)}$, $\randpack(A, c, \eps)$ returns a
       fractional solution $x$ such that $\ip{c}{x} \ge (1-\eps) \ip{c}{x^*}$
       and $Ax \le (1 + O(\eps))\1$ in $O(k \log k / \eps^2)$
       iterations of the while loop.  Each weight $w_i$ increases
       along integral powers of $\exp(\eps)$ from $1$ to at most
       $\exp(\ln(k) / \eps)$. The sum of the weights
       $\sum_{i=1}^k w_i$ is bounded by $O(\exp(\ln(k) / \eps))$.
    \end{theorem}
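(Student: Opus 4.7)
The plan is to follow the standard MWU analysis, adapted to the randomized update rule. At a high level, the weights $w_i$ serve as a smooth approximation to the maximum violated constraint via the potential $\Phi = \sum_i w_i$: whenever the Lagrangian constraint $\ip{Az}{w} \le \ip{\1}{w}$ is tight for the chosen $y$, the weighted sum of violations equals the total weight, so constraints that are repeatedly violated get their weights amplified. The invariant to aim for is that for every constraint $i$, the aggregate violation $(Ax)_i$ is tracked (up to constants) by $\log(w_i)/\eps$, so that bounding the maximum weight gives the feasibility guarantee.

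First I would analyze the structure of the weight updates. By definition of $\delta$, we have $\delta \eta \ip{e_i}{Ay}/\eps \le 1$ for every $i$, so the event $\theta \le \delta \eta \ip{e_i}{Ay}/\eps$ occurs with probability exactly $\delta \eta \ip{e_i}{Ay}/\eps$. When it occurs $w_i$ is multiplied by $\exp(\eps)$, and since updates are always by this fixed factor the weights indeed remain integral powers of $\exp(\eps)$. To cap the maximum weight, observe that whenever $w_i \ge \exp(\ln(k)/\eps)$ the constraint $\ip{Az}{w}\le\ip{\1}{w}$ forces $(Az)_i \le 1/w_i \cdot \ip{\1}{w}$; propagating this bound through the definition of the block solver $y$ and the choice of step size $\delta$ shows the algorithm would effectively freeze any further contribution of constraint $i$, justifying the upper bound $\exp(\ln(k)/\eps)$ and thus the bound $\sum_i w_i = O(\exp(\ln(k)/\eps))$.

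Next I would prove feasibility and optimality via the potential $\Phi_t=\sum_i w_i^{(t)}$. In expectation over $\theta$, a single iteration multiplies each $w_i$ by $1 + (\exp(\eps)-1)\cdot \delta\eta(Ay)_i/\eps \le \exp(\delta\eta (Ay)_i)$, so $\E[\Phi_{t+1}\mid \Phi_t] \le \Phi_t \cdot \exp(\delta \eta \ip{Ay}{w}/\ip{\1}{w}) \cdot (\text{small correction})$, which by the choice of $y$ satisfying $\ip{Ay}{w}\le \ip{\1}{w}$ gives $\E[\Phi_{t+1}]\le \Phi_t \exp(\delta\eta)$. Summing $\delta$'s to $1$ yields $\E[\Phi_{\mathrm{end}}]\le k\exp(\eta)$, so no individual $w_i$ can grow beyond the stated cap. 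Since $w_i$ is incremented by a factor $\exp(\eps)$ each time constraint $i$ is "hit", and in expectation $(Ax)_i = \sum_{t}\delta_t(Ay_t)_i$ is within a $(1\pm O(\eps))$ factor of $\eps/\eta$ times the number of hits, the cap on $w_i$ translates to $(Ax)_i \le (1+O(\eps))$. For the objective, $\ip{c}{x^*}$ is feasible for the relaxed Lagrangian problem at every step (any scaling of $x^*$ satisfies $\ip{Ax^*}{w} \le \ip{\1}{w}$), so the computed $y$ has $\ip{c}{y}\ge(1-O(\eps))\ip{c}{x^*}$, and integrating against $\delta$ over $t\in[0,1]$ gives $\ip{c}{x}\ge(1-\eps)\ip{c}{x^*}$.

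The hardest step, and the one I would spend most care on, is bootstrapping the expected statements into high-probability statements, and deriving the $O(k\log k/\eps^2)$ iteration bound. The iteration bound follows because each iteration either consumes a full $\eps/\eta$ fraction of the budget $\delta_i^{\max} = \eps/(\eta\max_i(Ay)_i)$ of some constraint $i$ (so only $O(\eta/\eps)=O(\log(k)/\eps^2)$ such iterations can "cap" a given constraint, and there are $k$ constraints), or else ends the algorithm via $\delta=1-t$. For the concentration step, the natural approach is to apply a martingale tail inequality (e.g., Freedman's) to the sequence of randomized weight increments, using the fact that each iteration changes $\ln \Phi_t$ by at most $O(\eps)$ and has bounded predictable quadratic variation. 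This lets the in-expectation bounds from the previous paragraph be upgraded to hold with probability $1-1/\poly(k)$, matching the theorem statement.
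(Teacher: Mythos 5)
Theorem~\ref{thm:rand-mwu2} is stated in the paper as a black-box import from \cite{cq-18}; the paper gives no proof of it, so there is no internal argument to compare yours against. Judged on its own terms, your sketch follows the standard analysis of the randomized MWU framework: the potential $\Phi=\sum_i w_i$ grows in expectation by at most $\exp(\delta\eta)$ per step because the block solution satisfies $\ip{Ay}{w}\le\ip{\1}{w}$, the step sizes sum to $1$, the accumulated load $(Ax)_i$ is proportional (up to concentration) to the number of times $w_i$ is bumped, and the cost guarantee follows since $x^*$ is feasible for every Lagrangian relaxation. This is the right skeleton.

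Two caveats. First, your justification of the weight cap via ``freezing'' is not correct as stated: a constraint with large $w_i$ is not frozen --- its weight can keep increasing --- and the cap $\exp(\ln(k)/\eps)$ instead comes from the high-probability version of the potential bound $\Phi_{\mathrm{end}}\le \poly(k)\exp(\eta)$ together with $w_i\le\Phi_{\mathrm{end}}$. (You do give this second, correct, argument in the next paragraph, so the error is redundant rather than fatal.) Second, the iteration bound is cleaner than your budget-consumption phrasing: whenever $\delta$ is set by the first term of the $\min$, the minimizing index $i^*$ satisfies $\delta\eta(Ay)_{i^*}/\eps=1$, so $w_{i^*}$ is updated with probability $1$; hence every non-terminal iteration deterministically increments at least one weight, and each weight can be incremented at most $O(\log k/\eps^2)$ times before exceeding the cap, giving $O(k\log k/\eps^2)$ iterations. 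Finally, the genuinely hard part of \cite{cq-18} is exactly the step you defer to ``Freedman's inequality'': one must establish concentration both for $\Phi$ (to cap the weights) and, separately, for each $(Ax)_i$ against the count of updates to $w_i$, with the complication that updates within an iteration are correlated through the shared $\theta$. Naming the tool is acceptable for a sketch, but the proof is not complete without carrying that step out.
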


  \subsection{Fast implementation of randomized MWU for \cst}
  Our task here is to show how the algorithm can be implemented
  efficiently.  Some of the ideas here have been implicitly present in
  earlier work, however, the precise details have not been spelled out
  and we believe that it is important to specify them. In
  Figure~\ref{alg:rand-mwu-CST}, we provide the pseudocode
  $\crossrand$ that specializes the generic randomized MWU algorithm
  to our specific problem here --- the Lagrangian relaxation oracle is
  to find an MST in $G$ with appropriate edge lengths and we discuss
  it below. Note that in our packing formulation \ref{pr:main} the
  constraint matrix is implicit and is defined via the matrix $A$ of
  \ref{pr:compact}; although this may be a bit confusing we feel that
  it was more natural to formulate \ref{pr:compact} via the use of the
  packing matrix $A$. The correctness of the algorithm and its output
  follow from Theorem~\ref{thm:rand-mwu2} and our main goal is to show
  that it can implemented in $\tilde{O}(N/\eps^2)$ time where $N$ is
  the number of non-zeroes in the matrix $A$.

  Recall that the algorithm maintains a weight $w_i$ for each of the
  $k$ non-trivial packing constraints; the weights are initialized to
  $1$ and monotonically go up. In each step the algorithm solves a
  Lagrangian relaxation which corresponds to collapsing the $k$
  constraints into a single constraint by taking the weighted linear
  combination of the constraints.

   $$\max \sum_T x_T \text{~subject to~} \sum_T x_T \sum_{i=1}^k w_i
  (A\1_T)_i  \le \sum_i w_i b_i \text{~and~} x \ge 0.$$

  It is easy to see that there exists an optimal solution to the
  Lagrangian relaxation that is supported on a single tree. The
  solution is of the form
  \[
    \frac{\ip{w}{\1}}{\sum_{i=1}^k w_i (A\1_T)_i} \cdot e_T
    \quad\text{ where }\quad
    T = \argmin_{T \in \cT} \sum_{i=1}^k w_i (A\1_T)_i.
  \]
  We can rewrite the sum $\sum_{i=1}^k w_i (A\1_T)_i$ as
  $\sum_{e \in T} \sum_{i=1}^k w_iA_{i,e}$, implying that $T$ is a
  minimum spanning tree of $G$ where the edge length of $e$ is
  $\sum_{i=1}^k w_iA_{i,e}$.

  Following Theorem~\ref{thm:rand-mwu2} we see that the overall time
  to update the weights is $O(k \log k/\eps^2)$ since each of the $k$
  weights is only updated $O(\log k/\eps^2)$ times.  The overall
  running time of the algorithm depends on three key aspects: (i)
  total time to maintain an approximate solution to the Lagrangian
  relaxation at each iteration, (ii) total time to find the list of
  weights to update in each iteration after the random choice of
  $\theta$, and (iii) total time to find the step size $\delta$ in each
  iteration. We now describe a specific implementation that leads to
  an $\tO(N/\eps^2)$ bound.

  Given weights $w_i$ for $i \in [k]$, the Lagrangian relaxation
  requires us to compute an MST according to \emph{lengths} on the
  edges that are derived from the weights on constraints.  The length
  of edge $e$, with respect to the weight vector $w$, which we denote
  by $\mstlength_w(e)$, is defined as $\sum_{i=1}^k w_i A_{i,e}$. For
  the most part we omit $w$ and use $\mstlength(e)$ to avoid
  notational overload. We maintain the edge lengths for each
  $e \in E$.  It would require too much time to compute an MST in each
  iteration from scratch. We therefore use the dynamic MST of Holm et
  al.\ \cite{hlt-01} to maintain a minimum spanning tree subject to
  edge insertions and deletions where both operations take
  $O(\log^4n)$ amortized time. When a weight $w_i$ corresponding to
  the packing constraint $i$ increases, we need to update
  $\mstlength(e)$ for all the edges $e$ in the set
  $\{e' \in E : A_{i,e'} > 0\}$. Each update to an edge length
  requires an update to the data structure that maintains the
  MST. From Theorem~\ref{thm:rand-mwu2}, the total number of times a
  weight $w_i$ changes is $O(\frac{\log k}{\eps^2})$.  Each time $w_i$
  changes we update $\rho(e)$ for all $e$ such that $A_{i,e} > 0$.
  This implies then that the total number of updates to the edge
  lengths is $O(N \log k /\eps^2)$ and hence the total time to
  maintain $T$ is $O(\frac{N \log k \log^4 n}{\eps^2})$.  Although we
  can maintain an exact MST as above in the desired run time we note
  that the approximation guarantee of the overall algorithm will hold
  true even if we maintained a $(1+\eps)$-approximate MST. We need to
  use this flexibility in order to avoid the bottleneck of updating
  weights in each iteration.

  Now we address the second issue that affects the run time.  In each
  iteration we need to efficiently find the list $L$ of constraints
  whose weights need to be increased.  $L \subseteq [k]$ is the set of
  indices $i$ satisfying
  $\theta \le \delta\eta(A\1_{T})_i / (b_i \eps)$ where $T$ is the
  spanning tree that was used in the current iteration as the solution
  to the Lagrangian relaxation. The difficulty in computing $L$ is in
  maintaining the values $\{(A\1_{T})_i / b_i : i \in [k]\}$. For a
  given constraint $i$ and tree $T$ we let $\constcost_T(i)$ denote
  $(A\1_{T})_i / b_i = \frac{1}{b_i}\sum_{e \in T} A_{i,e}$; we omit
  $T$ in the notation when it is clear. We maintain the $\constcost(i)$
  values in a standard balanced binary search tree $S$: given
  $\theta$, $\delta$, $\eta$, and $\eps$ we can use $S$ to report $L$
  in time $O(|L| + \log k)$. Note that each time we retrieve an
  $i \in L$ we increase $w_i$ by a $(1+\eps)$ factor and the total
  number of weight updates to $i$ is $O(\log k/\eps^2)$. Thus we can
  charge the retrieval time from $S$ to the total number of weight
  updates which is $O(k \log k/\eps^2)$. The main difficulty is in
  updating the values of $\constcost_T(i)$ in $S$ each time the tree
  $T$ changes.  Recall that $T$ changes a total of
  $O(N \log k/\eps^2)$ times via single edge updates. Suppose
  $\mstlength(e)$ increases and this requires an update to $T$; $e$ is
  removed and a new edge $e'$ enters into the tree. We now need to
  update in $S$ the value $\constcost(i)$ if $A_{i,e'} > 0$ or
  $A_{i,e} > 0$; updating a single value in $S$ takes $O(\log k)$
  time. The number of constraints affected is $N_e+N_{e'}$ where $N_e$
  and $N_{e'}$ are the number of non-zeroes in the columns of $A$
  corresponding to $e$ and $e'$ respectively.  The total number of
  edge updates to the MST data structure, as we outlined, is
  $O(N \log k/\eps^2)$. Hence, if we update the values in $S$ naively,
  then the total number of updates to $S$ would be
  $O(\Delta_0 N \log k/\eps^2)$ where $\Delta_0$ is the maximum number
  of non-zeroes in any column of $A$. In the case of \bdmst we have
  $\Delta_0 = 2$ but in the setting of \cst we may have
  $\Delta_0 = k$. Thus, using the dynamic MST data structure and a
  straightforward use of $S$ gives a total running time of
  $\tO(\Delta_0 N/\eps^2)$. Note that for \bdmst $\Delta_0 = 2$ and
  $N=2m$, so this already leads to an $\tO(m/\eps^2)$ running time.
  However we seek to obtain near-linear time even when $\Delta_0$ can
  be large.

  To avoid having to update $S$ so frequently, we wish to limit the
  number of times that $\mstlength(e)$ is changed. Here is where we
  can take advantage of the fact that a $(1+\eps)$-approximation to
  the MST suffices. In the basic implementation we updated the tree
  $T$ whenever $\mstlength(e)$ increased for some $e \in T$. In a
  modified implementation we update $T$ only when $\mstlength(e)$
  increases by a $(1+\eps)$-factor. For this purpose we maintain
  $\mstlength(e)$ as well as another quantity
  $\approxmstlength(e)$. We maintain the invariant that
  $\approxmstlength(e) \le \mstlength(e) \le (1+\eps)
  \approxmstlength(e)$.  The dynamic MST data structure is based on
  the approximate edge lengths $\approxmstlength(e)$, $e \in E$ and
  the invariant guarantees that at any point and for any tree $T$,
  $\mstlength(T) \le (1+\eps)\approxmstlength(T)$. Thus if $T$ is an
  MST with respect to the approximate lengths then it is a
  $(1+\eps)$-approximation to the MST with respect to the actual
  lengths. Note that we maintain the exact $\mstlength$ values and
  whenever $\mstlength(e)$ is updated the algorithm checks
  $\approxmstlength(e)$ and updates it to maintain the invariant (and
  updating $\approxmstlength(e)$ may trigger an udpate to the
  tree). It is not hard to see that this approximate length
  maintenance does not affect the total time to maintain the edge
  lengths and the total time to update the (approximate) MST. How many
  times does $\approxmstlength(e)$ get updated?  The initial value of
  $\mstlength(e) = \sum_{i=1}^k A_{i,e}$ since $w_i = 1$ at the start
  of the algorithm. The final value is $\sum_{i=1}^k A_{i,e} w_i$
  where the $w_i$ are the weights at the end of the algorithm.
  Theorem~\ref{thm:rand-mwu2} guarantees we have
  $\sum_i w_i \le \exp(\ln k/\eps)$. This implies that $\mstlength(e)$
  increases by a $(1+\eps)$-factor at most $O(\ln k/\eps^2)$ times.
  Therefore $\approxmstlength(e)$ changes only $O(\ln k/\eps^2)$
  times.

  Now we see how maintaining an approximate solution to the MST
  helps us improve the time to update $\constcost$
  values in $S$. Recall that we need to update $S$ when $T$ changes. $T$
  changes when $\approxmstlength(e)$ increases for some $e \in T$ and
  is potentially replaced by an edge $e'$. As we described earlier,
  each such change requires $N_e + N_{e'}$ updates to $S$. We can
  charge the $N_{e'}$ updates to the time that $e'$
  leaves the tree $T$ (if it never leaves then we can charge it once
  to $e'$). Since each edge $e$ leaves the tree $O(\log k/\eps^2)$
  times, the total number of updates is
  $O((\sum_e N_e) \log k/\eps^2)$ which is $O(N \log k/\eps^2)$. Each
  update in the binary search tree takes $O(\log k)$ time and thus the
  total time for finding $L$ in each iteration via the binary search
  tree $S$ is $O(N \log^2 k/\eps^2)$.

  Finally  we describe how to find the step size $\delta$ in each
  iteration. To find $\delta$ we need to compute
  $\min_{i \in [k]}\frac{b_i}{(A\1_{T})_i} = \min_{i \in [k]}
  \frac{1}{\gamma_T(i)}$. Since we maintain $\gamma_T(i)$ values in
  a binary search tree $S$ this is easy to find in $\tO(1)$ time per
  iteration and the total number of iterations is $O(k \log k/\eps^2)$.

   \begin{figure}[t]
    \centering
    \fbox{\parbox{0.9\linewidth}
    {
      $\crossrand(G = (V,E), A \in [0,1]^{k \times m}, b \in \R_+^k, \eps \in (0,1))$

      \begin{algorithmic}
        \STATE $w \gets \1$; $x \gets 0$; $t \gets 0$; $\eta \gets
        (\ln m) / \eps$
        \STATE For each $e$ define $\mstlength_e :=  \sum_{i=1}^kA_{i,e} w_i$
        \WHILE{$t < 1$}
          \STATE $T$ is $(1+O(\eps))$-approximate MST in $G$ with respect to edge lengths
          $\mstlength(e), e \in E$
          \STATE $y \gets \displaystyle\frac{\ip{w}{\1}}
            {\sum_{i = 1}^k w_i\cdot (A\1_{T})_i}\cdot e_{T}$
          \STATE $\delta \displaystyle\leftarrow \min\left\{\frac{\eps}{\eta}\cdot
            \min_{i \in [k]}\frac{b_i}{(A\1_{T})_i}, 1- t \right\}$
          \STATE $x \gets x + \delta y$
          \STATE pick $\theta$ uniformly at random from $[0,1]$
          \STATE $L \gets \{i \in [k] \mid (\theta \eps)/(\delta\eta)
          \le (A\1_{T})_i/b_i\}$
          \FOR{$i \in L$}
            \STATE $w_i \gets w_i \cdot \exp(\eps)$
          \ENDFOR
          \STATE $t \gets t + \delta$
        \ENDWHILE
        \RETURN{$x$}
      \end{algorithmic}
    }}
    \caption{Specialization of generic randomized MWU algorithm to
      solve \ref{pr:main}.}
    \label{alg:rand-mwu-CST}
  \end{figure}

  \begin{figure}[t]
    \centering
    \fbox{\parbox{0.9\linewidth}
    {
      $\fastcrossrand(G = (V,E), A \in [0,1]^{k \times m}, b \in \R_+^k, \eps \in (0,1))$

      \begin{algorithmic}
        \STATE $w \gets \1$; $x \gets 0$; $t \gets 0$; $\eta \gets
        (\ln m) / \eps$
        \STATE For each $e \in E$, $\mstlength_e, \approxmstlength_e
        \gets \sum_{i = 1}^k A_{i,e}w_i$
        \STATE $T \gets$ dynamic MST in $G$ according
        to $\approxmstlength_e$ lengths
        \STATE $S \gets$ balanced binary search tree on $[k]$ where
          index $i \in [k]$ has value $(A\1_{T})_i / b_i$
        \WHILE{$t < 1$}
          \STATE $y \gets \displaystyle\frac{\ip{w}{\1}}
            {\sum_{i = 1}^k w_i\cdot (A\1_{T})_i}\cdot e_{T}$
          \STATE $\delta \displaystyle\leftarrow \min\left\{\frac{\eps}{\eta}\cdot
            \min_{i \in [k]}\frac{b_i}{(A\1_{T})_i}, 1- t \right\}$
          \STATE $x \gets x + \delta y$
          \STATE pick $\theta$ uniformly at random from $[0,1]$
          \STATE $L \gets$ all indices in $S$ with value at least
            $(\theta \eps)/(\delta\eta)$
          \FOR{$i \in L$}
            \STATE $w_i \gets w_i \cdot \exp(\eps)$
            \FOR{$e \in E$ such that $A_{i,e} > 0$}
              \STATE $\mstlength_e \gets \mstlength_e + (A_{i,e}w_i)(1 - e^{-\eps})$
              \IF{$\mstlength_e \ge (1 + \eps)\bar{\mstlength}_e$}
                \STATE $\bar{\mstlength}_e \gets \mstlength_e$
                \STATE update $T$ with new cost $\bar{\mstlength}_e$ on edge $e$
                \STATE if $T$ changes, update $S$
              \ENDIF
            \ENDFOR
          \ENDFOR
          \STATE $t \gets t + \delta$
        \ENDWHILE
        \RETURN{$x$}
      \end{algorithmic}
    }}
    \caption{Fast implementation of randomized MWU algorithm for
      solving \ref{pr:main}.}
    \label{alg:fast-rand-mwu-CST}
  \end{figure}

  Figure~\ref{alg:fast-rand-mwu-CST} describes the fast implementation
  that we described above.
  The following theorem proves Theorem~\ref{thm:intro-lp-solve} but only
  for the feasibility LP of \cst. Combining the following theorem with
  the observations in the following section that allow one to incorporate 
  cost as a constraint in the feasibility LP~\ref{pr:main} proves 
  Theorem~\ref{thm:intro-lp-solve}.
  \begin{theorem}\label{lem:cross-solve}
    Let $G = (V,E)$ be a graph with $n$ vertices and $m$ edges,
    $A \in [0,1]^{k \times m}$, $b \in \R_+^k$, and $\eps \in (0,1)$
    and assume there exists a feasible solution
    to~\ref{pr:compact}. With probability at least
    $1 - \frac{1}{\poly(k)}$, $\fastcrossrand(G, A, b, \eps)$ runs in
    $O(\frac{N (\log k \log^4 n + \log^2k)}{\eps^2})$ time where $N$ is
    the number of nonzero coefficients in $A$ and returns an implicit solution $x$ such that
    $Ax \le (1 + O(\eps))b$ and $\sum_T x_T \ge (1-O(\eps))$. By scaling, 
    $x$ can be converted into a convex combination $x'$ of spanning trees such that 
    $Ax' \le (1+O(\eps))b$. If $\sum_T x_T < (1-O(\eps))$ then \ref{pr:compact} is not feasible.
  \end{theorem}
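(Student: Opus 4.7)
The plan is to derive both the correctness and the running time by carefully combining the generic guarantee of Theorem~\ref{thm:rand-mwu2} with the amortized analysis sketched in the text preceding the algorithm. I would first argue correctness. The algorithm $\fastcrossrand$ is an instantiation of $\randpack$ applied to the implicit packing LP~\ref{pr:main} with constraint weights $b_i$; the only deviations from the generic template are (i) the Lagrangian subproblem is solved by producing a $(1+O(\eps))$-approximate MST in $G$ with respect to the lengths $\mstlength_e = \sum_i w_i A_{i,e}$, and (ii) the actual lengths are replaced by the rounded lengths $\approxmstlength_e$ satisfying $\approxmstlength_e \le \mstlength_e \le (1+\eps)\approxmstlength_e$, so any MST under $\approxmstlength$ is a $(1+\eps)$-approximate MST under $\mstlength$. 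Since Theorem~\ref{thm:rand-mwu2} explicitly permits an $(1-O(\eps))$-approximate Lagrangian oracle, the standard guarantees carry over: with probability $1-1/\poly(k)$ the implicit output $x$ (expressed as the sum of the tree vectors $e_T$ scaled by $\delta$ at each iteration) satisfies $Ax \le (1+O(\eps))b$ and $\sum_T x_T \ge (1-O(\eps))\OPT$, where $\OPT$ is the optimum of~\ref{pr:main}. Rescaling by $1/(1+O(\eps))$ yields a convex combination $x'$ of spanning trees with $Ax' \le (1+O(\eps))b$. The observation tying~\ref{pr:compact} to~\ref{pr:main} implies that if the returned value is smaller than $(1-O(\eps))$, then $\OPT<1$ and~\ref{pr:compact} is infeasible.

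Next I would establish the running time by amortizing the three ``expensive'' operations against the global bounds from Theorem~\ref{thm:rand-mwu2}: the total number of iterations is $O(k\log k/\eps^2)$; each weight $w_i$ is updated at most $O(\log k /\eps^2)$ times; and the total weight mass stays bounded by $\exp(\ln k/\eps)$. From the first two it follows that the total number of updates to the exact lengths $\mstlength_e$ is $\sum_e N_e \cdot O(\log k /\eps^2) = O(N\log k/\eps^2)$ where $N_e$ is the number of nonzeros in column $e$ of $A$. The key observation (as in the text) is that $\mstlength_e$ grows monotonically from $\sum_i A_{i,e}$ to at most $\exp(\ln k/\eps)\sum_i A_{i,e}$, so $\approxmstlength_e$ is refreshed at most $O(\log k/\eps^2)$ times for each edge, giving a total of $O(N\log k/\eps^2)$ edge-length changes that must be forwarded to the dynamic MST data structure of~\cite{hlt-01}. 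Each such update costs $O(\log^4 n)$ amortized, contributing $O(N\log k\log^4 n/\eps^2)$ to the overall running time.

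The remaining bookkeeping involves the balanced BST $S$ that stores $\gamma_T(i) = (A\1_T)_i/b_i$; I would argue it incurs only $O(N\log^2 k/\eps^2)$ time in total. Each swap in $T$ (an edge $e$ leaves, edge $e'$ enters) dirties at most $N_e + N_{e'}$ entries of $S$. An entry corresponding to constraint $i$ changes only when some edge $e$ with $A_{i,e}>0$ enters or leaves $T$, and each edge can enter/leave only $O(\log k/\eps^2)$ times by the bound on $\approxmstlength_e$ changes. Hence the total number of updates to $S$ is $O(N\log k/\eps^2)$, at $O(\log k)$ each, yielding $O(N\log^2 k/\eps^2)$. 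The step size $\delta$ requires one min-query on $S$ per iteration for $O(\log k)$ time, contributing $O(k\log^2 k/\eps^2)$ which is absorbed. Finally, computing the set $L$ in each iteration reduces to range queries on $S$ whose aggregate output size over all iterations equals the total number of weight updates, which is $O(k \log k/\eps^2)$; each reported index is charged $O(\log k)$ in $S$ and then triggers scanning the column of $A$ corresponding to updating $w_i$ --- this scanning is already accounted for in the edge-length update budget. Summing all contributions yields the claimed $O(N(\log k\log^4 n + \log^2 k)/\eps^2)$ bound.

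The main obstacle in writing this up precisely is the careful amortization showing that using $\approxmstlength$ rather than $\mstlength$ is simultaneously enough to (a) keep the Lagrangian oracle within the $(1-O(\eps))$ tolerance permitted by Theorem~\ref{thm:rand-mwu2} (so that the framework's guarantees apply unchanged), and (b) cap the number of edge swaps and thereby the updates to $S$. This interaction is the technical heart of the proof; once it is in place, everything else reduces to straightforward accounting against the global bounds already supplied by Theorem~\ref{thm:rand-mwu2}.
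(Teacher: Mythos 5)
Your proposal is correct and follows essentially the same route as the paper: correctness is delegated to Theorem~\ref{thm:rand-mwu2} (whose tolerance for a $(1-O(\eps))$-approximate Lagrangian oracle is exactly what licenses the use of the rounded lengths $\approxmstlength_e$), and the running time is obtained by the same three amortizations — edge-length updates charged to weight increments for the dynamic MST, updates to $S$ charged to the $O(\log k/\eps^2)$ times each edge can enter or leave the tree, and retrievals from $S$ charged to the total number of weight updates. The only nit is a terminological slip (updating $w_i$ triggers a scan of row $i$ of $A$, not a column), which does not affect the accounting.
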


  \begin{remark}
    Note that the convex combination of spanning trees returned by the
    algorithm is implicit since we update $T$ dynamically via the MST
    data structure. The total representation size of this implicit
    representation is upper bounded by $O(k \log k/\eps^2)$ since
    this is an upper bound on the total number of updates to the tree.
  \end{remark}

\subsection{Solving the minimum cost LP}\label{sec:costs}
We have seen how to solve the approximate feasibility version of
\ref{pr:main} and hence the approximate feasibility version of
\ref{pr:compact} in $\tO(N/\eps^2)$ time. Now we describe how we can
also solve the minimum cost version of \ref{pr:compact}. We only
handle non-negative costs.  Suppose we wish to check
whether there exists a feasible solution to \ref{pr:compact} with cost at most a given
value $B$. Then we can add the constraint $\sum_{e} c_e x_e \le B$
to the set of packing constraints and reduce to the
feasibility problem that we already know how to solve.  To find a
feasible solution with smallest cost, say $\OPT$, we can do binary
search. If we knew that $\OPT \in [L, U]$ where $L$ is a lower bound
and $U$ is an upper bound then standard binary search would allow us
to obtain a $(1+\eps)$-approximation to the cost using
$O(\log \frac{U}{L \eps })$ calls to the feasibility problem.

We will assume that costs are non-negative integers. This is without
loss of generality if they are assumed to be rational. It may be the
case that $\OPT = 0$. This can be checked by solving for
feasibility in the subgraph of $G$ restricted to zero-cost
edges. Otherwise $L \ge 1$ and $U \le C$ where $C = \sum_e c_e$ and
binary search will require $O(\log \frac{C}{\eps})$ calls to the
feasibility algorithm. This bound is reasonable but not
strongly polynomial in the input size. We now argue how to find $L$ and $U$ such that
$U/L \le n$ and $\OPT \in [L, U]$.  We sort edges
by cost and renumbering edges we assume that
$c_1 \le c_2 \le \ldots \le c_m$.  We do binary search to find the
smallest $i$ such that there is an (approximate) feasible solution in
the subgraph of $G$ induced by the first $i$ edges in the ordering.
This takes $O(\log m)$ calls to the feasibility oracle. If
$c_i = 0$ then we stop since we found a zero-cost feasible solution.
Otherwise $\OPT \ge c_i$ and moreover we see that $\OPT \le (n-1)c_i$
since we have a feasible solution using only edges $1$ to $i$, all of
whose costs are at most $c_i$. Thus $U \le (n-1)c_i$. We have
therefore found $L, U$ such that $U/L \le n$ and $\OPT \in [L,U]$.
Now we do binary search in $[L,U]$ to find a $(1+\eps)$-approximation to the
optimum cost. Thus, the total number of calls to the feasibility oracle 
is $O(\log m + \log \frac{n}{\eps})$.

\section{Putting things together and extensions}
\label{sec:putting-together}
The preceding sections proved the main technical results corresponding
to Theorems \ref{thm:intro-lp-solve}, \ref{thm:intro-sparsification}, and
\ref{thm:intro-swap-round}. In this section we give formal proofs
of the corollaries stated in Section~\ref{sec:intro}.
In addition, we also describe some extensions and other related
results that follow from the ideas in the preceding sections.

\subsection{Proofs of corollaries}
We start with Corollary~\ref{cor:intro-bdmst}.
\begin{proof}[Proof of Corollary~\ref{cor:intro-bdmst}.]
  Let $G = (V,E)$ be the input graph with $m$ edges and $n$ vertices
  and let $\eps > 0$. Consider the LP relaxation to test whether $G$
  has a spanning tree with degree at most a given parameter $B'$.
  Theorem~\ref{thm:intro-lp-solve} implies that there exists a
  randomized algorithm that, with high probability, either correctly
  determines that there is no feasible solution to the LP, or outputs
  a fractional spanning tree $y \in \ST(G)$ such that
  $\sum_{e \in \delta(v)} y_e \le (1 + \eps)B'$ for all $v$.  Using
  the algorithm, we can do binary search over the integers from $2$ to
  $n-1$ to find the smallest value $B$ for which the algorithm outputs
  a solution.  We will focus on the scenario where the algorithm from
  Theorem~\ref{thm:intro-lp-solve} is correct in each of the
  $O(\log n)$ calls in the binary search procedure; this happens with
  high probability. For the value of $B$ found in the binary search,
  let $y \in \ST(G)$ be the solution that is output; we have
  $\sum_{e \in \delta(v)} y_e \le (1 + \eps)B$ for all $v$.  Since the
  approximate LP solver correctly  reported infeasibility for all $B' < B$, we
  have $B - 1 < B^*$, which implies $B \le B^*$. As there is a
  feasible fractional spanning tree $y$ such that
  $\sum_{e \in \delta(v)} y_e \le \ceil{(1+\eps)B}$ for all $v$, the
  result of~\cite{ls-15} implies that $B^* \le \ceil{(1+\eps)B} + 1$.

  Regarding the run time, each call to the algorithm in
  Theorem~\ref{thm:intro-lp-solve} takes $\tO(m/\eps^2)$ time since
  $N = O(m)$ in the setting of \bdst.  Binary search adds only an
  $O(\log n)$ multiplicative-factor overhead, leading to the claimed
  running time.
\end{proof}

We next prove Corollary~\ref{cor:intro-sparse-bdmst}. The algorithm
described in the corollary takes advantage of
Theorem~\ref{thm:intro-sparsification} to sparsify the input graph,
then runs the \Furer-Raghavachari algorithm \cite{FurerR94}.

\begin{proof}[Proof of Corollary~\ref{cor:intro-sparse-bdmst}.]
  We will assume that the input is a graph $G = (V,E)$ $m$ edges and
  $n$ vertices. Let $\eps > 0$. As in the proof of
  Corollary~\ref{cor:intro-bdmst}, we can use
  Theorem~\ref{thm:intro-lp-solve} and binary search to find in
  $\tO(\frac{m}{\eps^2})$ time, with high probability, a fractional
  spanning tree $x \in \ST(G)$ such that
  $\sum_{e \in \delta(v)} x_e \le (1+\eps)B^*$ for all $v \in V$. By
  Theorem~\ref{thm:intro-sparsification}, we can use random sampling
  based on $x$ to obtain a subgraph $G' = (V,E')$ of $G$ such that
  with high probability we have
  $\abs{E'} = O(\frac{n \log (n + m)} {\eps^2})$ and there exists a
  fractional solution $z \in \ST(G)$ in the support of $G'$ such that
  $\sum_{e \in \delta(v)} z_e \le (1+3\eps)(1+\eps)B^* \le (1+7\eps)B^*$ for all
  $v \in V$ (for sufficiently small $\eps$). The result of~\cite{ls-15} implies there
  exists a spanning tree $T$ in $G'$ such that
  $\max_{v \in V} \deg_T(v) \le \ceil{(1+7\eps)B^*}+1$. For a graph $H$
  on $n$ vertices and $m$ edges, the algorithm of~\cite{FurerR94} runs
  in $\tO(mn)$ time and outputs a spanning tree $T$ such that the
  maximum degree of $T$ is at most $\OPT(H)  + 1$, where
  $\OPT(H) = \min_{T \in \cT(H)}\max_{v \in V} \deg_T(v)$. Thus, the
  algorithm of~\cite{FurerR94} when applied to $G'$, outputs a
  spanning tree with degree at most $\ceil{(1+7\eps)B^*} +2$, and runs
  in $\tO(\frac{n^2}{\eps^2})$ time.
  \end{proof}

  \medskip
  \noindent
  {\bf Non-uniform degree bounds:} We briefly sketch some details
  regarding Remark~\ref{remark:intro-non-uniform}. First, we note that
  the algorithm for solving the LP relaxation handles the non-uniform
  degree bound case in $\tilde{O}(m/\eps^2)$ time. It either certifies
  that the given bounds are infeasible or outputs a fractional
  solution with degree at most $(1+\eps)B_v$ for each $v$.  We can
  then apply the result in \cite{ls-15} to know that there exists a
  spanning tree $T$ in which the degree of each $v$ is at most
  $\ceil{(1+\eps)B_v}+1$. We can apply sparsification from
  Theorem~\ref{thm:intro-sparsification} to the fractional solution to
  obtain a sparse subgraph that contains a near-optimal fractional
  solution. It remains to observe that the \Furer-Raghavachari
  algorithm can be used even in the non-uniform setting via a simple
  reduction to the uniform setting. This was noted in prior work
  \cite{kr-03,ls-15} and we provide the details in
  Section~\ref{sec:fr-non-unif}.  This results in an
  $\tilde{O}(n^2/\eps^2)$ time algorithm that either decides that the
  given non-uniform degree bounds are infeasible or outputs a spanning
  tree in which the degree of each node $v$ is at most
  $\ceil{(1+\eps)B_v}+2$.

  \medskip

We finally prove Corollary~\ref{cor:intro-solve-round}. This corollary
combines the LP solver of Theorem~\ref{thm:intro-lp-solve} and the
fast implementation of randomized swap rounding of
Theorem~\ref{thm:intro-swap-round} to give a fast algorithm for \cmst.
\begin{proof}[Proof of Corollary~\ref{cor:intro-solve-round}.]
  Let $G = (V,E)$ be the input graph and $\eps > 0$. We want to solve
  $\min\{c^Tx : Ax \le b, x \in \ST(G)\}$.  By
  Theorem~\ref{thm:intro-lp-solve}, there exists a randomized
  algorithm that runs in $\tO(N / \eps^2)$ time and with high
  probability certifies the LP is infeasible or outputs $y \in \ST(G)$
  such that $c^Ty \le (1+\eps)\OPT$ and $Ay \le (1+\eps)b$. We then
  apply the fast implementation of randomized swap rounding of
  Theorem~\ref{thm:intro-swap-round} to $y$, which runs in
  $\tO(m/\eps^2)$ time and outputs a spanning tree $T$. If we only
  considered the feasibility version (i.e.\ find $x \in \ST(G)$ such
  that $Ax \le b)$, then the existing results on swap
  rounding~\cite{cvz-10} imply that
  $A\1_T \le \min\{O(\log k / \log \log k)b, (1 + \eps)b + O(\log
  k)/\eps^2\}$ with high probability. In the cost version,
  \cite{cvz-10} implies that
  $A\1_T \le \min\{O(\log k / \log \log k)b, (1 + \eps)b + O(\log
  k)/\eps^2\}$ with high probability, and
  $\E[c^T\1_T] \le \sum_e c^T x$. Thus, the cost is preserved only in
  expectation. We can, however, apply Markov's inequality and conclude
  that
  $\Pr[c^T\1_T \ge (1+\eps)c^Tx] \le \frac{1}{1+\eps} \le 1 -
  \frac{\eps}{2}$ (for $\eps$ sufficiently small). For a suitable choice of the
  high probability bound, we have that
  $A\1_T \le \min\{O(\log k / \log \log k)b, (1 + \eps)b + O(\log
  k)/\eps^2\}$ \emph{and} $c^T\1_T \le (1+\eps)c^Tx$ with probability
  at least $\frac{\eps}{2} - \frac{1}{2n^2}$.  We can assume that
  $\eps > \frac{1}{n^2}$, for otherwise the $\frac{1}{\eps^2}$ dependence in the
  run time of the approximate LP algorithm is not meaningful; one can use other
  techniques including an exact LP solver. Thus, with probability at
  least $\frac{\eps}{4}$, we have
  $c^T\1_T \le (1+O(\eps))c^Tx \le (1+O(\eps))\OPT$ and
  $A\1_T \le \min\{O(\log k / \log \log k)b, (1 + \eps)b + O(\log
  k)/\eps^2\}$.  To boost the $\frac{\eps}{4}$ probability of success, we can
  repeat the rounding algorithm $O(\frac{\log n}{\eps})$ times
  independently; with high probability, one of the trees will yield
  the desired bicriteria approximation.
\end{proof}

\subsection{A combinatorial algorithm for the non-uniform degree bound case}
\label{sec:fr-non-unif}
For a graph on $m$ edges and $n$ vertices, we show that the \Furer-Raghavachari 
algorithm can be used to solve the
non-uniform degree bound case in $\tO(mn)$ time. We do this by showing that
the reduction in~\cite{kr-03} from the non-uniform degree bound case to the 
uniform case can be done implicitly. We provide the details here as we believe they
have not been specified elsewhere.

Given a graph $G = (V,E)$ on $n = \abs{V}$ vertices and $m = \abs{E}$
edges and degree bounds $B_v$ for each $v \in V$, the reduction
in~\cite{kr-03} constructs a new graph by copying $G$ and for every $v
\in V$, adds $n - B_v$ auxiliary nodes of degree $1$ that are
connected only to $v$.  We refer to this new graph as $R(G)$.  It is
an easy observation that there is a spanning tree $T$ in $G$
satisfying the given degree bounds if and only if there is a spanning
tree in $R(G)$ with degree at most $n$. Moreover, if $T'$ is a
spanning tree in $R(G)$ with maximum degree $n+c$, then restricting
$T'$ to $V(G)$ yields a spanning tree $T$ of $G$ in which the degree
of each $v$ is at most $B_v + c$.  We can therefore run the
\Furer-Raghavachari algorithm on $R(G)$ with uniform degree bound
$n$. If the algorithm reports no such tree exists, then there is no
spanning tree of $G$ satisfying the given non-uniform degree
bounds. If a spanning tree $T'$ of $R(G)$ is returned,
then we obtain a spanning tree $T$ in $G$ such that
$\deg_T(v) \le B_v + 1$ for all $v \in V$. The main issue to resolve is
the running time of the algorithm; the graph $R(G)$
could have $\Omega(n^2)$ vertices and
$\Omega(n^2)$ edges.

We describe how one can run the \Furer-Raghavachari algorithm on the
graph $R(G)$ \emph{implicitly} so that it runs in $\tO(mn)$ time.  We
start by recalling the \Furer-Raghavachari algorithm. Let $H$ be the
input graph with $m_H$ edges and $n_H$ vertices.  The algorithm relies
on a subroutine we refer to as $\reduce$. $\reduce$ takes a spanning
tree $T$ of $H$ as input, runs in $\tilde{O}(m_H)$ time
and outputs a bit $b \in\{0,1\}$. If $b$ is
$1$, then $\reduce(T)$ produces a witness proving that the maximum
degree of $T$ is at most $\OPT(G) + 1$ where $\OPT(G)$ is the
minimum maximum degree of a spanning tree of $G$. If $b$ is $0$,
$\reduce$ outputs a new spanning tree $T'$
with the following properties: (i) the maximum degree of $T'$ is
no larger than in $T$ and (ii) the
number vertices in $T'$ with degree equal to the maximum degree of $T$
is at least $1$ less than in $T$. This is accomplished by
a series of edge swaps in $H$. With this
subroutine in hand, the \Furer-Raghavachari algorithm starts with an
arbitrary spanning tree $T$ and repeatedly calls $\reduce(T)$
until it returns $1$ and terminates. The correctness follows immediately from the
correctness of $\reduce$. As for the running time, we observe that the
number of vertices of degree $k$ in $T$ is at most
$\frac{2n_H}{k}$. Furthermore, the algorithm must stop
when the maximum degree is $2$ (unless the graph is an edge which is trivial).
Therefore, the total number of calls to
$\reduce$ is at most $\sum_{k=2}^{n_H-1} \frac{2n_H}{k} = O(n_H\log n_H)$.
This leads to a total running time of $\tO(m_Hn_H)$.

We run \Furer-Raghavachari on $R(G)$ implicitly in 
$\tO(mn)$ time as follows. We show that one can implement $\reduce$ in $\tO(m)$
time in $R(G)$, and that the number of calls to $\reduce$ is at most
$\tO(n)$. Let $T'$ be a spanning tree of $R(G)$. Let $k$ be the maximum
degree of $T'$. Note that we can assume $k \ge 3$ for otherwise we
are done. $\reduce(T')$ begins by finding $D_k$ and
$D_{k-1}$, the sets of vertices of degree $k$ and degree $k-1$ in
$T'$, respectively. As we said, it executes a series of edge swaps of non-tree
edges to find another spanning tree in which $|D_k|$ decreases.
We observe that all the auxiliary vertices and edges incident
to them are present in every spanning tree of $R(G)$.
Thus, they do not play any role in the algorithm $\reduce$ other than
in determining $D_k$ and $D_{k-1}$. It is easy to compute
$\deg_{T'}(v)$ for each $v \in V(R(G))$ in $\tilde{O}(m)$ time at
the start of $\reduce$ and then effectively run the algorithm in $G$.
Thus, we can implement $\reduce(T')$ in $\tO(m)$
time.

All that remains is to show that $\reduce$ is called $\tO(n)$ times.

\begin{lemma}
  Let $G = (V,E)$ be an undirected, connected graph where $n = \abs{V}$ and
  $m = \abs{E}$. For $v \in V$, let $B_v$ be the degree bound for $v$. 
  Let $R(G)$ be the graph $G$ where
  for each $v \in V$, $n - B_v$ auxiliary nodes of degree $1$ are added incident to $v$.
  On input $R(G)$, the \Furer-Raghavachari algorithm calls $\reduce$ $\tO(n)$ times.
\end{lemma}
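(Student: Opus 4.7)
The plan is to exploit the special structure of $R(G)$---namely, that every auxiliary vertex is a degree-$1$ leaf whose unique incident edge lies in every spanning tree---to sharpen the standard \Furer-Raghavachari bound of $O(n_H \log n_H)$ calls, which would naively give $O(n^2 \log n)$ on $R(G)$.

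First I would observe that every $\reduce$ swap in $R(G)$ acts entirely within $G$: because each auxiliary edge is the only edge incident to its leaf, it lies in every spanning tree $T'$ of $R(G)$, so the cycle closed by the non-tree edge in $\reduce$ contains only edges of $G$. Consequently, $T'|_G$ (the restriction of $T'$ to $E(G)$) is always a spanning tree of $G$; writing $d_v := \deg_{T'|_G}(v)$, we have $\deg_{T'}(v) = (n - B_v) + d_v \ge n - B_v + 1$ for every $v \in V$, while every auxiliary vertex has degree $1$. In particular, for every $k \ge 2$, $D_k \subseteq V$, and the maximum degree $k$ of $T'$ always satisfies $k \ge n - B_{\min} + 1$ where $B_{\min} := \min_{v \in V} B_v$.

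The key bound is the following. For $v \in D_k$, $d_v = k - n + B_v \ge k - n + B_{\min} =: j \ge 1$. Summing $d_v \ge j$ over $v \in D_k$ and using $\sum_{v \in V} d_v = 2(n-1)$ gives
\[
  j \cdot |D_k| \;\le\; \sum_{v \in D_k} d_v \;\le\; 2(n-1),
  \qquad \text{hence} \qquad |D_k| \;\le\; \frac{2(n-1)}{j}.
\]
This bound holds at any time during the algorithm, since it relies only on the invariants $D_k \subseteq V$, the slack inequality $d_v \ge j$ for $v \in D_k$, and $T'|_G$ being a spanning tree of $G$.

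Finally, I would initialize $T'_0 := T_0 \cup A$, where $T_0$ is any spanning tree of $G$ and $A$ is the set of all auxiliary edges; then the initial maximum degree satisfies $K_0 \le \max_v(n - B_v + \deg_{T_0}(v)) \le 2n - 2$. Since each call to $\reduce$ either terminates the algorithm or decreases $|D_k|$ by $1$ at the current maximum $k$, the total number of calls is bounded by $\sum_{k = n - B_{\min}+1}^{K_0} |D_k|$ evaluated at the start of each stage; reparametrizing by $j = k - n + B_{\min}$ and applying the bound above gives
\[
  \sum_{j=1}^{K_0 - n + B_{\min}} \frac{2(n-1)}{j} \;\le\; \sum_{j=1}^{O(n)} \frac{2(n-1)}{j} \;=\; O(n \log n) \;=\; \tO(n),
\]
as required. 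The main obstacle is to verify that the invariants survive the F-R swaps---in particular, that all auxiliary edges remain in $T'$ throughout---which is exactly the content of the first step.
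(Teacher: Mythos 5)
Your proof is correct and follows essentially the same route as the paper's: both bound $|D_k|$ by relating the degree of a max-degree vertex in $T'$ to its degree in the spanning tree $T'|_G$ of $G$, use $\sum_v \deg_{T'|_G}(v) = 2(n-1)$ to get a bound of the form $O(n/j)$ where $j$ is the ``excess'' over $n-B_{\min}$, and sum the resulting harmonic series over the $O(n)$ possible values of the maximum degree. The only cosmetic difference is that the paper first normalizes to $B_{\min}=1$ and discharges the assumption afterwards, whereas you parametrize by $j = k - n + B_{\min}$ directly.
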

\begin{proof}
  Given an instance of the non-uniform problem, for ease of analysis,
  we assume that there is a node with degree bound $1$. This is
  easy to ensure by adding a new node $u$ and connecting it by an edge to some
  arbitrary node $v \in V(G)$, and setting $B_u = 1$ and adding $1$ to $B_v$.
  With this technical assumption in place we consider
  the graph $R(G)$. Let $B^* = \min_{T
    \in \cT(R(G))} \max_{v \in V(R(G))} \deg_T(v)$ be the optimal
  degree bound for the graph $R(G)$.  We have $B^* \ge n$ since
  $\min_{v \in V(G)} B_v = 1$. We also have that the maxmimum
  degree of any spanning tree in $R(G)$ is at most $(n-1) + (n-1) =
  2n-2$, where the first $n-1$ comes from the auxiliary nodes and the
  second $n-1$ comes from the degree of the spanning tree on the edges
  in $G$.
      
  The \Furer-Raghavachari algorithm starts with an arbitrary spanning
  tree $T'$ of $R(G)$ and iteratively reduces the number of maximum
  degree vertices by $1$ by calling $\reduce$.  As observed above, the
  maximum degree of a vertex in $T'$ is at most $2n-2$ and the
  algorithm will stop when the maximum degree of $T'$ is $n$ as $B^*
  \ge n$ (note that it might stop earlier).  Recall that $\reduce$
  reduces the number of maximum degree vertices in $T'$ by $1$ or
  produces a witness implying $T'$ is near-optimal. Suppose the
  maximum degree of $T'$ is $(n-1) + k$ for some $k \in [n-1]$.  A
  vertex of degree $(n-1)+k$ in $T'$ has degree at least $k$ in $G$.
  In any spanning tree of $G$, there are at most $\frac{2n}{k}$
  vertices of degree at least $k$.  Thus, the number of times
  $\reduce$ is called is at most $\sum_{k=1}^{n-1} (2n/k) = O(n\log
  n)$.

  The assumption that $B_{\min} = \min_{v \in V(G)} B_v = 1$ can be discharged in the
  preceding analysis by considering the degree of $T'$ in $R(G)$ which
  ranges from $n- B_{min}+1$ and $2n-(B_{\min}+1)$.
\end{proof}

\subsection{Extensions and related problems}
We focused mainly on \bdst, \bdmst and \cmst. Here we briefly discuss
related problems that have also been studied in the literature to which
some of the ideas in this paper apply.

\paragraph{Estimation of value for special cases of \cmst:}
As we remarked in Section~\ref{sec:intro}, various special cases of
\cmst have been studied. For some of these special cases one can
obtain a constant factor violation in the constraint
\cite{OlverZ18,LinharesS18}. We highlight one setting. One can view
\bdmst as a special case of \cmst where the matrix $A$ is a
$\{0,1\}$-matrix with at most $2$ non-zeroes per column (since an edge
participates in only two degree constraints); the result in
\cite{ls-15} has been extended in \cite{KLS12} (see also
\cite{BansalKN10}) to show that if $A$ is a $\{0,1\}$-matrix with at
most $\Delta$ non-zeroes per column, then the fractional solution can
be rounded such that the cost is not violated and each constraint is
violated by at most an addtive bound of
$(\Delta-1)$. Theorem~\ref{thm:intro-lp-solve} allows us to obtain a
near-linear time algorithm to approximate the LP.  Combined with the
known rounding results, this gives estimates of the integer optimum
solution in near-linear time. Thus, the bottleneck in obtaining a
solution, in addition to the value, is the rounding step. Finding
faster iterated rounding algorithms is an interesting open problem
even in restricted settings.

\paragraph{Multiple cost vectors:}
In some applications one has multiple different cost vectors on the
edges, and it is advantageous to find a spanning tree that
simultaneously optimizes these costs. Such multi-criteria problems
have been studied in several contexts.  Let $c_1,c_2,\ldots,c_r$ be
$r$ different cost vectors on the edges (which we assume are all
non-negative). In this setting it is typical to assume that we are
given bounds $B_1, B_2,\ldots, B_r$ and the goal is to find a spanning
tree $T \in \cT(G)$ satisfying the packing constraints such that
$c_j(T) \le B_j$ for $j \in [r]$. We can easily incorporate these
multiple cost bounds as packing constraints and solve the resulting LP
relaxation via techniques outlined in Section~\ref{sec:solve}.  Once
we have the LP solution we note that swap-rounding is oblivious to the
objective, and hence preserves each cost in expectation. With
additional standard tricks one can guarantee that the costs can be
preserved to within an $O(\log r)$ factor while
ensuring that the constraints are satisfied to within the same factor
guaranteed in Corollary~\ref{cor:intro-solve-round}.

\paragraph{Lower bounds:} \bdmst has been generalized to the setting where
there can also be lower bounds on the degree constraints of each
vertex.  \cite{ls-15} and \cite{KLS12} showed that the additive degree
bound guarantees for \bdmst can be extended to the setting with lower
bounds in addition to upper bounds. One can also consider such a
generalization in the context of \cmst. The natural LP relaxation for
such a problem with both lower and upper bounds is of the form $\min
\{ c^Tx : Ax \le b, A'x \ge b', x \in \ST(G) \}$ where $A, A' \in
   [0,1]^{k\times m}, b,b' \in [1,\infty)^k, c \in [0,\infty)^m$.
       Here $A$ corresponds to upper bounds (packing constraints) and
       $A'$ corresponds to lower bounds (covering constraints).  This
       mixed packing and covering LP can also be solved approximately
       in near-linear time by generalizing the ideas in
       Section~\ref{sec:solve}.  Sparsification as well as
       swap-rounding can also be applied since they are oblivious to
       the constraints once the LP is solved.
       The guarantees one obtains via swap rounding are based on negative correlation
       and concentration bounds. They behave slightly differently for lower bounds.
       One obtains a tree $T$ such that $A\1_T \le (1+\eps)b + O(\log k)/\eps^2$
       and $A'\1_T \ge (1-\eps)b' - O(\log k)/\eps^2$ with high
       probability.
       As in the other cases, the LP solution proves the existence of
       good integer solutions based on the known rounding results.

\paragraph{Acknowledgements:} We thank Alina Ene and Huy Ngyuen
for pointing out to us their fast algorithm for merging two trees from \cite{en-19}.

\bibliographystyle{alpha}
\bibliography{refs}

\newcommand{\etalchar}[1]{$^{#1}$}
\begin{thebibliography}{AGM{\etalchar{+}}17}

\bibitem[AG15]{AnariG15}
N.~{Anari} and S.~O. {Gharan}.
\newblock Effective-resistance-reducing flows, spectrally thin trees, and
  asymmetric tsp.
\newblock In {\em 2015 IEEE 56th Annual Symposium on Foundations of Computer
  Science}, pages 20--39, 2015.

\bibitem[AGM{\etalchar{+}}17]{Asadpouretal17}
Arash Asadpour, Michel~X Goemans, Aleksander M{\'{a}}dry, Shayan~Oveis Gharan,
  and Amin Saberi.
\newblock An {$O(\log n/\log \log n)$}-approximation algorithm for the
  asymmetric traveling salesman problem.
\newblock {\em Operations Research}, 65(4):1043--1061, 2017.

\bibitem[AGV18]{AnariGV18}
Nima Anari, Shayan~Oveis Gharan, and Cynthia Vinzant.
\newblock Log-concave polynomials, entropy, and a deterministic approximation
  algorithm for counting bases of matroids.
\newblock In {\em 2018 IEEE 59th Annual Symposium on Foundations of Computer
  Science (FOCS)}, pages 35--46. IEEE, 2018.

\bibitem[ALGV20]{AnariLOV20}
Nima Anari, Kuikui Liu, Shayan~Oveis Gharan, and Cynthia Vinzant.
\newblock Log-concave polynomials iv: Exchange properties, tight mixing times,
  and faster sampling of spanning trees, 2020.

\bibitem[AS04]{pipage}
Alexander~A. Ageev and Maxim Sviridenko.
\newblock Pipage rounding: A new method of constructing algorithms with proven
  performance guarantee.
\newblock {\em Journal of Combinatorial Optimization}, 8:307--328, 2004.

\bibitem[Ban19]{Bansal19}
Nikhil Bansal.
\newblock On a generalization of iterated and randomized rounding.
\newblock In {\em Proceedings of the 51st Annual ACM SIGACT Symposium on Theory
  of Computing}, pages 1125--1135, 2019.

\bibitem[BGRS04]{BiloGRS04}
Vittorio Bilo, Vineet Goyal, Ramamoorthi Ravi, and Mohit Singh.
\newblock On the crossing spanning tree problem.
\newblock In {\em Approximation, Randomization, and Combinatorial Optimization.
  Algorithms and Techniques}, pages 51--60. Springer, 2004.

\bibitem[BKK{\etalchar{+}}13]{BansalKKNP13}
Nikhil Bansal, Rohit Khandekar, Jochen K{\"o}nemann, Viswanath Nagarajan, and
  Britta Peis.
\newblock On generalizations of network design problems with degree bounds.
\newblock {\em Mathematical Programming}, 141(1-2):479--506, 2013.

\bibitem[BKN10]{BansalKN10}
Nikhil Bansal, Rohit Khandekar, and Viswanath Nagarajan.
\newblock Additive guarantees for degree-bounded directed network design.
\newblock {\em SIAM Journal on Computing}, 39(4):1413--1431, Jan 2010.

\bibitem[CCPV11]{ccpv}
Gruia Calinescu, Chandra Chekuri, Martin Pal, and Jan Vondr{\'a}k.
\newblock Maximizing a monotone submodular function subject to a matroid
  constraint.
\newblock {\em SIAM Journal on Computing}, 40(6):1740--1766, 2011.

\bibitem[CHPQ20]{chq-20}
Chandra Chekuri, Sariel Har-Peled, and Kent Quanrud.
\newblock Fast lp-based approximations for geometric packing and covering
  problems.
\newblock In {\em Proceedings of the Fourteenth Annual ACM-SIAM Symposium on
  Discrete Algorithms}, pages 1019--1038. SIAM, 2020.

\bibitem[CLS{\etalchar{+}}19]{ChakLSSW19}
Deeparnab Chakrabarty, Yin~Tat Lee, Aaron Sidford, Sahil Singla, and Sam
  Chiu-wai Wong.
\newblock Faster matroid intersection.
\newblock In {\em 2019 IEEE 60th Annual Symposium on Foundations of Computer
  Science (FOCS)}, pages 1146--1168. IEEE, 2019.

\bibitem[CNN11]{CharikarNN11}
Moses Charikar, Alantha Newman, and Aleksandar Nikolov.
\newblock Tight hardness results for minimizing discrepancy.
\newblock In {\em Proceedings of the twenty-second annual ACM-SIAM symposium on
  Discrete Algorithms}, pages 1607--1614. SIAM, 2011.

\bibitem[CQ17a]{cq-tsp}
Chandra Chekuri and Kent Quanrud.
\newblock Approximating the held-karp bound for metric tsp in nearly-linear
  time.
\newblock In {\em 2017 IEEE 58th Annual Symposium on Foundations of Computer
  Science (FOCS)}, pages 789--800. IEEE, 2017.

\bibitem[CQ17b]{cq-17}
Chandra Chekuri and Kent Quanrud.
\newblock Near-linear time approximation schemes for some implicit fractional
  packing problems.
\newblock In {\em Proceedings of the Twenty-Eighth Annual ACM-SIAM Symposium on
  Discrete Algorithms}, pages 801--820. SIAM, 2017.

\bibitem[CQ18a]{cq-18b}
Chandra Chekuri and Kent Quanrud.
\newblock Fast approximations for metric-tsp via linear programming.
\newblock {\em arXiv preprint arXiv:1802.01242}, 2018.

\bibitem[CQ18b]{cq-18}
Chandra Chekuri and Kent Quanrud.
\newblock Randomized mwu for positive lps.
\newblock In {\em Proceedings of the Twenty-Ninth Annual ACM-SIAM Symposium on
  Discrete Algorithms}, pages 358--377. SIAM, 2018.

\bibitem[CQT20]{cqt-20}
Chandra Chekuri, Kent Quanrud, and Manuel~R. Torres.
\newblock Fast approximation algorithms for bounded degree and crossing
  spanning tree problems.
\newblock {\em CoRR}, abs/2011.03194, 2020.

\bibitem[CRRT07]{Chaudhurietal07}
Kamalika Chaudhuri, Satish Rao, Samantha Riesenfeld, and Kunal Talwar.
\newblock What would edmonds do? augmenting paths and witnesses for
  degree-bounded msts.
\newblock {\em Algorithmica}, 55(1):157--189, Nov 2007.

\bibitem[CVZ10]{cvz-10}
Chandra Chekuri, Jan Vondrak, and Rico Zenklusen.
\newblock Dependent randomized rounding via exchange properties of
  combinatorial structures.
\newblock In {\em 2010 IEEE 51st Annual Symposium on Foundations of Computer
  Science}, pages 575--584. IEEE, 2010.

\bibitem[CVZ11]{cvz-11}
Chandra Chekuri, Jan Vondr{\'a}k, and Rico Zenklusen.
\newblock Multi-budgeted matchings and matroid intersection via dependent
  rounding.
\newblock In {\em Proceedings of the twenty-second annual ACM-SIAM symposium on
  Discrete Algorithms}, pages 1080--1097. SIAM, 2011.

\bibitem[DHZ17]{DuanHZ17}
Ran Duan, Haoqing He, and Tianyi Zhang.
\newblock Near-linear time algorithms for approximate minimum degree spanning
  trees.
\newblock {\em ArXiv}, abs/1712.09166, 2017.

\bibitem[EN19]{en-19}
Alina Ene and Huy~L Nguyen.
\newblock Towards nearly-linear time algorithms for submodular maximization
  with a matroid constraint.
\newblock In {\em International Colloquium on Automata, Languages, and
  Programming}, volume 132, 2019.

\bibitem[FR94]{FurerR94}
M.~F{\"u}rer and B.~Raghavachari.
\newblock Approximating the minimum-degree steiner tree to within one of
  optimal.
\newblock {\em Journal of Algorithms}, 17(3):409--423, Nov 1994.

\bibitem[GK07]{gk-07}
Naveen Garg and Jochen Koenemann.
\newblock Faster and simpler algorithms for multicommodity flow and other
  fractional packing problems.
\newblock {\em SIAM Journal on Computing}, 37(2):630--652, 2007.

\bibitem[GKPS06]{GandhiKPS06}
Rajiv Gandhi, Samir Khuller, Srinivasan Parthasarathy, and Aravind Srinivasan.
\newblock Dependent rounding and its applications to approximation algorithms.
\newblock {\em Journal of the ACM (JACM)}, 53(3):324--360, 2006.

\bibitem[Goe06]{Goemans06}
Michel Goemans.
\newblock Minimum bounded degree spanning trees.
\newblock {\em 2006 47th Annual IEEE Symposium on Foundations of Computer
  Science (FOCS '06)}, 2006.

\bibitem[GSS11]{GharanSS11}
Shayan~Oveis Gharan, Amin Saberi, and Mohit Singh.
\newblock A randomized rounding approach to the traveling salesman problem.
\newblock In {\em 2011 IEEE 52nd Annual Symposium on Foundations of Computer
  Science}, pages 550--559. IEEE, 2011.

\bibitem[GW17]{gw-17}
Kyle Genova and David~P Williamson.
\newblock An experimental evaluation of the best-of-many christofides'
  algorithm for the traveling salesman problem.
\newblock {\em Algorithmica}, 78(4):1109--1130, 2017.

\bibitem[HDLT01]{hlt-01}
Jacob Holm, Kristian De~Lichtenberg, and Mikkel Thorup.
\newblock Poly-logarithmic deterministic fully-dynamic algorithms for
  connectivity, minimum spanning tree, 2-edge, and biconnectivity.
\newblock {\em Journal of the ACM (JACM)}, 48(4):723--760, 2001.

\bibitem[Kar98]{k-98}
David~R Karger.
\newblock Random sampling and greedy sparsification for matroid optimization
  problems.
\newblock {\em Mathematical Programming}, 82(1-2):41--81, 1998.

\bibitem[KKG20a]{KarlinKG20a}
Anna~R Karlin, Nathan Klein, and Shayan~Oveis Gharan.
\newblock An improved approximation algorithm for tsp in the half integral
  case.
\newblock In {\em Proceedings of the 52nd Annual ACM SIGACT Symposium on Theory
  of Computing}, pages 28--39, 2020.

\bibitem[KKG20b]{KarlinKG20b}
Anna~R Karlin, Nathan Klein, and Shayan~Oveis Gharan.
\newblock A (slightly) improved approximation algorithm for metric tsp.
\newblock {\em arXiv preprint arXiv:2007.01409}, 2020.

\bibitem[KLS12]{KLS12}
Tam{\'a}s Kir{\'a}ly, Lap~Chi Lau, and Mohit Singh.
\newblock Degree bounded matroids and submodular flows.
\newblock {\em Combinatorica}, 32(6):703--720, 2012.

\bibitem[KR03]{kr-03}
Jochen K{\"o}nemann and R~Ravi.
\newblock Primal-dual meets local search: approximating mst's with nonuniform
  degree bounds.
\newblock In {\em Proceedings of the thirty-fifth annual ACM symposium on
  Theory of computing}, pages 389--395, 2003.

\bibitem[KR05]{KR05}
Jochen K{\"o}nemann and Ramamoorthi Ravi.
\newblock Primal-dual meets local search: approximating msts with nonuniform
  degree bounds.
\newblock {\em SIAM Journal on Computing}, 34(3):763--773, 2005.

\bibitem[LS18]{LinharesS18}
Andr{\'e} Linhares and Chaitanya Swamy.
\newblock Approximating min-cost chain-constrained spanning trees: a reduction
  from weighted to unweighted problems.
\newblock {\em Mathematical Programming}, 172(1-2):17--34, 2018.

\bibitem[OZ18]{OlverZ18}
Neil Olver and Rico Zenklusen.
\newblock Chain-constrained spanning trees.
\newblock {\em Mathematical Programming}, 167(2):293--314, 2018.

\bibitem[PS97]{PS97}
Alessandro Panconesi and Aravind Srinivasan.
\newblock Randomized distributed edge coloring via an extension of the
  chernoff-hoeffding bounds.
\newblock {\em SIAM J. Comput.}, 26:350--368, 1997.

\bibitem[Qua18]{q-18}
Kent Quanrud.
\newblock Fast and deterministic approximations for $ k $-cut.
\newblock {\em arXiv preprint arXiv:1807.07143}, 2018.

\bibitem[Qua19]{q-thesis}
Kent Quanrud.
\newblock {\em Fast approximations for combinatorial optimization via
  multiplicative weight updates}.
\newblock PhD thesis, University of Illinois, Urbana-Champaign, 2019.

\bibitem[Sch03]{Schrijver-book}
Alexander Schrijver.
\newblock {\em Combinatorial optimization: polyhedra and efficiency},
  volume~24.
\newblock Springer Science \& Business Media, 2003.

\bibitem[Sch18]{Schild18}
Aaron Schild.
\newblock An almost-linear time algorithm for uniform random spanning tree
  generation.
\newblock In {\em Proceedings of the 50th Annual ACM SIGACT Symposium on Theory
  of Computing}, pages 214--227, 2018.

\bibitem[SL15]{ls-15}
Mohit Singh and Lap~Chi Lau.
\newblock Approximating minimum bounded degree spanning trees to within one of
  optimal.
\newblock {\em Journal of the ACM (JACM)}, 62(1):1--19, 2015.

\bibitem[Sri01]{Srinivasan01}
Aravind Srinivasan.
\newblock Distributions on level-sets with applications to approximation
  algorithms.
\newblock In {\em Proceedings 42nd IEEE Symposium on Foundations of Computer
  Science}, pages 588--597. IEEE, 2001.

\bibitem[STV18]{SvenssonTV18}
Ola Svensson, Jakub Tarnawski, and L{\'a}szl{\'o}~A V{\'e}gh.
\newblock A constant-factor approximation algorithm for the asymmetric
  traveling salesman problem.
\newblock In {\em Proceedings of the 50th Annual ACM SIGACT Symposium on Theory
  of Computing}, pages 204--213, 2018.

\bibitem[SV14]{SinghV14}
Mohit Singh and Nisheeth~K Vishnoi.
\newblock Entropy, optimization and counting.
\newblock In {\em Proceedings of the forty-sixth annual ACM symposium on Theory
  of computing}, pages 50--59, 2014.

\bibitem[Tar75]{t-75}
Robert~Endre Tarjan.
\newblock Efficiency of a good but not linear set union algorithm.
\newblock {\em Journal of the ACM (JACM)}, 22(2):215--225, 1975.

\bibitem[Tar83]{t-83}
Robert~Endre Tarjan.
\newblock {\em Data structures and network algorithms}, volume~44.
\newblock Siam, 1983.

\bibitem[TV20]{TraubV20}
Vera Traub and Jens Vygen.
\newblock An improved approximation algorithm for atsp.
\newblock In {\em Proceedings of the 52nd Annual ACM SIGACT Symposium on Theory
  of Computing}, pages 1--13, 2020.

\bibitem[Wan17]{w-thesis}
Di~Wang.
\newblock {\em Fast Approximation Algorithms for Positive Linear Programs}.
\newblock PhD thesis, EECS Department, University of California, Berkeley, Jul
  2017.

\bibitem[Win89]{Win89}
Sein Win.
\newblock On a connection between the existence ofk-trees and the toughness of
  a graph.
\newblock {\em Graphs and Combinatorics}, 5(1):201--205, 1989.

\end{thebibliography}

\appendix

\section{Skipped proofs}\label{app:proofs}
  In this section, we present two proofs that were skipped in the main body of
  the paper. In our fast implementation of randomized swap rounding, the input is 
  an implicit representation of a convex combination of spanning trees
  $\sum_{i=1}^h \delta_i\1_{T_i}$. We need to be able to compute the intersection
  $\bigcap_{i=1}^h T_i$ via this implicit representation. The following two claims
  allow us to do so. For sets $A,B$, let $A \setdiff B$ denote the symmetric 
  difference of $A$ and $B$. We omit the proofs as they are straightforward.
  \begin{lemma}\label{lem:set-1}
    Let $S_1,\ldots, S_h$ be sets and consider the universe of elements
    to be the union of all the sets. We have
    \[
      \bigcup_{i=1}^{h-1} (S_i \setdiff S_{i+1})
      =
      \bigcup_{i=1}^h S_i \setminus \left(\bigcap_{i=1}^h S_i\right).
    \]
  \end{lemma}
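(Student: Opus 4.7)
The plan is to prove set equality by showing both inclusions, treating membership of a fixed element $x$ in each side and reading off what it means.

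For the forward inclusion $\bigcup_{i=1}^{h-1}(S_i \setdiff S_{i+1}) \subseteq \bigcup_{i=1}^h S_i \setminus \bigcap_{i=1}^h S_i$, I would observe that if $x \in S_i \setdiff S_{i+1}$ for some $i$, then by definition $x$ lies in exactly one of $S_i$ and $S_{i+1}$. In particular, $x$ belongs to at least one of the $S_j$'s (so $x \in \bigcup_j S_j$) and $x$ fails to belong to at least one of them (so $x \notin \bigcap_j S_j$). This gives $x$ on the right-hand side.

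For the reverse inclusion, suppose $x \in \bigcup_{i=1}^h S_i \setminus \bigcap_{i=1}^h S_i$. Then there exist indices $j$ and $k$ with $x \in S_j$ and $x \notin S_k$; without loss of generality $j < k$. The main step is a discrete intermediate value argument along the sequence $S_j, S_{j+1}, \ldots, S_k$: since $x$ starts inside ($S_j$) and ends outside ($S_k$), there must be some $i$ with $j \le i < k$ such that $x \in S_i$ and $x \notin S_{i+1}$. For this $i$ we have $x \in S_i \setdiff S_{i+1}$, placing $x$ in the left-hand side.

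There is essentially no obstacle here; the only thing requiring a sentence of care is the discrete intermediate value step, which follows from the well-ordering of the finite index set $\{j, j+1, \ldots, k\}$ (take the smallest index $i \ge j$ with $x \notin S_{i+1}$, noting this set is nonempty because it contains $k-1$). Combining the two inclusions yields the claimed equality.
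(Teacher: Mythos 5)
Your proof is correct; the paper omits the proof of this lemma as ``straightforward,'' and your elementwise double-inclusion argument (with the discrete intermediate-value step for the reverse direction) is exactly the standard argument the authors intend. The only point worth a word of care is the ``without loss of generality $j<k$'' step, which is justified because the symmetric difference is symmetric, so the case $j>k$ yields an index $i$ with $x\notin S_i$ and $x\in S_{i+1}$, which still places $x$ in $S_i\setdiff S_{i+1}$.
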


  \begin{lemma}\label{lem:set-2}
    Let $S_1,\ldots, S_h$ be sets and consider the universe of elements
    to be the union of all the sets. We have
    \[
      S_1 \setminus \left(\bigcup_{i=1}^{h-1} (S_i \setdiff S_{i+1})\right)
      =
      \bigcap_{i=1}^h S_i.
    \]
  \end{lemma}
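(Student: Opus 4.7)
The plan is to prove Lemma~\ref{lem:set-2} by reducing it to Lemma~\ref{lem:set-1}, which has already been stated. Set
\[
  U := \bigcup_{i=1}^h S_i, \qquad I := \bigcap_{i=1}^h S_i, \qquad D := \bigcup_{i=1}^{h-1}(S_i \setdiff S_{i+1}),
\]
so Lemma~\ref{lem:set-1} says $D = U \setminus I$. The claim is then $S_1 \setminus D = I$.

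I would prove this via two trivial set-theoretic observations: (i) $S_1 \subseteq U$, which is immediate from the definition of the union, and (ii) $I \subseteq S_1$, which is immediate from the definition of the intersection. Given (i) and (ii), the computation is a one-liner:
\[
  S_1 \setminus D
  = S_1 \setminus (U \setminus I)
  = (S_1 \setminus U) \cup (S_1 \cap I)
  = \emptyset \cup I
  = I,
\]
where the second equality uses the identity $A \setminus (B \setminus C) = (A \setminus B) \cup (A \cap C)$ when $C \subseteq B$ (here $I \subseteq U$), the third uses (i) to kill $S_1 \setminus U$ and (ii) to simplify $S_1 \cap I = I$.

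If one prefers not to invoke Lemma~\ref{lem:set-1}, the backup plan is a direct double-inclusion argument on elements. For the forward direction, let $x \in S_1 \setminus D$; then $x \in S_1$, and for each $i \in [h-1]$ the condition $x \notin S_i \setdiff S_{i+1}$ combined with $x \in S_i$ forces $x \in S_{i+1}$, so by induction $x \in S_i$ for all $i$, i.e.\ $x \in I$. The reverse direction is immediate: if $x \in I$ then $x$ lies in every $S_i$ and hence in none of the symmetric differences, and $x \in S_1$ trivially. There is no real obstacle here; the only mildly subtle point is the identity $A \setminus (B \setminus C) = (A \setminus B) \cup (A \cap C)$ used in the short proof, which is why I would lead with the elementwise argument if space permits.
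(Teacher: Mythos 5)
Your proposal is correct. Note that the paper does not actually supply a proof of this lemma (or of Lemma~\ref{lem:set-1}); it explicitly states ``We omit the proofs as they are straightforward,'' so there is nothing to compare against, and your argument fills the gap cleanly. Both of your routes work: the reduction to Lemma~\ref{lem:set-1} via $S_1 \setminus (U \setminus I) = (S_1\setminus U)\cup(S_1\cap I)$ is fine (and in fact that identity holds without the side condition $C \subseteq B$), and the elementwise induction --- $x \in S_i$ together with $x \notin S_i \setdiff S_{i+1}$ forces $x \in S_{i+1}$ --- is exactly the ``straightforward'' argument the authors had in mind; leading with it, as you suggest, is the right call.
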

      
  \begin{proof}[Proof of Lemma~\ref{lem:divide-swap}.]
    By Lemma~\ref{lem:set-2}, $B \setminus \bigcup_{i=1}^{h-1} (E_i \cup E_i')
    = \bigcap_{i=1}^h B_i$. Then $B \setminus \bigcap_{i=1}^h B_i
    = B \setminus \left(B \setminus \bigcup_{i=1}^{h-1} (E_i \cup E_i')\right)
    = B \cap \bigcup_{i=1}^{h-1} (E_i \cup E_i')$. Thus, $\hat{B}$ in
    $\divideswap$ is the same as $B$ with the intersection $\bigcap_{i=1}^hB_i$
    removed.

    To see that $\divideswap$ implements $\swap$, it is easy to see
    via Lemma~\ref{lem:cont-matroid-base}
    that for two bases $B,B'$, if one contracts $A \subseteq B \cap B'$  
    in $B$ and $B'$ to obtain $\hat{B}$ and $\hat{B}'$, respectively,
    then $A \cup \merge(\delta, \hat{B},\delta',\hat{B}') \eqdist 
    \merge(\delta,B,\delta',B')$. We can inductively apply this argument to see 
    $\swap(\delta_1,B_1,\ldots,\delta_h,B_h) \eqdist
    \divideswap(B_1,\{(E_i,E_i')\}_{i=1}^{h-1}, \{\delta_i\}_{i=1}^h)$.
  \end{proof}

  \begin{proof}[Proof of Lemma~\ref{lem:shrink-int}.]
    We first prove the guarantees on the output. Lemma~\ref{lem:set-2}
    shows that $T_1 \setminus \bigcup_{i=1}^{h-1}(E_i \cup E_i') =
    \bigcap_{i=1}^h T_i$. It is therefore clear that $\shrinkint$ contracts
    $\bigcap_{i=1}^h T_i$ in $T_1$, so $\hat{T} = \hat{T}_1$.
    As $\hat{E}_i = \hat{T}.\represented(E_i)$ and $\hat{E}_i' = \hat{T}.\represented(E_i')$,
    by Lemma~\ref{lem:diff-cont}, we have by induction that
    $\hat{T}_{i+1} = \hat{T}_i - \hat{E}_i +
    \hat{E}_i'$ for all $i \in [h-1]$. As we construct $\hat{E}_i$ and
    $\hat{E}_i'$ by adding each edge from $E_i$ and $E_i'$, respectively, it
    is clear that $\abs{E_i} = \abs{\hat{E}_i}$ and $\abs{E_i'} =\abs{\hat{E}_i'}$ for all
    $i$. Furthermore, as $\shrinkint$ contracts
    $T_1 \setminus \bigcup_{i=1}^{h-1}(E_i \cup E_i')$ in $T_1$ to obtain
    $\hat{T}_1 = \hat{T}$, we have $\abs{\hat{T}} \le \min\{n_{T_1},\gamma\}$.

    We next prove the bound on the running time.
    By Lemma~\ref{lem:forest}, the running time of $\copyy$ is
    proportional to size of the tree being copied, so the running
    time of $T_1.\copyy()$ is $\tO(n_{T_1})$. We note that we can compute
    $T_1 \setminus \bigcup_{i=s}^{t-1}(E_i \cup E_i')$ easily as we have the time
    to scan $T$ and $\bigcup_{i=s}^{t-1}(E_i \cup E_i')$. Then any data structure
    that supports $\tO(1)$ lookup
    time will suffice, such as a balanced binary search tree.
    Applying Lemma~\ref{lem:cont-run},
    iterating over $T_1 \setminus \bigcup_{i=1}^{h-1}(E_i \cup E_i')$
    and contracting the edges in $\hat{T}$ requires $\tO(n_{T_1} + \gamma)$ time.
    Finally, $\shrinkint$ iterates
    over all sets in $\{(E_i,E_i')\}_{i=1}^{h-1}$ to create
    $\hat{E}_i,\hat{E}_i'$ for all $i \in [h-1]$. As
    one call to $\represented$ takes $O(1)$ time by Lemma~\ref{lem:forest}, the
    final for loop requires $O(\gamma)$ time. This concludes the proof.
  \end{proof}  

    \section{Disjoint-set data structure}
    The goal of disjoint-set data structures is to maintain a family of disjoint sets
    $\cS = \{S_1,\ldots, S_k\}$ defined over a universe of elements $\cU$ such
    that $\bigcup_{i=1}^k S_i = \cU$. The
    data structure needs to support the ability to combine two sets via
    their union and to be able to quickly identify the containing set of an element
    in $\cU$.
    We need an implementation of the disjoint-set data structure to support
    the following four operations, of which the first three are standard. As a matter
    of notation, for an element $u \in \cU$, let $S_u$ be the set containing $u$.
    \begin{itemize}\itemsep0pt
      \item
        $\makeset(u)$: creates the singleton set $\{u\}$
      \item
        $\findset(u)$: returns the representative of the set $S_u$
      \item
        $\union(u,v)$: replaces $S_u$ and $S_v$ with $S_u \cup S_v$ and
        determines a new representative for $S_u \cup S_v$
      \item
        $\changerep(u,v)$: assuming $S_u = S_v$, if the representative of $S_u$
        is not $v$, then make $v$ the representative of $S_u$
    \end{itemize}

    The first three operations are supported by optimal implementations
    of a disjoint-set data structure~\cite{t-75,t-83}. The only operation that is
    not explicitly supported is $\changerep$. We use this operation in
    Section~\ref{sec:swap} so that when we contract an edge $uv$ in a graph
    by identifying $u$ with $v$, we can guarantee that $v$ will be the
    representative of the set $\{u,v\}$.

    To see that $\changerep$ is easy to implement, we consider the
    so-called union-by-rank heuristic as presented in~\cite{t-83}. The 
    implementation represents sets by binary trees, one for each set. The nodes 
    in the tree corresponding to the set $S_u$ each represent an element in $S_u$.
    The root of the tree is arbitrarily made to be the representative of $S_u$. The
    element corresponding to a node has no bearing on the
    analysis of the data structure. Thus,
    to implement $\changerep(u,v)$, we can access the nodes in the
    tree $S_u$ corresponding to $u$ and $v$ and simply swap the fields
    that determines the elements the nodes contain and fix all relevant pointers.

    As $\changerep$ can easily be implemented in $O(1)$ time without changing
    the data structure, the following theorem from~\cite{t-75} still holds.
    \begin{theorem}[\cite{t-75}]\label{thm:disjoint-set}
      For any sequence of $m$ calls to $\makeset$, $\findset$, 
      $\union$, and $\changerep$ where at least $n$ of the calls are to 
      $\makeset$, the total running time is at most $O(m\alpha(m,n))$, where 
      $\alpha(m,n)$ is the inverse Ackermann
      function.
    \end{theorem}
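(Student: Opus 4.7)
The plan is to reduce to Tarjan's classical analysis of disjoint-set forests with union-by-rank and path compression, whose amortized bound of $O(\alpha(m,n))$ per operation already applies to the three standard operations $\makeset$, $\findset$, and $\union$. I would implement $\changerep$ as an $O(1)$ operation that leaves the underlying forest structure, namely the parent pointers, the ranks, and the subtree sizes, entirely untouched. Consequently, inserting $\changerep$ calls into any operation sequence neither creates nor destroys any invariant used in the classical potential-function argument, and the overall bound carries over without modification.

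Concretely, I would maintain two layers of references. Each element $x \in \cU$ stores a pointer to a tree node $n_x$, and each tree node stores back the element label it currently represents. Union-by-rank and path compression act only on tree nodes and never inspect element labels, and $\findset(x)$ is implemented by starting at $n_x$, walking up to the root of its tree using path compression, and returning the element label stored at that root. To execute $\changerep(u,v)$ when $u$ and $v$ lie in the same set, I would locate the root $r$ of the tree containing $u$, locate the node $n_v$ via the element-to-node pointer of $v$, swap the two element labels stored at $r$ and at $n_v$, and simultaneously update the element-to-node pointers for $u$ and $v$ so that $u$ now points to $n_v$ and $v$ now points to $r$. After this swap, $r$ carries the label $v$, so every subsequent $\findset$ in this set returns $v$, exactly as required, and every subsequent $\findset(u)$ still walks from $n_v$ up to $r$ and returns $v$ as the correct representative.

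The one subtlety is that locating the root inside $\changerep$ costs $O(\alpha(m,n))$ amortized rather than $O(1)$, but this is precisely a $\findset$-like traversal and can be charged as such without exceeding the stated bound; the label swap itself and the pointer updates are strictly $O(1)$. The main thing to verify is that path compression performed during this traversal remains compatible with the label swap: this is immediate because compression only redirects parent pointers and never reads or writes an element label, so the post-compression tree rooted at $r$ still has $r$ as its root after the swap. Since the augmented sequence of operations induces exactly the same sequence of structural changes (links, rank updates, and parent redirections) as in the classical setting, Tarjan's potential-function analysis applies verbatim and yields the claimed $O(m \alpha(m,n))$ total bound. The main obstacle, if any, is bookkeeping the element-to-node map carefully under both $\union$ and $\changerep$; once that is pinned down, the rest is a direct appeal to~\cite{t-75}.
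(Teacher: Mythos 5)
Your proof takes essentially the same route as the paper: decouple the element labels from the union-by-rank/path-compression forest, implement $\changerep$ as a label swap that makes no structural change to the trees, and conclude that Tarjan's amortized analysis from \cite{t-75} applies verbatim since the sequence of links, rank updates, and parent redirections is unchanged. The only differences are cosmetic --- the paper swaps the labels of the nodes for $u$ and $v$ directly in $O(1)$ time (which suffices because $\changerep(u,v)$ is invoked only right after $\union(u,v)$, so the representative is already $u$ or $v$), whereas you swap at the root at an amortized $O(\alpha(m,n))$ cost that you correctly charge as a find; the one bookkeeping slip is that your element-to-node pointer update should reassign the element whose label was stored at the root (not necessarily $u$) to point to $n_v$.
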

    
\section{Data structure to represent forests}\label{app:forests-ds}
  In this section, we present the forest data structure we need in order to give a fast 
  implementation of randomized swap rounding. 
  We give pseudocode for the implementation in
  Figure~\ref{fig:forest}. The data structure is initialized via the function $\init$, 
  which takes as input the vertices, edges, and unique edge identifiers of the forest. 
  $\init$ initializes an adjacency list $A$, stores a mapping $f$ of edges to their 
  unique edge identifiers, and creates a disjoint-set data structure $R$ where every 
  vertex initially is in its own set.
  As noted in Section~\ref{sec:prelim}, the main operation
  we need to support is contraction of an edge. The operation $\contract$ contracts
  an edge $uv$ in the forest by identifying the vertices $u$ and $v$ as the same
  vertex. This requires choosing $u$ or $v$ to be the new representative (suppose
  we choose $u$ without loss of generality), merging 
  the sets corresponding to $u$ and $v$ in $R$ while making $u$ the representative of the
  set in $R$, and modifying the adjacency list $A$ to reflect
  the changes corresponding to contracting $uv$ and making $u$ the representative.
  After an edge is contracted, the vertex set changes. We need to support the ability
  to obtain the edge identifier of an edge in the contracted forest. The data structure 
  maintains $f$ under edge contractions and returns unique identifiers with the 
  operation $\origedge$. Given an edge $uv$ that was in the contracted forest at
  some point in the past, we also need to support the ability to obtain the edge in
  the vertex set of the current contracted forest. We do this using $R$, which 
  stores all of the vertices that have been contracted together in disjoint sets.
  This operation is supported by the operation $\represented$. Finally, we can copy the
  graph via the operation $\copyy$, which simply enumerates over the vertices, edges,
  and stored unique identifiers of the edges to create a new data structure.
  
  We implement $f$ as a balanced binary search tree that 
  assumes some arbitrary ordering over all possible edges. Also, each list in the 
  adjacency list will be represented by a balanced binary search tree to support fast 
  lookup times in the lists. 
    \begin{figure}[t]
    \begin{multicols}{2}
      \noindent$\init(V,E)$
      \begin{algorithmic}
        \STATE initialize $A$ to be adjacency list of $(V,E)$
        \STATE for all $e \in E$, set $f(e) = id(e)$
        \STATE $R \gets$ empty disjoint-set data structure
        \STATE for all $v \in V$, $R.\makeset(v)$
      \end{algorithmic}
      
      \medskip

      \noindent$\copyy()$
      \begin{algorithmic}
        \STATE for all $u \in A$, add $u$ to $V'$
        \STATE for all $u \in A$ and all $v \in A[u]$, add $uv$ to $E$
        \STATE assume the id's of edges in $E$ are given by $f$
        \RETURN{$\init(V',E')$}
      \end{algorithmic}      

      \medskip

      \noindent$\represented(uv)$
      \begin{algorithmic}
        \STATE $u_r \gets R[u]$; $v_r \gets R[v]$
        \RETURN{$u_rv_r$}
      \end{algorithmic}  
      
      \columnbreak      

      \noindent$\origedge(uv)$
      \begin{algorithmic}
        \RETURN{$f(uv)$}
      \end{algorithmic}    
      
      \medskip      

      \noindent$\contract(uv, z \in \{u,v\})$
      \begin{algorithmic}
        \STATE assume without loss of generality $z = u$
        \FOR{$w \in A[u] \setminus \{v\}$}
          \STATE replace $u$ in $A[w]$ with $v$
          \STATE add $w$ to $A[v]$
          \STATE $f(vw) \gets f(uw)$
          \STATE remove $f(uw)$
        \ENDFOR
        \STATE remove $u$ from $A$
        \STATE $R.\union(u,v)$
        \STATE $R.\changerep(u,v)$
      \end{algorithmic}
    \end{multicols}
    \caption{Implementation of a data structure to represent forests.}
    \label{fig:forest}
    \end{figure}    
    \begin{lemma}\label{lem:forest}
      Let $F = (V,E)$ be a forest and for all $e\in E$, let $id(e)$ be the unique 
      identifier of $e$. The data structure in Figure~\ref{fig:forest}
      can be initialized via $\init$ in
      $\tO(\abs{V})$ time. For $i = 0,1,\ldots, k$, let $F_i = (V_i,E_i)$ be
      the forest after $i$ calls to $\contract$, so $F_0 = F$ and $F_k$ is the current
      state of the forest. We 
      assume $F_i$ is the state of the graph just prior to the $(i+1)$-th call to 
      $\contract$. The data structure supports the following operations.
      \begin{itemize}\itemsep0pt
        \item
          $\origedge(e)$: input is an edge $e \in E_k$.
          Output is the identifier $id(e)$ that was provided when $e$ was
          added to the data structure. Running time is $\tO(1)$.
        \item
          $\represented(e)$: input is two vertices $u, v \in V_i$ for some 
          $i = 0,1,\ldots, k$. Output is the pair $u_rv_r$, where $u_r$ and $v_r$ 
          are the vertices that $u$ and $v$ correspond to in the contracted forest
          $F_k$. Running time is $O(1)$.
        \item
          $\contract(uv, z \in \{u,v\})$: input is two vertices
          $u,v \in V_k$. The operation contracts $uv$ in $E_k$, setting
          the new vertex in $E_{k+1}$ to be $\{u,v\} \setminus \{z\}$. The
          amortized running time is $\tO(\deg_{F_k}(z))$.
        \item
          $\copyy()$: output is a forest data structure with vertices $V_k$ and
          edges $E_k$ along with the stored edge identifiers. Running
          time is $\tO(\abs{V_k})$.
      \end{itemize}
    \end{lemma}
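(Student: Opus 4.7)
The plan is to verify each operation in turn, establishing correctness and the stated running time bounds from two off-the-shelf components: balanced binary search trees (used to represent the adjacency list $A$ and the map $f$), and the disjoint-set structure $R$ from Theorem~\ref{thm:disjoint-set}. For $\init(V,E)$, building $A$ and $f$ costs $\tO(\abs{V})$ since $F$ is a forest and $\abs{E}\le\abs{V}-1$, while the $\abs{V}$ calls to $R.\makeset$ add another $\tO(\abs{V})$ by Theorem~\ref{thm:disjoint-set}. $\origedge(uv)$ is a single BST lookup in $f$, so $\tO(1)$; $\represented(uv)$ is two $R.\findset$ calls, which we absorb into the claimed $O(1)$ bound (treating the inverse Ackermann factor as constant). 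For $\copyy$, walking $A$ enumerates the current $V_k$, $E_k$ and stored $f$ values in $\tO(\abs{V_k})$ time (using $\abs{E_k}\le\abs{V_k}-1$), and the subsequent $\init$ on this data costs another $\tO(\abs{V_k})$.

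The substantive step is $\contract(uv,z)$. I would maintain two invariants across any sequence of contractions: (i) $A$ is the adjacency list of the current contracted forest, indexed by the current representatives, and (ii) for every edge $e'$ of the current forest, $f(e')$ equals $id(e)$ for the unique original edge $e\in E$ that $e'$ descends from. Take $z=u$ without loss of generality. The loop iterates over $w\in A[u]\setminus\{v\}$, replacing $u$ by $v$ in $A[w]$, inserting $w$ into $A[v]$, and moving $f(uw)$ to $f(vw)$; each step is $O(\log n)$ BST work. Finally $R.\union(u,v)$ followed by $R.\changerep(u,v)$ merges the two disjoint-set classes and promotes $v$ to be the representative. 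The total cost is $O(\deg_{F_k}(u)\log n) + \tO(1)$ amortized, i.e., $\tO(\deg_{F_k}(z))$. A key point is that because $F_k$ is a forest, $N_{F_k}(u)\setminus\{v\}$ and $N_{F_k}(v)\setminus\{u\}$ are disjoint (otherwise a common neighbor $w$ would close a triangle with $uv$), so each addition to $A[v]$ is to a fresh key and the update is unambiguous.

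The main obstacle, to the extent there is one, is verifying that invariants (i) and (ii) carry across an arbitrary sequence of contractions, so that $\represented$ and $\origedge$ correctly answer queries whose arguments are vertex/edge names from any earlier forest $F_i$. This reduces to observing that $R$ is only modified by the $\union$ and $\changerep$ inside $\contract$: $\union$ places $u$ and $v$ into a common class, and $\changerep$ ensures the representative is the surviving vertex $v$. Hence $R.\findset(u)$ always returns the current representative of the set that contained $u$ when it was last in the vertex set, which is exactly what $\represented$ needs. Composing this with invariant (ii) shows that $\origedge$ recovers the correct original identifier for any edge in any intermediate forest, completing the argument.
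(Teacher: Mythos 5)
Your proposal is correct and follows essentially the same route as the paper's proof: maintain invariants for the adjacency list $A$, the identifier map $f$, and the disjoint-set structure $R$; verify that $\contract$ (the only structurally mutating operation) preserves them; and charge the running times to balanced-BST operations and Theorem~\ref{thm:disjoint-set}. Your extra observation that the forest structure prevents key collisions when merging adjacency lists is a nice detail the paper leaves implicit, but it does not change the argument.
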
  
  \begin{proof}
      We first describe the invariants maintained for $A$, $f$, and $R$
      to prove correctness of each operation. $A$ maintains the
      adjacency list. $f$ stores the set of id's of the current set of edges. 
      The sets in $R$, a disjoint set data structure over $V$, 
      correspond to the vertices that have been contracted 
      together. It is easy to see that $\init$ guarantees that all of the invariants hold.
      
      We note that $\contract$ is the
      only operation that changes anything structurally, so we show how this
      operation preserves the invariants. Suppose without loss of
      generality that $z = u$. In $\contract$, all neighbors of $u$ need to be made
      neighbors of $v$ except for $v$, all of the edges incident to $u$
      need to be removed, and $u$ needs to be removed as a vertex. All of
      these actions are taken in $\contract$, so $A$ is updated correctly. 
      As $u$ and $v$ are combined as one vertex,
      we must union the sets in $R$ and guarantee that $v$ is the representative.
      As for $f$, we simply carry the edge id's over to the edges that were 
      incident to $u$ and are now incident to $v$. Thus, $\contract$ maintains all
      invariants.
      The correctness of the other operations follows from the fact that the
      invariants for each of the structures $A$, $f$, and $R$ hold as stated.
      
      Regarding the running times, as we noted above, $f$ is implemented via
      a balanced binary search tree and so are the lists in $A$. The running time of the 
      calls to $R$ are $\tO(1)$ amortized time by Theorem~\ref{thm:disjoint-set}. The 
      only operation whose running time is not obvious is $\contract$. The method 
      executes $O(\deg_{F_k}(u))$ operations that all take $\tO(1)$ time. 
      We use the above observations about the implementation to see that the 
      remaining operations take $\tO(1)$ amortized time. 
      Thus, $\contract$ runs in $\tO(\deg_{F_k}(u))$ amortized time.
      It is easy to verify the running times of all of the other operations.
      This concludes the proof.
  \end{proof}
  
    Now that we have the implementation of the forest data structure given
    in Lemma~\ref{lem:forest}, we provide a basic fact about contracting
    edges in trees using the data structure. The lemma states
    that the $\represented$ function allows one to essentially contract subtrees
    without having to explicitly contract them via a data structure.

    \begin{lemma}\label{lem:diff-cont}
      Let $T$ be a spanning tree and $A, A'$ be disjoint sets of edges defined on the
      same set of vertices as $T$. Let $T_1 = T$ and $T_2 = T_1 - A + A'$ and
      assume $T_2$ is also a spanning tree.

      Contract $B \subseteq T_1 \cap T_2$ in $T_1$ and $T_2$ to obtain
      $\hat{T}_1$ and $\hat{T}_2$. Let $\tilde{A} = \hat{T}_1.\represented(A)$ and
      $\tilde{A}' = \hat{T}_1.\represented(A')$. Let
      $\tilde{T}_2 = \hat{T}_1 - \tilde{A} + \tilde{A}'$. Then $\hat{T}_2 = \tilde{T}_2$.
    \end{lemma}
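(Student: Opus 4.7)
The plan is to show that contraction commutes with the symbolic $-A+A'$ operation, given the disjointness constraints imposed by $B \subseteq T_1 \cap T_2$. First I would fix the natural structural setup: since $T_2 = T_1 - A + A'$ is a spanning tree and $A \cap A' = \emptyset$, I may assume $A \subseteq T_1$ and $A' \cap T_1 = \emptyset$ (otherwise replace $A$ with $A \cap T_1$ and $A'$ with $A' \setminus T_1$, leaving $T_2$ unchanged). Consequently $T_1 \cap T_2 = T_1 \setminus A$, so the hypothesis $B \subseteq T_1 \cap T_2$ gives $B \cap A = \emptyset$ and $B \subseteq T_1$, whence $B \cap A' = \emptyset$ as well.

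Next I would isolate the key observation about $\represented$. Contracting $B$ identifies the endpoints of every edge in $B$, and the resulting vertex partition of $V$ is determined by the connected components of the graph $(V, B)$ alone; it does not depend on whether $B$ is viewed as a subset of $T_1$ or of $T_2$. So the representative of each vertex is the same in $\hat{T}_1$ and in $\hat{T}_2$, and $\represented$ acts identically on any edge $uv$ regardless of which contracted forest is queried. I would then record two consequences: (i) for any tree $T \supseteq B$, the contracted tree equals $\{\represented(e) : e \in T \setminus B\}$; and (ii) $\represented$ is injective on $T \setminus B$, since contracting $|B|$ edges in a tree on $|V|$ vertices yields a tree on $|V|-|B|$ vertices, whose $|T|-|B|$ edges are in bijection with $T \setminus B$.

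With these pieces in place the remainder is a direct computation. Because $B$ is disjoint from both $A$ and $A'$, and $A \cap A' = \emptyset$, we have $T_2 \setminus B = ((T_1 \setminus B) \setminus A) \cup A'$ as a disjoint union. Applying $\represented$ elementwise,
\begin{align*}
\hat{T}_2
&= \{\represented(e) : e \in T_2 \setminus B\} \\
&= \{\represented(e) : e \in (T_1 \setminus B) \setminus A\} \;\cup\; \{\represented(e) : e \in A'\} \\
&= (\hat{T}_1 \setminus \tilde{A}) \cup \tilde{A}' \;=\; \hat{T}_1 - \tilde{A} + \tilde{A}' \;=\; \tilde{T}_2,
\end{align*}
where the penultimate equality uses injectivity of $\represented$ on $T_1 \setminus B \supseteq A$, so that removing $\tilde{A}$ from $\hat{T}_1$ exactly corresponds to deleting the images of $A$.

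The hard part, such as it is, is bookkeeping: verifying that $\represented$ is well defined on edges whose endpoints may have been contracted (it is, because the representative of each vertex is stored in the disjoint-set structure $R$ of Lemma~\ref{lem:forest} and is invariant across the two trees), and making precise that the set operations on represented edges line up with the set operations on original edges. Both reduce to the injectivity of $\represented$ on $T_1 \setminus B$ together with the disjointness of $B$ from $A$ and $A'$, so no new idea is required beyond the structural observation above.
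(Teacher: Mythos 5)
Your proof is correct and follows essentially the same route as the paper's: both arguments amount to observing that contracting $B \subseteq T_1 \cap T_2$ induces the same vertex identifications in either tree, so the operation $-A+A'$ commutes with contraction once $A, A'$ are replaced by their $\represented$ images. The paper states this in two sentences as "immediate"; you simply fill in the supporting bookkeeping (the normalization $A \subseteq T_1$, $A' \cap T_1 = \emptyset$, and the injectivity of $\represented$ on $T_1 \setminus B$), which is a faithful elaboration rather than a different argument.
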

    \begin{proof}
      Suppose that we contract $B$ in $A$ and $A'$ to obtain $\hat{A}$ and
      $\hat{A}'$. It is easy to see that $\hat{A} = \tilde{A}$
      and $\hat{A}' = \tilde{A}'$. So all that remains to show
      is that $\hat{T}_1 - \hat{A} - \hat{A}'$ is equal to $\hat{T}_2$.
      This follows immediately from the fact that $T_2 = T_1 - A + A'$.
    \end{proof}

    Suppose that we initialize the data structure of Lemma~\ref{lem:forest} for
    a tree $T$. Suppose further that we contract the tree down to a single 
    vertex, and when we are contracting an edge $uv$, we call
    $\contract(uv, z)$ where $z$ is the vertex $\argmin_{w \in \{u,v\}} \deg(w)$.
    Lemma~\ref{lem:forest} shows that a single contraction requires time
    roughly proportional to the degree of $z$. The following lemma then shows
    that contracting $T$ down to a single vertex via the data structure of
    Lemma~\ref{lem:forest} requires $\tO(n)$ time. Note that the
    lemma holds for a more general setting.

    \begin{lemma}\label{lem:cont-run}
      Let $G = (V,E)$ be a graph on $m$ edges and $n$ vertices.  
      Let $G_k$ be the graph after $k-1$ edge contractions, so $G_1 = G$.
      Let $u_kv_k$ be the edge contracted in $G_k$ for $k \in [n-1]$.
      We allow any number of edge deletions between edge contractions and
      assume that $G_k$ is the state of the graph just prior to the edge $u_kv_k$
      being contracted.

      Suppose there exists an $\alpha > 0$ such that for all $k \in [n-1]$, the time 
      it takes to contract $u_kv_k$ in $G_k$ is at most
      $\alpha \cdot \min\{\deg_{G_k}(u_k), \deg_{G_k}(v_k)\}$. Then
      the total running time to contract the edges in any order is 
      $O(\alpha m \log n)$.
    \end{lemma}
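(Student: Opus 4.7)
The plan is an amortized small-to-large charging argument in the spirit of the classical union-by-size analysis for disjoint-set forests. View each edge currently in the graph as contributing a \emph{half-edge} at each of its endpoints, so the adjacency list at a super-vertex $v$ consists of exactly $\deg(v)$ half-edges. A contraction $u_kv_k$ with $\deg_{G_k}(u_k)\le\deg_{G_k}(v_k)$ is interpreted as merging the smaller list at $u_k$ into the larger one at $v_k$: the half-edge for the contracted edge at $u_k$ is deleted, and the remaining $\deg_{G_k}(u_k)-1$ half-edges are ``moved'' into $v_k$'s merged list. The contraction's cost of $\alpha\cdot\deg_{G_k}(u_k)$ is thus $\alpha$ times the number of half-edge deletions and moves it triggers at $u_k$, so it suffices to bound the total number of half-edge moves over the entire execution by $O(m\log n)$.

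The key observation is that each such move sends a half-edge from a list of size $\deg_{G_k}(u_k)$ into a merged list of size $\deg_{G_k}(u_k)+\deg_{G_k}(v_k)-2\ge 2\deg_{G_k}(u_k)-2$, so its containing list roughly doubles. In the absence of edge deletions this immediately yields, via the classical union-by-size argument, that each half-edge is moved at most $O(\log n)$ times and hence the total number of moves is $O(m\log n)$. To handle the arbitrary edge deletions allowed between contractions I would use the potential $\Phi(G_k)=\sum_{h}\log(|L(h)|)$, summed over currently existing half-edges $h$ with $L(h)$ the adjacency list containing $h$; each productive move (one with $\deg_{G_k}(u_k)\ge 3$) increases $\Phi$ by at least the constant $\log(4/3)$, while each edge deletion, whether arbitrary or induced by a contraction, decreases $\Phi$ by at most $O(\log n)$. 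Since $\Phi_{\text{initial}}\le 2m\log n$, $\Phi_{\text{final}}=0$, and the total number of edge deletions over the entire execution is $O(m)$, the total $\Phi$-gain from moves is $O(m\log n)$, bounding the number of productive moves by $O(m\log n)$. The small-degree corner cases with $\deg_{G_k}(u_k)\in\{1,2\}$ are handled separately: each such contraction costs at most $2\alpha$, there are at most $n-1$ of them, and performing $n-1$ contractions forces $n-1\le m$, so they contribute at most $O(\alpha m)$ in total. Combining these gives the claimed $O(\alpha m\log n)$ bound.

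The main obstacle is exactly the interplay between the small-to-large doubling and the freely allowed edge deletions: without deletions each half-edge would sit in a monotonically growing list and the $O(\log n)$ per-half-edge bound would be immediate, but with deletions a half-edge's list can shrink between its successive moves, so the doubling argument does not compose per half-edge and one genuinely needs the global potential-function accounting described above to absorb the effect of deletions uniformly, together with the separate treatment of the degenerate $\deg_{G_k}(u_k)\in\{1,2\}$ cases where the log increase vanishes.
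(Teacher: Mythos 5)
Your proof is correct, but it takes a genuinely different route from the paper's. You run the small-to-large argument on the \emph{current} adjacency-list sizes (i.e., on $\deg_{G_k}$), which is the quantity the cost is actually stated in terms of; the price you pay is that deletions can shrink lists between moves, so the per-half-edge ``my list doubled'' argument breaks and you need the global potential $\Phi=\sum_h \log|L(h)|$ to absorb the at most $m$ deletions, each costing $O(\log n)$ of potential, plus a separate treatment of the $\deg_{G_k}(u_k)\in\{1,2\}$ contractions where the $\log(4/3)$ gain vanishes. The paper instead charges against a \emph{monotone} size measure: it lets $V_k(w)$ be the set of original vertices merged into the super-vertex $w$, observes that $\min\{\deg_{G_k}(u_k),\deg_{G_k}(v_k)\}\le\sum_{v\in V_k(w)}\deg_G(v)$ for whichever side $w$ has the smaller $\abs{V_k(w)}$, and charges $\alpha\deg_G(v)$ to each original vertex $v$ on that side. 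Since $\abs{V_k(\cdot)}$ only grows and at least doubles for the charged vertices, each original vertex is charged $O(\log n)$ times, giving $O(\alpha\log n\sum_v\deg_G(v))=O(\alpha m\log n)$ with no case analysis and with deletions causing no trouble at all (they only decrease the left-hand side of the charging inequality). Your approach is a faithful amortization of the data structure as implemented; the paper's is shorter and more robust because the doubling quantity is decoupled from the degrees that deletions perturb. Two minor points to tighten in your write-up: when $\deg_{G_k}(u_k)=1$ the lists at $v_k$ shrink by one, which you should fold into the $O(\log n)$ (in fact $O(1)$) potential drop you already budget for the deletion of the contracted edge; and your list sizes are bounded by the maximum degree after contractions, which is $O(m)$ rather than $O(n)$ for multigraphs, so strictly you get $O(\alpha m\log m)$ in full generality --- harmless here since $m\le\poly(n)$ in all uses, but worth a sentence.
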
    
    \begin{proof}
      Let $k \in[n-1]$. For $w \in G_k$, $w$ is the result of
      some number of edge contractions and therefore corresponds to a subgraph of
      $G$. Let $V_k(w)$ be the vertices of this subgraph.
      Without loss of generality, assume $\abs{V_k(u_k)} \le \abs{V_k(v_k)}$. Then the
      running time of contraction is at most
      $\alpha\cdot \min\{\deg_{G_k}(u_k), \deg_{G_k}(v_k)\} \le 
       \alpha \cdot \sum_{v \in V_k(u_k)} \deg_G(v)$.
      That is, when contracting $u_kv_k$ in $G_k$, we only pay for the degrees of
      vertices that are in the smaller contracted vertex, where size is with
      respect to the cardinality of $V_k(w)$ for a vertex $w \in G_k$.
      
      So for a vertex $u \in V(G)$, we only pay $\alpha \cdot\deg_G(u)$ time
      whenever contracting the edge $u_kv_k$
      where $u \in V_k(u_k)$ such that $\abs{V_k(u_k)} \le \abs{V_k(v_k)}$.
      Since this can happen at most $O(\log n)$ times, the amount
      of times we pay $\alpha \cdot \deg_G(v)$ for a vertex $v$ is at most $O(\log n)$.
      Therefore, the total running time is
      $O(\alpha\cdot \log n\sum_{v \in V} \deg_G(v)) = O(\alpha m \log n)$.
    \end{proof}

\end{document}